\documentclass[12pt]{article}
\usepackage{booktabs}
\usepackage{latexsym, amsbsy, amssymb,multirow, epsfig, amsmath}
\usepackage{hyperref} 
\usepackage[section]{placeins}
\usepackage{mathrsfs}
\usepackage{amsmath}
\usepackage{soul}

\usepackage{algorithm,algorithmic}
\def\beqr{\begin{eqnarray}}
\def\eeqr{\end{eqnarray}}
\def\beqrs{\begin{eqnarray*}}
\def\eeqrs{\end{eqnarray*}}

\hypersetup{
	breaklinks=true,
}
\usepackage[page,title]{appendix}

\usepackage{color}
\makeatletter
\renewcommand\normalsize{%
\abovedisplayskip 7\p@ \@plus2\p@ \@minus7\p@
\belowdisplayskip \abovedisplayskip
\let\@listi\@listI}
\makeatother

\usepackage{natbib,graphicx,setspace,lscape,longtable}
\usepackage{natbib,epsfig,graphicx,epstopdf}
\usepackage{amsmath,amsthm,amssymb,color}
\RequirePackage[mathlines, displaymath]{lineno}
\bibpunct{(}{)}{;}{a}{,}{,}

\def\beqr{\begin{eqnarray}}
\def\eeqr{\end{eqnarray}}

\numberwithin{equation}{section}
\renewcommand{\baselinestretch}{1.75}

\newtheorem{theo}{Theorem}[section]

\newtheorem{proposition}{Proposition}[section]

\newtheorem{lemm}{Lemma}[section]

\newtheorem{remark}{Remark}

\newtheorem{condition}{Condition}

\usepackage{authblk}

\usepackage{latexsym}
\usepackage{epsfig}
\usepackage{bm}
\usepackage{algorithm,algorithmic}
\usepackage{threeparttable}
\usepackage{graphicx}
\usepackage[small]{caption2}
\usepackage{threeparttable}
\usepackage{dcolumn}
\usepackage{multirow}
\usepackage{booktabs,epstopdf}
\usepackage{xcolor}

\usepackage{xurl}

\begin{document}
	
\def\spacingset#1{\renewcommand{\baselinestretch}%
{#1}\small\normalsize} \spacingset{1}
		
\title{\bf  
Random Projection Tests via Cauchy Combination for  Two-Sample Mean
}

\author[1]{Yanyan Ouyang}
\author[2]{Ruoxi Peng}
\author[3]{Wangli Xu}
\author[2]{Tao Qiu*}

\affil[1]{School of Sciences, Chang'an University, Xi'an 710064, China}
\affil[2]{Center for Statistics and Data Science, Beijing Normal University, Zhuhai 519087, China}
\affil[3]{Center for Applied Statistics and School of Statistics, Renmin University of China, Beijing 100872, China}

\maketitle

\begin{abstract}
High-dimensional two-sample mean testing is challenging when the dimension exceeds the sample size. 
The random projection method proposed by \cite{Lopes2011} addresses this difficulty by mapping the data to a lower dimension space where Hotelling's \(T^2\) statistic can be applied, while retaining useful covariance information and gaining power when the variables exhibit non-negligible covariance structure.
However, single projection tests may be sensitive to the realized projection matrix, whereas existing multiple projection procedures often rely on resampling or simulation for calibration, with limited theoretical understanding. 
Moreover, projection-based Hotelling tests may be less sensitive to sparse mean differences. 
To address these limitations, we propose a Cauchy-combined random projection test (CRPT), which 
applies Hotelling's \(T^2\) test after multiple independent random projections and combines the projected \(p\)-values through the Cauchy transformation.
The proposed method retains the ability of random projection methods to incorporate covariance information while reducing reliance on any single projection. 
Under the Gaussian assumption, we establish the null tail behavior of the proposed statistic and further investigate its asymptotic power under suitable alternatives.
To improve sensitivity to sparse alternatives, we further develop a power-enhanced version of CRPT. 
Simulation studies and real data analysis are conducted to examine the performance and practical applicability of the proposed procedures.
\end{abstract}

\noindent\textbf{Keywords:} 
High-dimensional two-sample mean test; Multiple random projections; Cauchy combination test; Power enhancement

\spacingset{1.75} 

\section{Introduction}

Testing differences in mean vectors between two groups  is a fundamental problem in statistics.
In modern biomedical studies, especially genomic research, the number of features often far exceeds the sample size. In such settings, the classical  Hotelling’s $T^2$ test statistics may become invalid, motivating the development of methods tailored for high‑dimensional two‑sample mean testing. 
Specifically, let $\{\mathbf{x}_1, \ldots, \mathbf{x}_{n_1}\}$ and $\{\mathbf{y}_1, \ldots, \mathbf{y}_{n_2}\}$ be two independent samples from $p$-dimensional distributions with the mean vectors $\boldsymbol{\mu}_1 = (\mu_{11},\ldots,\mu_{1p})^\top$ and $\boldsymbol{\mu}_2 = (\mu_{21},\ldots,\mu_{2p})^\top$, respectively. The hypotheses of interest are given by
\begin{align}\label{eq:globalh0}
H_0: \boldsymbol{\mu}_1 = \boldsymbol{\mu}_2 \quad \text{versus} \quad H_1: \boldsymbol{\mu}_1 \neq \boldsymbol{\mu}_2.
\end{align}
When $p$ is fixed, the Hotelling’s $T^2$ test \citep{Hotelling1931} is widely applied with the test statistic
\begin{align}\label{eq:hotelling}
T^2 = \frac{n_1 n_2}{n_1 + n_2} (\bar{\mathbf{x}} - \bar{\mathbf{y}})^\top \mathbf{S}^{-1} (\bar{\mathbf{x}} - \bar{\mathbf{y}}),
\end{align}
where $\bar{\mathbf{x}} = n_1^{-1}\sum_{i=1}^{n_1} \mathbf{x}_i$, $\bar{\mathbf{y}} = n_2^{-1}\sum_{j=1}^{n_2} \mathbf{y}_j$, and 
$\mathbf{S} = \{\sum_{i=1}^{n_1} (\mathbf{x}_i - \bar{\mathbf{x}})(\mathbf{x}_i - \bar{\mathbf{x}})^\top + \sum_{j=1}^{n_2} (\mathbf{y}_j - \bar{\mathbf{y}})(\mathbf{y}_j - \bar{\mathbf{y}})^\top\}/(n_1+n_2-2)$ is the pooled sample covariance matrix.
When $p>n_1 + n_2 - 2$, 
the matrix $\mathbf{S}$ is singular, so the inverse $\mathbf S^{-1}$ appearing in \eqref{eq:hotelling} is not well defined and the classical Hotelling's \(T^2\) statistic cannot be directly adopted.

This difficulty has motivated a line of research that refines Hotelling's \(T^2\) statistic for high-dimensional data.
\citet{Bai1996} proposed a test statistic by replacing the sample covariance matrix $\mathbf{S}$ with the identity matrix $\mathbf{I}_p$. Subsequently, \citet{Chen2010} developed a method based on the $U$ statistic and relaxed the restrictions on the relationship between the dimension and the sample sizes.
Related methods include diagonal covariance estimates $\operatorname{diag}(\mathbf{S})$ \citep{Srivastava2008}, ridge regularization $\mathbf{S}+\lambda\mathbf{I}_p$ \citep{chen2011regularized}, refined \(L_2\)-type methods \citep{Zhang2020}, or nonparametric techniques \citep{ouyang2022}. Many of the above methods replace the sample covariance matrix with simplified structures and can be effective when the covariance matrix is weakly correlated or nearly diagonal. 
However, they may lose power when the covariance matrix has non-negligible dependence or some dominant eigenvalues, whereas projection-based tests provide an alternative way to retain covariance information in high-dimensional two-sample mean testing; related developments include \citet{qiu2021} and \citet{chen2025}.

The random projection method proposed by \citet{Lopes2011} projects the \(p\)-dimensional observations onto a lower-dimensional space, where Hotelling's \(T^2\) statistic can be applied. 
Specifically, for a random projection matrix $\mathbf{P} \in \mathbb{R}^{k \times p}$ with i.i.d. standard normal entries and $k<p$, the hypotheses are
\begin{align}
H_{0,\mathbf{P}}: \mathbf{P} \boldsymbol{\mu}_1 = \mathbf{P} \boldsymbol{\mu}_2 \quad \text{versus} \quad H_{1, \mathbf{P}}: \mathbf{P} \boldsymbol{\mu}_1 \neq \mathbf{P} \boldsymbol{\mu}_2.
\end{align}
The corresponding projected Hotelling statistic is
\begin{align}
	T_{\mathbf{P}}^2 = \frac{n_1n_2}{n_1+n_2} (\bar{\mathbf{x}} - \bar{\mathbf{y}})^\top \mathbf{P}^\top (\mathbf{P} \mathbf{S} \mathbf{P}^\top)^{-1} \mathbf{P} (\bar{\mathbf{x}} - \bar{\mathbf{y}}).
\end{align}
Although this single-projection procedure is computationally convenient and theoretically tractable, its performance may be sensitive to the particular random projection matrix $\mathbf{P}$. 
To reduce the randomness induced by a single projection, several multiple-projection methods have been proposed, such as those in \citet{thulin2014}, \citet{zhangpro2016}, and \citet{srivastava2016}. 
These methods aggregate the projected statistics obtained from multiple independent projection matrices.
However, The projected test statistics $T_{\mathbf{P}_1}^2$ and $T_{\mathbf{P}_2}^2$ computed from different projection matrices $\mathbf{P}_1$ and $\mathbf{P}_2$ are dependent, 
because they share the same sample mean difference \(\bar{\mathbf x}-\bar{\mathbf y}\) and the same pooled covariance matrix \(\mathbf S\). 
Consequently, the null distribution of the aggregated statistic is difficult to derive analytically. 
Existing multiple-projection procedures therefore often rely on permutation or other simulation-based methods to determine critical values, which may become computationally demanding when the number of projections or dimension is large.

This issue motivates the use of a combination rule whose null behavior remains tractable under certain assumptions.
The Cauchy combination test proposed by \citet{cct} provides such a tool. 
In the random projection setting, each projection produces a projected \(p\)-value \(p_b\), \(b=1,\ldots,B\). 
We combine these projected \(p\)-values by applying the Cauchy transformation and averaging the transformed values as
\begin{align}
T(p_1,\ldots,p_B)  = \frac{1}{B}\sum_{i=1}^{B} \tan\{\pi(1/2-p_i)\}.
\end{align}
A key feature of this transformation is that very small projected \(p\)-values are mapped to large positive values, and hence dominate the upper tail of the combined statistic.
Under suitable tail dependence conditions, the combined statistic has the same upper-tail behavior as a standard Cauchy random variable, even when the projected \(p\)-values are dependent. 
Therefore, to apply the Cauchy combination to multiple projection test, we need to verify such tail dependence conditions for the projected statistics. 
A necessary, though not sufficient, step in our argument is to establish an asymptotic tail-independence property for the projected statistics.
Specifically, under null hypothesis, for two independent projection matrices \(\mathbf P_1\) and \(\mathbf P_2\), we need to verify that $\mathbb P(T_{\mathbf{P}_1}^2>t,T_{\mathbf{P}_2}^2>t)	=o(\mathbb P(T_{\mathbf{P}_1}^2>t))$.
The related tail-independence principle has also been studied and generalized in subsequent work; see, for example, \citet{2023Fang}.


The above Hotelling-type and quadratic-form test statistics perform well for dense signals, but may lose power under sparse alternatives. 
To detect both dense and sparse signals, $L_\gamma$-type methods have been developed \citep{xuAdaptive2016, he2021}. Meanwhile, the power enhancement framework, proposed by \citet{2015Fan} for high-dimensional hypothesis testing, provides a useful tool for addressing this issue and has been widely adopted in recent studies; see, for example, \citet{yu2023,zhang2024,wangCross2024}. 
Its main idea is to add an auxiliary statistic that is asymptotically negligible under null hypothesis but diverges under certain sparse alternatives. 
Motivated by this idea, we further incorporate a power enhancement component into the multiple-projection statistic to improve its sensitivity to sparse mean differences. 

The contribution of this work is threefold. 
\begin{itemize}
\item [] 1.  We propose a Cauchy combined random projection test for high-dimensional two-sample mean problems. 
The proposed procedure applies Hotelling's \(T^2\) test after multiple independent random projections and combines the projected \(p\)-values through the Cauchy transformation. 
Thus, it retains the ability of projection tests to incorporate covariance information while reducing sensitivity to any single projection.

\item [] 2. We provide a tail analysis for the dependence induced by repeated random projections computed from the same data. 
In particular, we establish a pairwise asymptotic tail-independence property for the projected statistics, which promises the standard Cauchy upper-tail approximation for the combined statistic. 

\item [] 3. We study the asymptotic power of both the individual projected tests and the combined statistic. 
Since projection-based Hotelling tests are mainly driven by dense signals, we further introduce a power-enhanced version of the proposed test to improve its sensitivity to sparse alternatives. 
\end{itemize}

The remainder of this paper is organized as follows. Section \ref{sec:method} introduces the proposed combined random projection test and establishes its tail behavior. In Section \ref{sec:power}, we analyze the asymptotic power of both the individual projected statistics and the combined test. 
Section \ref{sec:pe} presents a power-enhanced version of the test and its theoretical properties for sparse alternatives. 
Simulation results comparing our methods with existing approaches under various dependence structures and distributions are reported in Section \ref{sec:simulation}. 
Section \ref{sec:realdata} applies the proposed tests to real genomic datasets. 
Finally, Section \ref{sec:conclusion} concludes with a discussion of the findings and outlines potential future research. All the technical details are provided in the Appendix.

\section{The Combined Random Projection Test}
\label{sec:method}

In this section, we develop the Cauchy combined random projection test for the high-dimensional two-sample mean problem. 
We first construct projected Hotelling statistics and the corresponding projected \(p\)-values, and then aggregate these \(p\)-values through the Cauchy transformation to construct the combined projection test statistic.
We then derive a standard Cauchy upper-tail approximation for the combined statistic, leading to a  analytic rejection rule for the proposed test.

\subsection{The Cauchy combined statistic}

We are interested in testing \eqref{eq:globalh0}
in the high-dimensional setting where $p \gg n_1 + n_2$.
In this subsection, we propose an ensemble approach that combines the results of multiple random projections. 
Let $n = n_1 + n_2 - 2$. 
For $b = 1, \ldots, B$, generate independent random projection matrices $\mathbf{P}_{b}\in  \mathbb{R}^{k \times p}$ with $1 \le k < n$, the projected hypotheses are
\begin{align}\label{eq:h0pb}
	H_{0,b}: \mathbf{P}_{b}\,\boldsymbol{\mu}_1 = \mathbf{P}_{b}\,\boldsymbol{\mu}_2 \quad \text{versus} \quad H_{1,b}: \mathbf{P}_{b}\, \boldsymbol{\mu}_1 \neq \mathbf{P}_{b}\, \boldsymbol{\mu}_2,
\end{align}
and the corresponding projected Hotelling statistics are
\begin{align}\label{eq:tkb2}
	T_{b}^2 = \frac{n_1 n_2}{n_1 + n_2} [\mathbf{P}_{b}(\bar{\mathbf{x}} - \bar{\mathbf{y}})]^\top (\mathbf{P}_{b} \mathbf{S} \mathbf{P}_{b}^\top)^{-1} [\mathbf{P}_{b}(\bar{\mathbf{x}} - \bar{\mathbf{y}})].
\end{align}
Suppose that $\{\mathbf{x}_1, \ldots, \mathbf{x}_{n_1}\}$ and $\{\mathbf{y}_1, \ldots, \mathbf{y}_{n_2}\}$ follow $N(\boldsymbol{\mu}_1, \boldsymbol{\Sigma})$ and $N(\boldsymbol{\mu}_2, \boldsymbol{\Sigma})$, respectively, where $\boldsymbol{\Sigma}$ is positive definite.
When $p \ge n$, the pooled covariance matrix $\mathbf{S}$ has rank $n$ almost surely.
Since $k < n$, the matrix $\mathbf{P}_{b} \mathbf{S} \mathbf{P}_{b}^\top$ is almost surely non-singular. 
Conditional on $\mathbf{P}_{b}$, the projected data $\mathbf{P}_{b} \mathbf{x}_i$ and $\mathbf{P}_{b} \mathbf{y}_j$ follow $N(\mathbf{P}_{b} \boldsymbol{\mu}_1, \mathbf{P}_{b} \boldsymbol{\Sigma} \mathbf{P}_{b}^\top)$ and $N(\mathbf{P}_{b} \boldsymbol{\mu}_2, \mathbf{P}_{b} \boldsymbol{\Sigma} \mathbf{P}_{b}^\top)$, respectively. 
Under null hypothesis in \eqref{eq:h0pb}, $(n-k+1)T_{b}^2/(nk) $ conditioned on $\mathbf{P}_{b}$ follows $F(k,n-k+1)$, where $F(k,n-k+1)$ denotes the $F$ distribution with degrees of freedom $k$ and $n-k+1$.
The corresponding $p$-value $p_b$ for \eqref{eq:h0pb} is
\begin{align}\label{eq:pb}
p_b = 1-F_{k,n-k+1}\Big(\frac{n-k+1}{nk}T_{b}^2\Big),
\end{align}
where $F_{k,n-k+1}(\cdot)$ represents the cumulative distribution function of $F(k,n-k+1)$.  
Following \citet{cct}, we aggregate these $p$-values by applying the Cauchy combination test and the combined random projection test statistic is constructed as 
\begin{align}
T_{\text{CRPT}} = \frac{1}{B} \sum_{b=1}^B \tan\left\{ \pi \left( 0.5 - p_b \right) \right\}.
\end{align}
We reject the null hypothesis when $T_{\text{CRPT}}$ is large. The tangent transform maps small projected $p$-values to large positive numbers, making the combined statistic larger and the null hypothesis more likely to be rejected.
At the same time, averaging over $B$ independent projections reduces the variability caused by the random projection.

\subsection{Behavior of the statistic {\color{black}under the null}}

In this subsection, we establish the null tail behavior of the proposed combined random projection statistic. 
The Cauchy combination method of \citet{cct} shows that the upper tail of a Cauchy combined statistic can be approximated by the tail of a standard Cauchy random variable. 
However, such an approximation generally requires {\color{black}asymptotic tail independence among the $p$-values}, related ideas have been further investigated by \citet{2023Fang}. 
Therefore, in the multiple random projections setting, we first study the pairwise asymptotic {\color{black}tail independence} property of {\color{black}$T_i^2$ and $T_j^2$, where the statistics are} obtained from two independent random projection matrices {\color{black}$\mathbf{P}_{i}$ and $\mathbf{P}_{j}$, $1\le i\not= j\le B$}. 
Based on this tail property, we then derive the standard Cauchy {\color{black}upper tail} approximation for the combined statistic under null hypothesis. 
This approximation allows us to adopt the upper $\alpha$-quantile of the standard Cauchy distribution as the critical value for $T_{\rm CRPT}$.
We begin with the following conditions.

\begin{condition}\label{cond:normal}
Assume that $\{\mathbf{x}_1, \ldots, \mathbf{x}_{n_1}\}$ and $\{\mathbf{y}_1, \ldots, \mathbf{y}_{n_2}\}$ are two independent samples following $p$-dimensional multivariate normal distributions $N(\boldsymbol{\mu}_1, \boldsymbol{\Sigma})$ and $N(\boldsymbol{\mu}_2, \boldsymbol{\Sigma})$, respectively, where $\boldsymbol{\Sigma}$ is a $p \times p$ positive definite covariance matrix.
\end{condition}

\begin{condition}\label{cond:projection}
The random projection matrices {\color{black}$\mathbf{P}_{1}, \ldots, \mathbf{P}_{B}\in \mathbb{R}^{k\times p}$} are independent and have independent standard normal entries with $k<n$.
\end{condition}

\begin{condition}[Bounded condition]\label{cond:eigenvalues}
Assume that there exist the constants $c_1$ and $c_2$ such that $0<c_1 \le\lambda_{\min}(\mathbf \Sigma) \le \lambda_{\max}(\mathbf \Sigma)\le c_2 <\infty$, where $\lambda_{\min}(\mathbf \Sigma)$ and $\lambda_{\max}(\mathbf \Sigma)$ denote the minimum and maximum eigenvalues of $\mathbf \Sigma$, respectively. 
\end{condition}

{\color{black}Conditions~\ref{cond:normal}--\ref{cond:projection} are imposed to obtain the exact null distribution of the projected statistic $T_b^2$ and the corresponding \(p\)-value $p_b$ for $1\le b \le B$.} 
Condition \ref{cond:eigenvalues} on the eigenvalues of the covariance matrix is {\color{black}an assumption widely adopted} in the high dimensional setting; {\color{black}see, for example,} \cite{2014cai}. 
Under these conditions, we obtain the following theorem.

\begin{theo}\label{theo:t_tail}
	Suppose Conditions \ref{cond:normal}-\ref{cond:eigenvalues} hold. 
	Under null hypothesis in \eqref{eq:h0pb}, for fixed $n$,  as $t\to\infty$ satisfying $\log t = o(p)$, we have
	\begin{align}\label{theo:titj}
		\mathbb P(T_{i}^2>t,T_{j}^2>t)
		=o(\mathbb P(T_{i}^2>t)),\quad 1\le i\neq j\le B.
	\end{align}
\end{theo}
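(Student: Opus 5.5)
Throughout write $a=(n-k+1)/2$. The plan is to reduce the pair $(T_i^2,T_j^2)$ to a problem in which the only source of dependence is a low-dimensional Gram matrix of the data. Given that matrix, the two projections act independently, and the joint tail becomes a second moment of a conditional tail probability. Under the null, set $\mathbf z=\{n_1n_2/(n_1+n_2)\}^{1/2}\boldsymbol\Sigma^{-1/2}(\bar{\mathbf x}-\bar{\mathbf y})\sim N(\mathbf 0,\mathbf I_p)$. Write $n\boldsymbol\Sigma^{-1/2}\mathbf S\boldsymbol\Sigma^{-1/2}=\mathbf Y\mathbf Y^\top$, where $\mathbf Y\in\mathbb R^{p\times n}$ has i.i.d. $N(0,1)$ entries independent of $\mathbf z$. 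With $\mathbf Q_b=\mathbf P_b\boldsymbol\Sigma^{1/2}$ we have
\begin{align*}
T_b^2=n\,(\mathbf Q_b\mathbf z)^\top(\mathbf Q_b\mathbf Y\mathbf Y^\top\mathbf Q_b^\top)^{-1}\mathbf Q_b\mathbf z .
\end{align*}
Conditional on $\mathbf D=[\mathbf z,\mathbf Y]\in\mathbb R^{p\times(n+1)}$, the matrix $\mathbf Q_b\mathbf D$ has i.i.d. rows $N(\mathbf 0,\mathbf M)$ with $\mathbf M=\mathbf D^\top\boldsymbol\Sigma\mathbf D$. Hence $T_b^2=h(\mathbf G_b\mathbf M^{1/2})$ for i.i.d. standard Gaussian $k\times(n+1)$ matrices $\mathbf G_i,\mathbf G_j$ that are independent of $\mathbf M$. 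Define $g(\mathbf M,t)=\mathbb P(T_i^2>t\mid\mathbf M)$. Then
\begin{align*}
\mathbb P(T_i^2>t,T_j^2>t)=\mathbb E\{g(\mathbf M,t)^2\},\qquad \mathbb P(T_i^2>t)=\mathbb E\{g(\mathbf M,t)\}.
\end{align*}
Moreover, $\mathbb P(T_i^2>t)\asymp t^{-a}$ as $t\to\infty$ with $n$ fixed, because $(n-k+1)T_i^2/(nk)\sim F(k,n-k+1)$ exactly.

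For a good set $\mathcal A_\kappa=\{\lambda_{\max}(\mathbf M)/\lambda_{\min}(\mathbf M)\le\kappa\}$ I will use the split
\begin{align*}
\mathbb E\{g^2\}\le \sup_{\mathbf M\in\mathcal A_\kappa}g(\mathbf M,t)\cdot\mathbb E\{g\}+\mathbb P(\mathcal A_\kappa^c).
\end{align*}
This leaves two tasks. Task (i) is to show $\sup_{\mathcal A_\kappa}g(\mathbf M,t)\le C(\kappa,n,k)\,t^{-a}\to0$. Task (ii) is to show $\mathbb P(\mathcal A_\kappa^c)\le e^{-cp}$ for some fixed $\kappa$. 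Given (ii), the assumption $\log t=o(p)$ yields $e^{-cp}=o(t^{-a})=o(\mathbb P(T_i^2>t))$, and combining with (i) proves \eqref{theo:titj}.

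For (ii), note that $h$ is invariant under scaling of $\mathbf M$, so only the normalized matrix $\mathbf M/\operatorname{tr}(\boldsymbol\Sigma)$ matters. Its entries are quadratic and bilinear forms $\mathbf d_r^\top\boldsymbol\Sigma\mathbf d_s/\operatorname{tr}(\boldsymbol\Sigma)$ in independent standard Gaussian vectors. Condition~\ref{cond:eigenvalues} gives $\|\boldsymbol\Sigma\|_F^2\asymp p$ and $\|\boldsymbol\Sigma\|\le c_2$. Hanson--Wright then shows that each entry deviates from $\delta_{rs}$ by more than $\epsilon$ with probability at most $2e^{-c\min(\epsilon^2,\epsilon)p}$. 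Since $n+1$ is fixed, a union bound over entries shows that the normalized $\mathbf M$ lies within $1/2$ of $\mathbf I_{n+1}$ in operator norm except on an event of probability $e^{-cp}$. On that event $\kappa\le3$.

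Task (i) is the step I expect to be the main obstacle: a uniform conditional tail bound. I would factor $\mathbf M^{1/2}$ through a block (Cholesky-type) decomposition that separates the $\mathbf Y$-block $\mathbf A=\mathbf M_{YY}$ from the residual of $\mathbf z$. This gives $\mathbf Q\mathbf Y=\mathbf G_1\mathbf A^{1/2}$ and $\mathbf Q\mathbf z=\mathbf G_1\boldsymbol\alpha+c\,\mathbf g_0$, where $\mathbf g_0$ is an independent $N(\mathbf 0,\mathbf I_k)$ vector, $c^2$ is the Schur complement, and $\boldsymbol\alpha=\mathbf A^{-1/2}\mathbf M_{Yz}$. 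Using $(\mathbf u+\mathbf v)^\top\mathbf B(\mathbf u+\mathbf v)\le2\mathbf u^\top\mathbf B\mathbf u+2\mathbf v^\top\mathbf B\mathbf v$ for the positive definite $\mathbf B=(\mathbf G_1\mathbf A\mathbf G_1^\top)^{-1}$, I get
\begin{align*}
T^2\le 2n\,\boldsymbol\alpha^\top\mathbf A^{1/2}\mathbf G_1^\top(\mathbf G_1\mathbf A\mathbf G_1^\top)^{-1}\mathbf G_1\mathbf A^{1/2}\boldsymbol\alpha+2nc^2\lambda_{\min}(\mathbf A)^{-1}\,\mathbf g_0^\top(\mathbf G_1\mathbf G_1^\top)^{-1}\mathbf g_0 .
\end{align*}
The first term is at most $2n\|\boldsymbol\alpha\|^2$, since the middle matrix is a projection; this quantity is bounded on $\mathcal A_\kappa$. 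In the second term, $c^2/\lambda_{\min}(\mathbf A)\le\kappa$, and $n\,\mathbf g_0^\top(\mathbf G_1\mathbf G_1^\top)^{-1}\mathbf g_0$ is exactly Hotelling-distributed: $(n-k+1)/(nk)$ times it is $F(k,n-k+1)$, whose tail is of order $t^{-a}$. Hence $g(\mathbf M,t)\le C(\kappa)t^{-a}$ uniformly on $\mathcal A_\kappa$, which tends to $0$ and completes the argument. I would double-check that the projection bound on the first term holds with the correct $\mathbf A^{1/2}$ placement.
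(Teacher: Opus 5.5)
Your proof is correct once the slip you flagged is repaired (see the second paragraph). Its key step differs from the paper's, although the skeleton is the same. In both proofs, conditional independence given the data turns the joint tail into $\mathbb E\{g^2\}$. A good event has probability $1-O(e^{-c'p})$, and $e^{-c'p}=o(t^{-(n-k+1)/2})$ because $\log t=o(p)$. The marginal tail comes from the exact $F(k,n-k+1)$ law.

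The paper's conditional tail bound (Lemma \ref{lemm:cond_tail_hd}) is cruder than yours:
\begin{itemize}
\item It uses $T_b^2\le n_1n_2\|\mathbf P_b\mathbf d\|^2/\{(n_1+n_2)\lambda_{\min}^+(\mathbf S)\lambda_{\min}(\mathbf G\mathbf G^\top)\}$ with $\mathbf G=\mathbf P_b\mathbf U_n$.
\item A union bound then splits this into a $\chi^2_k$ tail at level $L=t^\beta$ and Edelman's small-ball estimate for $\lambda_{\min}$ of $W_k(\mathbf I_k,n)$.
\item The good event is defined through the spectrum of $\mathbf S$ and through $\|\mathbf d\|^2$ (Lemma \ref{lemm:hd_good}).
\end{itemize}
This gives only $\sup_{\mathcal G_{p,n}}\pi_t(\mathcal D)=O(t^{-(1-\beta)(n-k+1)/2})$. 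The paper therefore bounds $\mathbb E\{\pi_t^2\}$ by $\sup\pi_t^2$ and needs $\beta<1/2$ to recover $o(t^{-(n-k+1)/2})$.

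Your reduction to the $(n+1)\times(n+1)$ Gram matrix $\mathbf M$, combined with Schur-complement conditioning, yields an exact Hotelling $T^2$ plus a bounded term. This gives the sharp uniform bound $\sup_{\mathcal A_\kappa}g\le C(\kappa)t^{-(n-k+1)/2}$, so the conditional tail has the same order as the marginal one. Your split $\sup_{\mathcal A_\kappa}g\cdot\mathbb E\{g\}$ needs only $\sup_{\mathcal A_\kappa}g\to0$, which also makes unequal thresholds, as in Lemma \ref{lemm:t_quasi_hd}(i), immediate.

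The trade-off: the paper's route uses only off-the-shelf norm and eigenvalue bounds and never needs the joint geometry of $\mathbf d$ and $\mathbf S$; yours is sharper and more structural. Also, Hanson--Wright is unnecessary for your Task (ii). The inequalities $c_1s_{\min}^2(\mathbf D)\le\lambda_{\min}(\mathbf M)\le\lambda_{\max}(\mathbf M)\le c_2s_{\max}^2(\mathbf D)$, with the Gaussian singular-value bounds already used in Lemma \ref{lemm:hd_good}, give $\mathbb P(\mathcal A_\kappa^c)\le 2e^{-p/32}$ with $\kappa=100c_2/c_1$ once $p\ge 4(n+1)$.

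The slip: in your notation $\mathbf Q\mathbf z=\mathbf G_1\boldsymbol\alpha+c\,\mathbf g_0$, so the first term is $2n\,\boldsymbol\alpha^\top\mathbf G_1^\top(\mathbf G_1\mathbf A\mathbf G_1^\top)^{-1}\mathbf G_1\boldsymbol\alpha$, without the factors $\mathbf A^{1/2}$. In any case $2n\|\boldsymbol\alpha\|^2$ cannot be bounded on $\mathcal A_\kappa$, because it scales linearly in $\mathbf M$ while $\mathcal A_\kappa$ is scale invariant. Setting $\mathbf H=\mathbf G_1\mathbf A^{1/2}$, the correct term equals $2n(\mathbf A^{-1/2}\boldsymbol\alpha)^\top\mathbf H^\top(\mathbf H\mathbf H^\top)^{-1}\mathbf H\,\mathbf A^{-1/2}\boldsymbol\alpha$. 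This is at most $2n\,\boldsymbol\alpha^\top\mathbf A^{-1}\boldsymbol\alpha\le 2n\,\mathbf M_{zz}/\lambda_{\min}(\mathbf A)\le 2n\kappa$. Here $\mathbf M_{zz}=\mathbf z^\top\boldsymbol\Sigma\mathbf z$, $\|\boldsymbol\alpha\|^2=\mathbf M_{Yz}^\top\mathbf A^{-1}\mathbf M_{Yz}\le\mathbf M_{zz}$ by nonnegativity of the Schur complement, and $\lambda_{\min}(\mathbf A)\ge\lambda_{\min}(\mathbf M)$ by interlacing. Hence $T^2\le 2n\kappa+2\kappa H$ with $H$ exactly Hotelling distributed, and Task (i) goes through as you intended.
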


Theorem~\ref{theo:t_tail} establishes the asymptotic tail independence between projected Hotelling statistics $T_{i}^2$ and $T_{j}^2$ for $1 \le i \ne j \le B$.
{\color{black}The result ensures that the probability of joint tail events is asymptotically negligible relative to the marginal tail probability of a single projected statistic.}
Based on Theorem~\ref{theo:t_tail}, we derive the following theorem regarding the asymptotic tail behavior of the combined statistic $T_{\rm CRPT}$.

\begin{theo}\label{theo:null1}
	Suppose Conditions \ref{cond:normal}-\ref{cond:eigenvalues} hold. 
	Under null hypothesis, for fixed $B$ and $n$, as $u\to\infty$ and $p/\log u\to\infty$, we have
	\begin{align}
		\frac{\mathbb{P}(T_{\rm CRPT} > u)}{\mathbb{P}(W_C > u)} \to 1, 
	\end{align}
	where $W_C$ represents a standard Cauchy random variable. 
\end{theo}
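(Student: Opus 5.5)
The plan follows the argument of \citet{cct} for a weighted sum of dependent standard Cauchy variables, with Theorem~\ref{theo:t_tail} supplying the needed tail independence. Write $W_b=\tan\{\pi(1/2-p_b)\}$. Conditional on $\mathbf P_b$, under the null $(n-k+1)T_b^2/(nk)\sim F(k,n-k+1)$ exactly (Conditions~\ref{cond:normal}--\ref{cond:projection}). Hence $p_b\sim U(0,1)$ and each $W_b$ is exactly standard Cauchy, also unconditionally. The event $\{W_b>w\}$ equals $\{T_b^2>t(w)\}$, where $t(w)=\frac{nk}{n-k+1}F^{-1}_{k,n-k+1}\{1-\mathbb P(W_C>w)\}$ is increasing with $t(w)\to\infty$. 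Since $\mathbb P(W_C>w)\sim 1/(\pi w)$ and the $F(k,n-k+1)$ tail is polynomial for fixed $n$, we get $t(w)\asymp w^{c}$ for a fixed exponent $c>0$. Therefore $\log t(w)=O(\log w)$, and the requirement $\log t=o(p)$ in Theorem~\ref{theo:t_tail} follows from $p/\log u\to\infty$ whenever $w$ is of order $u$. Theorem~\ref{theo:t_tail} then gives, for $i\ne j$ and $w\asymp u$, $\mathbb P(W_i>w,W_j>w)=o(\mathbb P(W_C>w))$. The same holds for unequal thresholds $w_1,w_2$ of the same order, by monotonicity applied to $\min(w_1,w_2)$.

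For the upper bound, fix a small $\delta\in(0,1)$ and decompose according to whether some $W_b$ exceeds $(1-\delta)u$. If $T_{\rm CRPT}>u$ and every $W_b\le(1-\delta)u$, then at least two summands must be large, or some summand must be large while the others are very negative. Following \citet{cct}, I would show
\[
\mathbb P(T_{\rm CRPT}>u)\le \sum_{b}\mathbb P\{W_b>(1-\delta)Bu\}+\sum_{i\ne j}\mathbb P(W_i>\delta u/B,\,W_j>\delta u/B)+R(u).
\]
The first sum is $\sim B\cdot\{\pi(1-\delta)Bu\}^{-1}\sim(1-\delta)^{-1}\mathbb P(W_C>u)$. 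The pairwise terms are $o(u^{-1})$ by the tail independence above. The remainder $R(u)$ covers configurations in which one $W_b$ lies in $(Bu(1-\delta),\,Bu(1+\delta))$ while the others are of order $-\delta u$, or in which all are moderate. Its contribution is at most $O(\delta)u^{-1}$ plus probabilities of pairs being simultaneously large in absolute value. The lower tail $\{W_b<-w\}$ corresponds to $p_b$ near $1$, that is $T_b^2$ near $0$.

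The lower bound starts from $\mathbb P(T_{\rm CRPT}>u)\ge\mathbb P\big(\bigcup_b\{W_b>(1+\delta)Bu,\ \sum_{j\ne b}W_j>-\delta Bu\}\big)$. I would apply Bonferroni inclusion--exclusion, again using pairwise tail independence to remove the intersections. I would then let $\delta\to0$ after $u\to\infty$.

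The main obstacle is the mixed-sign joint events, in which one $W_i$ is very large while some $W_j$ is very negative and cancels it. Theorem~\ref{theo:t_tail} only controls joint upper tails of the $T^2$ statistics, not $\mathbb P(T_i^2>t,\,T_j^2<\epsilon)$. My first approach is to avoid needing this. I would note that $W_j<-\delta u$ has probability $O(1/u)$, and bound $\mathbb P(W_i>w,\,W_j<-\delta u)$ by conditioning on the shared data $(\bar{\mathbf x}-\bar{\mathbf y},\mathbf S)$. Given the data, $T_i^2$ and $T_j^2$ are independent because $\mathbf P_i$ and $\mathbf P_j$ are independent. So the joint probability is $\mathbb E[g_w(D)h_u(D)]$, where $D=(\bar{\mathbf x}-\bar{\mathbf y},\mathbf S)$ and $g_w$, $h_u$ are the conditional tail probabilities of the two events. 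The proof of Theorem~\ref{theo:t_tail} presumably shows that this conditional structure concentrates when $p$ is large. Reusing that concentration should give $o(1/u)$ for mixed pairs as well. If it does not, the fallback is the framework of \citet{2023Fang}, which needs only bivariate upper-tail independence plus exact Cauchy marginals.
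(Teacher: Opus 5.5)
Your overall scheme is sound, and in one respect it is cleaner than the paper's. You keep $\delta$ fixed and let $\delta\to0$ only at the end, so every pairwise event involves thresholds of the same order $u$, and Theorem~\ref{theo:t_tail} plus monotonicity suffices. The paper instead takes $\delta_u\asymp u^{-1/2}$, which forces it to control joint tails at the mismatched levels $u$ and $u^{a}$ (Lemma~\ref{lemm:u_quasi_hd}). Your upper bound is correct and needs no remainder $R(u)$. If every $W_b\le(1-\delta)Bu$ and $\sum_b W_b>Bu$, then the largest summand exceeds $u$ and some other summand exceeds $\delta Bu/(B-1)$, so only upper--upper pairs occur.

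The gap is the mixed-sign term in the lower bound,
\[
\mathbb P\{W_b>(1+\delta)Bu,\ W_j\le-\delta Bu/(B-1)\}=o(1/u),
\]
which you leave at ``should give''. This term is indispensable, and your fallback cannot supply it. Take $B=2$ and $W_2=-W_1$: the marginals are exactly standard Cauchy and $\mathbb P(W_1>w,W_2>w)=0$, yet $T_{\rm CRPT}\equiv0$. So no framework can deliver the Cauchy tail from upper-tail independence plus exact Cauchy marginals alone. Vaguely ``reusing'' the upper-tail concentration from Theorem~\ref{theo:t_tail} also does not bound the conditional lower-tail probability.

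The paper closes this step with a separate conditional small-ball estimate. On $\mathcal G_{p,n}$, the bound $T_j^2\ge c\|\mathbf P_j\mathbf d\|^2/\{\lambda_{\max}(\mathbf S)\lambda_{\max}(\mathbf G\mathbf G^\top)\}$ gives $\mathbb P(T_j^2<s\mid\mathcal D)=O(s^{\beta k/2})$ uniformly (Lemma~\ref{lemm:cond_tail_hd}(ii)). Multiplying this by the conditional upper-tail bound yields Lemma~\ref{lemm:t_quasi_hd}(ii).

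In your fixed-$\delta$ setup you can avoid that lemma with the conditioning idea you already wrote down. Let $g(\mathcal D)=\mathbb P(W_b>x\mid\mathcal D)$ and $h(\mathcal D)=\mathbb P(W_j\le -y\mid\mathcal D)$. Given $\mathcal D$, $W_b$ and $W_j$ are conditionally i.i.d. Since $0\le h\le1$, Cauchy--Schwarz gives
\[
\mathbb P(W_b>x,\,W_j\le -y)=\mathbb E\{g(\mathcal D)h(\mathcal D)\}\le\big[\mathbb E\{g(\mathcal D)^2\}\big]^{1/2}\big[\mathbb E\{h(\mathcal D)\}\big]^{1/2}=\big\{\mathbb P(W_b>x,\,W_j>x)\,\mathbb P(W_j\le -y)\big\}^{1/2}.
\]
Take $x=(1+\delta)Bu$ and $y=\delta Bu/(B-1)$. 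Theorem~\ref{theo:t_tail} gives $\mathbb P(W_b>x,\,W_j>x)=o(1/u)$, and the exact Cauchy marginal gives $\mathbb P(W_j\le -y)=O(1/u)$ for fixed $\delta$. The product is therefore $\{o(1/u)\cdot O(1/u)\}^{1/2}=o(1/u)$. This works only because $\delta$ is fixed: with the paper's $\delta_u\asymp u^{-1/2}$ the same inequality gives only $o(u^{-3/4})$, which is why the paper needs the extra lemma. Adding this step makes your proposal a complete proof.
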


Theorem \ref{theo:null1} shows that the upper tail of $T_{\rm CRPT}$ {\color{black} can be approximated by} the tail of a 
standard Cauchy variable. 
Therefore, the corresponding $p$-value approximation for $T_{\rm CRPT}$ is
$p_{\rm CRPT} =0.5 - \arctan(T_{\rm CRPT})/\pi$,
and then we reject the null hypothesis when $p_{\rm CRPT}\le\alpha$ or $T_{\rm CRPT} \ge \tan\{\pi(0.5 - \alpha)\}$ with the significance level $\alpha$.

\section{Asymptotic power analysis}\label{sec:power}

In this section, we analyze the asymptotic power of the proposed tests under alternatives. Firstly, we derive the asymptotic power of a single projected test based on $T_{b}^2$ under the alternative hypothesis. 
We then extend the analysis to the combined statistic $T_{\rm CRPT}$, demonstrating that the combined test preserves the asymptotic detection ability of projected Hotelling tests while combining evidence across multiple random projections.

\subsection{Asymptotic statistical power of $T_{b}^2$}

Let $\boldsymbol\delta := \boldsymbol \mu_1 - \boldsymbol \mu_2$ denote the shift vector. 
For projection $\mathbf{P}_{b}$, the twice the Kullback-Leibler divergence between the projected sampling distributions $N(\mathbf{P}_{b} \boldsymbol{\mu}_1, \mathbf{P}_{b} \boldsymbol{\Sigma} \mathbf{P}_{b}^\top)$ and $N(\mathbf{P}_{b} \boldsymbol{\mu}_2, \mathbf{P}_{b} \boldsymbol{\Sigma} \mathbf{P}_{b}^\top)$ {\color{black}is}
\begin{align}\label{eq:delta_kb}
\Delta_{b}^2:= \boldsymbol\delta^\top \mathbf{P}_{b}^\top (\mathbf{P}_{b} \mathbf{\Sigma} \mathbf{P}_{b}^\top)^{-1} \mathbf{P}_{b}\boldsymbol\delta.
\end{align}

{\color{black}To derive the asymptotic power approximation for the projected test,} we consider the following condition. 

\begin{condition}\label{cond:c1}
	Assume that the sizes of two samples satisfies $n_1/n\to {\color{black}\pi_s} \in(0,1)$ and the dimension of the projection matrix satisfies $k/n\to {\color{black}\pi_d}\in(0,1)$ as $n\to\infty$.
\end{condition}

Condition \ref{cond:c1} indicates that the sizes of two samples and the projection dimension are in balance with the total sample size. {\color{black}We next derive the asymptotic conditional power for a single projected test, which is summarized in the following lemma.}


\begin{lemm}\label{prop:power1}
	Suppose Conditions \ref{cond:normal}-\ref{cond:c1} hold and
    ${\color{black}\pi_s(1-\pi_s)\Delta_{b}^2/\pi_d} \stackrel{p}{\to}\gamma\in[0,\infty)$.
	Then, 
	the {\color{black}conditional rejection probability} 
    satisfies that, as $n\to\infty$,  
	\begin{align}\label{eq:beta1}
        {\color{black}\mathbb P(p_b\le\alpha\mid \mathbf P_b)}
		- \Phi\bigg(\frac{z_\alpha + {\color{black}\pi_s(1-\pi_s)}\sqrt{1/(2\pi_d)-1/2}\sqrt{n}\Delta_{b}^2}{{\color{black}\sqrt{1+2\gamma + \pi_d \gamma^2}}}\bigg)
		\stackrel{p}{\to} 0,
	\end{align}
	Here, $\Phi(\cdot)$ denotes the cumulative distribution function of the standard normal distribution and $\Phi(z_\alpha) = \alpha$.
\end{lemm}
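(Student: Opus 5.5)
Conditional on $\mathbf P_b$, the projected samples are Gaussian with common covariance $\boldsymbol\Sigma_b:=\mathbf P_b\boldsymbol\Sigma\mathbf P_b^\top$. So $T_b^2$ is exactly a $k$-dimensional two-sample Hotelling statistic, and
$$\frac{n-k+1}{nk}\,T_b^2\ \Big|\ \mathbf P_b \;\sim\; F'\big(k,\,n-k+1;\,\lambda_b\big),\qquad \lambda_b=\frac{n_1n_2}{n_1+n_2}\,\Delta_b^2,$$
a noncentral $F$ distribution. The plan is to treat $\mathbf P_b$ as fixed and derive a normal approximation for this noncentral $F$ as $n\to\infty$ with $k/n\to\pi_d$. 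The error term, which is a deterministic function of $\Delta_b^2$, is then controlled on the event where $\pi_s(1-\pi_s)\Delta_b^2/\pi_d$ is close to $\gamma$; this turns the conditional statement into convergence in probability.

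Write the statistic as $F=(\chi^2_k(\lambda_b)/k)\big/(\chi^2_{n-k+1}/(n-k+1))$, with independent numerator and denominator.

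\textbf{Step 1 (null critical value).} With $m=n-k+1$, expand the null quantile. By the CLT for both chi-squares and the delta method,
$$\sqrt{k}\,(F-1)\ \Rightarrow\ N\Big(0,\;2+2\,\frac{k}{m}\Big),$$
and $2+2\pi_d/(1-\pi_d)=2/(1-\pi_d)$. The upper-$\alpha$ critical value is therefore
$$c_\alpha=1-z_\alpha\sqrt{\frac{2}{k(1-\pi_d)}}\,\{1+o(1)\},$$
noting that $z_\alpha<0$ under the convention $\Phi(z_\alpha)=\alpha$.

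\textbf{Step 2 (alternative).} We have $E\,\chi^2_k(\lambda)=k+\lambda$ and $\operatorname{Var}\chi^2_k(\lambda)=2k+4\lambda$. Since $\lambda_b\approx n\,\pi_s(1-\pi_s)\Delta_b^2=k\gamma$, the ratio $\lambda_b/k\to\gamma$ in probability. The numerator $\chi^2_k(\lambda_b)/k$ has mean $1+\lambda_b/k$ and variance $(2+4\lambda_b/k)/k$. The denominator is $1+O_p(m^{-1/2})$ with variance $2/m$. By the delta method the ratio has variance
$$\frac{1}{k}\Big[\,2+4\gamma+(1+\gamma)^2\,\frac{2\pi_d}{1-\pi_d}\,\Big].$$
Then
$$P(F>c_\alpha)=\Phi\!\left(\frac{\lambda_b/k+z_\alpha\sqrt{2/(k(1-\pi_d))}}{\sqrt{\text{var}}}\right).$$
Normalizing by $\sqrt{2/(k(1-\pi_d))}$ gives a denominator factor
$$\sqrt{\frac{(1-\pi_d)(2+4\gamma)+2\pi_d(1+\gamma)^2}{2}}=\sqrt{1+2\gamma+\pi_d\gamma^2}.$$
The numerator term becomes $\sqrt{k(1-\pi_d)/2}\,\lambda_b/k$. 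Substituting $\lambda_b=n\pi_s(1-\pi_s)\Delta_b^2\{1+o(1)\}$ and $k=\pi_d n$ gives
$$\pi_s(1-\pi_s)\sqrt{n}\,\Delta_b^2\sqrt{\frac{1-\pi_d}{2\pi_d}}=\pi_s(1-\pi_s)\sqrt{\frac{1}{2\pi_d}-\frac12}\;\sqrt{n}\,\Delta_b^2,$$
which matches the claimed expression. Condition~\ref{cond:eigenvalues} is needed only to ensure that $\boldsymbol\Sigma_b$ is a.s.\ nonsingular.

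\textbf{Step 3 (main obstacle).} The normal approximations must hold uniformly in the noncentrality. This needs the Berry--Esseen bound for $\chi^2_k(\lambda)$ (a sum of $k$ independent noncentral squares), which is uniform over $\lambda/k$ in compacts. The $o(1)$ errors in the centering must be negligible after multiplication by $\sqrt{k}$. When $\gamma>0$, $\sqrt n\,\Delta_b^2\to\infty$ and both $\Phi$-terms tend to $1$. When $\gamma=0$, the replacement of $\lambda_b/k$ by $\gamma$ inside the variance is harmless because $\Phi$ is Lipschitz. In either case the errors enter the argument of $\Phi$ only at order $o(1)$, or in regimes where both sides are near $1$.

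Finally, since the conditional probability is a continuous function of $\Delta_b^2$ and $\Delta_b^2$ converges in probability, the difference converges to $0$ in probability by a subsequence argument.
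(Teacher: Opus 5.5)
Your proposal is correct and follows essentially the same route as the paper. Conditional on $\mathbf P_b$ you use the exact noncentral $F(k,n-k+1,\gamma_b)$ law with $\gamma_b=\frac{n_1n_2}{n_1+n_2}\Delta_b^2$, then a CLT/delta-method normal approximation for the ratio of independent chi-squares (the paper's Lemma~\ref{lemm:f_noncentral}, normalized by you with $\sqrt k$ instead of $\sqrt n$), the expansion of the null critical value $F^{-1}_{k,n-k+1}(1-\alpha)$, and the same algebra producing $\sqrt{1+2\gamma+\pi_d\gamma^2}$.

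Your Step~3 makes rigorous the passage from a distributional limit to the conditional statement in probability, which the paper handles only through a Slutsky argument. It does this with a Berry--Esseen bound uniform in $\lambda/k$ over compacts, valid via the equal-split representation of $\chi^2_k(\lambda)$, together with the observation that for $\gamma>0$ both sides tend to one.
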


{\color{black}Lemma} \ref{prop:power1} establishes the asymptotic power conditional on the random projection $\mathbf P_{b}$. 
The approximation in \eqref{eq:beta1} indicates that the conditional power is governed by the projected signal strength \(\sqrt n\,\Delta_b^2\).
The conditional asymptotic power {\color{black}converges to one in probability} when $\sqrt n\,\Delta_{b}^2\xrightarrow{p}\infty$.
In contrast, if $\sqrt n\,\Delta_{b}^2 \xrightarrow{p}0$, 
the conditional rejection probability converges to the significance level $\alpha$ in probability.
To establish the behavior of $\Delta_{b}^2$, we derive the following lemma.

\begin{lemm}\label{lemm:delta_projection_bound}
	Suppose Condition \ref{cond:projection} holds and $\mathbf{\Sigma}$ is positive definite. Assume that
	$\boldsymbol\delta\in\mathbb R^p$ is a vector satisfying $\boldsymbol\delta\neq \mathbf 0$ and $k < p$. 
	Let $b_1$ and $b_2$ be constants such that $0 < b_1 < 1 < b_2$ and $b_2k/p < 1$. Then, as $k\to\infty$ and $p\to\infty$, we have
	\begin{align}
	\mathbb P\Bigg(
	\frac{b_1 k}{p \lambda_{\max}(\boldsymbol\Sigma)} \|\boldsymbol\delta\|_2^2 \le \Delta_{b}^2 \le \frac{b_2 k}{p \lambda_{\min}(\boldsymbol\Sigma)} \|\boldsymbol\delta\|_2^2
	\Bigg)\to 1.
	\end{align}
\end{lemm}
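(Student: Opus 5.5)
Write $\mathbf P=\mathbf P_b$. The plan is to sandwich $\Delta_b^2$ between two simpler quadratic forms using the eigenvalue bounds of $\boldsymbol\Sigma$, and then control those forms via random-matrix concentration.

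\textbf{Deterministic sandwich.} Since $\lambda_{\min}(\boldsymbol\Sigma)\mathbf I_p\preceq\boldsymbol\Sigma\preceq\lambda_{\max}(\boldsymbol\Sigma)\mathbf I_p$, we have
\[
\lambda_{\min}(\boldsymbol\Sigma)\,\mathbf P\mathbf P^\top\preceq \mathbf P\boldsymbol\Sigma\mathbf P^\top\preceq\lambda_{\max}(\boldsymbol\Sigma)\,\mathbf P\mathbf P^\top .
\]
Because $\mathbf P\mathbf P^\top$ is almost surely invertible when $k<p$, the Loewner order reverses under inversion, which gives
\[
\frac{Q}{\lambda_{\max}(\boldsymbol\Sigma)}\le\Delta_b^2\le\frac{Q}{\lambda_{\min}(\boldsymbol\Sigma)},
\qquad
Q:=\boldsymbol\delta^\top\mathbf P^\top(\mathbf P\mathbf P^\top)^{-1}\mathbf P\boldsymbol\delta .
\]
It therefore suffices to show $\mathbb P\bigl(b_1k/p\le Q/\|\boldsymbol\delta\|_2^2\le b_2k/p\bigr)\to1$.

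\textbf{Distribution of $Q$.} The matrix $\mathbf P^\top(\mathbf P\mathbf P^\top)^{-1}\mathbf P$ is the orthogonal projection onto the row space of $\mathbf P$. Since $\mathbf P$ has i.i.d.\ $N(0,1)$ entries, its row space is uniformly distributed on the Grassmannian of $k$-dimensional subspaces of $\mathbb R^p$, by the rotation invariance $\mathbf P\mathbf O\stackrel{d}{=}\mathbf P$. Hence $Q/\|\boldsymbol\delta\|_2^2$ has the same law as $\|\Pi_k \mathbf u\|_2^2$, where $\mathbf u$ is uniform on the sphere and $\Pi_k$ is a fixed rank-$k$ coordinate projection. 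Writing $\mathbf u=\mathbf g/\|\mathbf g\|_2$ with $\mathbf g\sim N(0,\mathbf I_p)$, we obtain
\[
\frac{Q}{\|\boldsymbol\delta\|_2^2}\stackrel{d}{=}\frac{\chi^2_k}{\chi^2_k+\chi^2_{p-k}}\sim\mathrm{Beta}\Bigl(\frac{k}{2},\frac{p-k}{2}\Bigr),
\]
with the two chi-square variables independent.

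\textbf{Concentration.} Since $\chi^2_k/k\to1$ in probability as $k\to\infty$ and $\chi^2_p/p\to1$ as $p\to\infty$, the ratio satisfies $(p/k)\cdot Q/\|\boldsymbol\delta\|_2^2=(\chi^2_k/k)/(\chi^2_p/p)\to1$ in probability. Fixed $b_1<1<b_2$ then give the desired two-sided bound with probability tending to one. The condition $b_2k/p<1$ only ensures the upper bound is nontrivial, given that $Q\le\|\boldsymbol\delta\|_2^2$.

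\textbf{Main obstacle.} The hardest step is justifying rigorously that the row space of $\mathbf P$ is Haar-distributed, so that $Q$ reduces to a Beta variable. The argument is that for any orthogonal $\mathbf O$ one has $\mathbf P\mathbf O\stackrel{d}{=}\mathbf P$; choosing $\mathbf O$ mapping $\boldsymbol\delta/\|\boldsymbol\delta\|_2$ to $\mathbf e_1$ reduces $Q$ to $\|\boldsymbol\delta\|_2^2\,[\mathbf P^\top(\mathbf P\mathbf P^\top)^{-1}\mathbf P]_{11}$.

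As a safeguard, we can instead bound $Q$ directly. Write $\mathbf P\boldsymbol\delta=\|\boldsymbol\delta\|_2\,\mathbf z$ with $\mathbf z\sim N(0,\mathbf I_k)$, and bound $(\mathbf P\mathbf P^\top)^{-1}$ using the Bai–Yin type fact that the eigenvalues of $\mathbf P\mathbf P^\top/p$ lie in $[(1-\sqrt{k/p})^2-\epsilon,(1+\sqrt{k/p})^2+\epsilon]$. This suffices when $k/p\to0$, but the Beta representation is needed for exact constants when $k/p$ stays bounded away from zero.
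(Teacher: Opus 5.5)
Your proposal is correct and follows the paper's proof: you use the same Loewner sandwich to reduce $\Delta_b^2$ to $Q/\|\boldsymbol\delta\|_2^2$, and the same rotation-invariance argument to identify this ratio as $\mathrm{Beta}(k/2,(p-k)/2)$. The only difference is the final concentration step, where you apply the weak law of large numbers to $\chi^2_k/k$ and $\chi^2_p/p$ instead of the paper's Chernoff bounds $\exp\{-pD(r\|b_ir)/2\}$. Your version is shorter, and it avoids having to check $pD(r\|b_ir)\to\infty$, which the paper asserts without proof. The paper's version in exchange gives explicit exponential tail rates. The Bai--Yin safeguard is not needed.
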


Under Condition \ref{cond:eigenvalues}, 
Lemma~\ref{lemm:delta_projection_bound} implies that {\color{black}$\Delta_{b}^2$ is of the same order as $k\|\boldsymbol\delta\|_2^2/p$}
with probability tending to one.
This result indicates that, when $k\ll p$, the projected signal strength \(\Delta_{b}^2\) can be considerably smaller than \(\|\boldsymbol\delta\|_2^2\).
Therefore, if the original signal is not sufficiently strong, the projected test may fail to reject the null hypothesis.

{\color{black}Here and below, 
we write
\(A_x\asymp B_x\) as \(x\to \infty\) if there exist constants \(0<C_1\le C_2<\infty\) and $M>0$ such that
$C_1\le A_x/B_x\le C_2$ for all $x>M$.}
For the alternative $\boldsymbol\delta^\top\mathbf \Sigma^{-1}\boldsymbol\delta = o(1)$ considered in \cite{Lopes2011}, under Condition \ref{cond:eigenvalues}, it follows that  $\|\boldsymbol\delta\|_2^2 
\asymp \boldsymbol\delta^\top\mathbf \Sigma^{-1}\boldsymbol\delta = o(1)$.
Therefore, from Lemma \ref{lemm:delta_projection_bound}, we obtain
$\sqrt{n}\Delta_{b}^2 = o_p(\sqrt{n}k/p) = o_p(1)$ for $n = O(p^{2/3})$, and then 
{\color{black}the conditional power function $\mathbb P(p_b\le\alpha\mid \mathbf P_b)$ converges to $\alpha$ in probability.}
Therefore, to derive the power property of the random projected test, we consider the following {\color{black}alternative condition, under which the signal remains detectable after projection.}
%
%

\begin{condition}[Alternative]\label{cond:alternative_relaxed}
	The shift vector \(\boldsymbol\delta=\boldsymbol\mu_1-\boldsymbol\mu_2\) satisfies $p/n^{3/2}=o(\|\boldsymbol\delta\|_2^2)$.
\end{condition}

{\color{black}Under Condition~\ref{cond:eigenvalues}, Condition~\ref{cond:alternative_relaxed} allows weaker signals than the condition considered in \citet{Chen2010}. Indeed, Condition~\ref{cond:eigenvalues} implies}
$\|\boldsymbol\delta\|_2^2 
\asymp \boldsymbol\delta^\top\mathbf \Sigma\boldsymbol\delta$ and ${\rm tr}(\mathbf \Sigma)\asymp p$. 
Therefore, the condition ${\rm tr}(\boldsymbol\Sigma)/n
=o(\boldsymbol\delta^\top\boldsymbol\Sigma\boldsymbol\delta)$ used by \citet{Chen2010} is equivalent to
$p/n=o(\|\boldsymbol\delta\|_2^2)$ under Condition~\ref{cond:eigenvalues}.
In contrast, Condition~\ref{cond:alternative_relaxed} only requires
$p/n^{3/2}=o(\|\boldsymbol\delta\|_2^2)$,
which is weaker by a factor of \(n^{1/2}\). 

We next discuss the asymptotic power of a single projected Hotelling test under Condition~\ref{cond:alternative_relaxed}. 
{\color{black}Lemma}~\ref{prop:power1} shows that the conditional power is governed by the projected signal strength \(\Delta_{b}^2\), more precisely through the term $\sqrt n\Delta_{b}^2$.
By Lemma~\ref{lemm:delta_projection_bound}, for any fixed \(0<b_1<1\),
\begin{align}
\mathbb{P}\bigg(\Delta_{b}^2
\ge
\frac{b_1 k}{p\lambda_{\max}(\boldsymbol\Sigma)}
\|\boldsymbol\delta\|_2^2\bigg)\to 1.
\end{align}
Under Conditions~\ref{cond:c1} and \ref{cond:alternative_relaxed}, we have $k/n$ {\color{black}converges to $\pi_d$} and $p/n^{3/2}=o(\|\boldsymbol\delta\|_2^2)$. Then $\sqrt n\Delta_{b}^2$ {\color{black}diverges to infinity in probability,}
which indicates that the projected signal $\Delta_{b}^2$ separates from the null scale for each single projected test. 
Together with the conditional asymptotic power in {\color{black}Lemma}~\ref{prop:power1}, this suggests that, conditional on $\mathbf P_{b}$, a single projected test should reject the null hypothesis with probability tending to one. 
Since the projection matrix \(\mathbf P_b\) is randomly generated, it is also important to characterize the rejection probability induced by both the data and the random projection.
The following theorem summarizes the conditional and unconditional rejection probabilities for a single projected test.

\begin{theo}\label{theo:power1}
	Suppose Conditions~\ref{cond:normal}-\ref{cond:c1} hold. 
	Assume that the alternative satisfies Condition \ref{cond:alternative_relaxed} and $p>n$. Then, for significance level \(\alpha\), as $n\to\infty$, the conditional power function satisfies
	\begin{align}\label{eq:theo_power11}
	\mathbb P(p_b\le\alpha\mid \mathbf P_{b})
	\stackrel p\to1.
	\end{align}
	Consequently, as $n\to\infty$, we have
	\begin{align}\label{eq:theo_power12}
	\mathbb P(p_b\le\alpha)\to1.
	\end{align}
\end{theo}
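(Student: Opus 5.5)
The plan is to combine Lemma~\ref{prop:power1} with Lemma~\ref{lemm:delta_projection_bound}. A subsequence argument handles the fact that $\pi_s(1-\pi_s)\Delta_b^2/\pi_d$ need not converge to a finite $\gamma$; in fact it may diverge. The conditional rejection probability is then treated by cases, and the unconditional statement follows by bounded convergence.

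\textbf{Step 1: the projected signal diverges.} Under Condition~\ref{cond:eigenvalues}, $\lambda_{\max}(\boldsymbol\Sigma)\le c_2$. Fix $b_1\in(0,1)$. Lemma~\ref{lemm:delta_projection_bound} (applicable since $k\to\infty$ by Condition~\ref{cond:c1} and $p>n>k$) gives, with probability tending to one,
\begin{align*}
\sqrt n\,\Delta_b^2\ \ge\ \frac{b_1}{c_2}\,\frac{k}{n}\,\frac{n^{3/2}}{p}\,\|\boldsymbol\delta\|_2^2 .
\end{align*}
Here $k/n\to\pi_d>0$, and $n^{3/2}\|\boldsymbol\delta\|_2^2/p\to\infty$ by Condition~\ref{cond:alternative_relaxed}. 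Hence $\sqrt n\,\Delta_b^2\stackrel p\to\infty$.

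\textbf{Step 2: conditional power tends to one.} Convergence in probability of $\mathbb P(p_b\le\alpha\mid\mathbf P_b)$ to $1$ is equivalent to the following: every subsequence has a further subsequence along which the convergence holds almost surely. Fix a subsequence. By Step 1, we may pass to a further subsequence with $\sqrt n\Delta_b^2\to\infty$ a.s., and also with $\Delta_b^2\to\gamma\in[0,\infty]$ a.s. in the extended sense.

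\emph{Case $\gamma<\infty$.} Lemma~\ref{prop:power1} applies, applied conditionally along the subsequence, with the $\gamma$ frozen. The argument of $\Phi$ is
\begin{align*}
\frac{z_\alpha+\pi_s(1-\pi_s)\sqrt{1/(2\pi_d)-1/2}\,\sqrt n\Delta_b^2}{\sqrt{1+2\gamma+\pi_d\gamma^2}},
\end{align*}
whose denominator stays bounded and whose numerator tends to $\infty$. So the limit is $1$.

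\emph{Case $\Delta_b^2\to\infty$.} This is the main obstacle, since Lemma~\ref{prop:power1} no longer applies directly. I would first try a direct argument conditional on $\mathbf P_b$. Write $\mathbf z=\mathbf P_b(\bar{\mathbf x}-\bar{\mathbf y})$, which is $N(\mathbf P_b\boldsymbol\delta,\ (1/n_1+1/n_2)\mathbf P_b\boldsymbol\Sigma\mathbf P_b^\top)$. Also $n\,\mathbf P_b\mathbf S\mathbf P_b^\top$ is Wishart$_k(n,\mathbf P_b\boldsymbol\Sigma\mathbf P_b^\top)$, independent of $\mathbf z$. Then $(n-k+1)T_b^2/(nk)$ is a noncentral $F(k,n-k+1)$ with noncentrality $\lambda=n_1n_2\Delta_b^2/(n_1+n_2)\asymp n\Delta_b^2$.

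To lower-bound it, I would use Wishart concentration, $(n-k+1)^{-1}\mathbf v^\top(\cdot)^{-1}\mathbf v$ type bounds: for the noncentral $F$, the numerator $\chi^2_k(\lambda)/k$ concentrates at $1+\lambda/k$, while the denominator $\chi^2_{n-k+1}/(n-k+1)\to1$. The critical value $F^{-1}_{k,n-k+1}(1-\alpha)$ tends to $1$ when $k,n-k\to\infty$ proportionally. Since $\lambda/k\asymp\Delta_b^2/\pi_d\to\infty$, rejection has probability tending to $1$. In fact this direct argument also covers the finite-$\gamma$ case whenever $\Delta_b^2$ is bounded below. Monotonicity of noncentral-$F$ power in $\lambda$ could alternatively reduce this case to a truncated finite-$\gamma$ case.

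Combining the cases gives \eqref{eq:theo_power11}.

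\textbf{Step 3: unconditional power.} We have $\mathbb P(p_b\le\alpha)=\mathbb E\{\mathbb P(p_b\le\alpha\mid\mathbf P_b)\}$. The integrand is bounded by $1$ and converges in probability to $1$, so bounded convergence yields \eqref{eq:theo_power12}.
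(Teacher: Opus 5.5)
Your proof is correct in substance, and Steps 1 and 3 match the paper. Step 2 takes a different route.

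\textbf{How the paper handles Step 2.} The paper never meets the diverging-signal case. It fixes the deterministic level $\underline{\Delta}^2=\min\{b_1k\|\boldsymbol\delta\|_2^2/(pc_2),\,n^{-1/4}\}$. This level tends to $0$, while $\sqrt n\,\underline{\Delta}^2\to\infty$ and $\mathbb P(\Delta_b^2\ge\underline{\Delta}^2)\to1$. The paper applies Lemma~\ref{prop:power1} only at this point, in the $\gamma=0$ regime. It then uses monotonicity of the noncentral-$F$ power in the noncentrality parameter, which it invokes without proof, to get $\mathbb P(p_b\le\alpha)\ge\beta_\Delta(\underline{\Delta}^2)\,\mathbb P(\Delta_b^2\ge\underline{\Delta}^2)\to1$. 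This is the truncation-plus-monotonicity alternative you mention in passing.

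\textbf{Your route.} You avoid monotonicity and pay with a case split. Your concentration argument for $\lambda/k\to\infty$ is sound. A minor advantage is that you prove \eqref{eq:theo_power11} explicitly, whereas the paper writes out only the unconditional bound.

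\textbf{One step to reword.} You cannot generally extract a single subsequence along which the random $\Delta_b^2$ converges a.s.\ to a constant $\gamma$. Lemma~\ref{lemm:delta_projection_bound} only traps it in a band. The fix uses the fact that the conditional power is a deterministic function $\beta_n(d)$ of $d=\Delta_b^2$.
\begin{itemize}
\item First prove $\beta_n(d_n)\to1$ for every deterministic sequence $d_n$ with $\sqrt n\,d_n\to\infty$. Your case split on subsequential limits of $d_n$ in $[0,\infty]$ works verbatim there.
\item Then apply this pointwise in the realization along your a.s.\ subsequence. This is presumably what ``$\gamma$ frozen'' was meant to say.
\end{itemize}

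\textbf{A simpler option.} Your Chebyshev argument alone already suffices if you run it at scale $n^{-1/2}$. Compare $\chi^2_k(\lambda)/k=1+\lambda/k+O_p(\sqrt{2(k+2\lambda)}/k)$ with $F^{-1}_{k,n-k+1}(1-\alpha)\,\chi^2_{n-k+1}/(n-k+1)=1+O_p(n^{-1/2})$. Both error terms are $o(\lambda/k)$ uniformly on $\{\sqrt n\,\lambda/k\ge M_n\}$ for any $M_n\to\infty$. This gives $\inf\{\beta_n(d):\sqrt n\,d\ge M_n\}\to1$ directly. That removes the subsequences, the case split, and the appeal to Lemma~\ref{prop:power1} at once.
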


Theorem~\ref{theo:power1} establishes the power property of {\color{black}single} projected test under Condition~\ref{cond:alternative_relaxed}. 
Specifically, for any significance level $\alpha$, the conditional rejection probability given the random projection $\mathbf P_b$ converges to one in probability. 
After averaging over the randomness of the projection, the unconditional rejection probability also tends to one.


\subsection{Asymptotic statistical power of $T_{\text{CRPT}}$}

In this subsection, we analyze the power of the combined statistic $T_{\text{CRPT}}$ under the alternative hypothesis. 
{\color{black}We first provide a comparison between the combined test and a single projected test. 
Let $\mathcal D=\{\mathbf x_1,\ldots,\mathbf x_{n_1},\mathbf y_1,\ldots,\mathbf y_{n_2}\}$ denote the observed data.
Let $\phi(p):=\tan\{\pi(0.5-p)\}$ and $q_t(\mathcal D):=\mathbb P(p_b>t\mid \mathcal D)$ for $t\in(0,1)$. We consider the following condition.

\begin{condition}
\label{cond:conditional_power}
Suppose that there exist constants
\(0<\eta<\alpha<1-\zeta<1\) such that
\begin{align}\label{cond:alternativep1}
\frac{\phi(\eta)+(B-1)\phi(1-\zeta)}{B}\ge t_\alpha,
\end{align}
and
\begin{align}\label{cond:alternativep2}
q_\eta(\mathcal D)^B
+
Bq_{1-\zeta}(\mathcal D)
=
o_p\{q_\alpha(\mathcal D)\}.
\end{align}
\end{condition}
Condition~\ref{cond:conditional_power} is an additional assumption on the conditional probability \(q_t(\mathcal D)=\mathbb P(p_b>t\mid\mathcal D)\) under the alternative.
This condition is satisfied if, for example, 
there exist a sequence \(r_n\to\infty\), a constant 
\(\varepsilon>0\), and a positive continuous function \(I(t)\), strictly increasing on \((0,1)\), such that
\begin{align} 
\sup_{t\in(0,1)}
\left|
-\frac{\log q_t(\mathcal D)}{r_n}
-
I(t)
\right|
\stackrel p\to0 .
\end{align}
This formulation covers several decay rates of the conditional tail probability
\(q_t(\mathcal D)=\mathbb P(p_b>t\mid \mathcal D)\). 
For example, when \(r_n=n\), we have
$q_t(\mathcal D)=\exp\{-nI(t)+o_p(n)\}$, which corresponds to an exponential decay rate.
If \(r_n=\log n\), the same condition becomes $q_t(\mathcal D)=n^{-I(t)+o_p(1)}$, corresponding to polynomial decay.
A slower logarithmic decay rate is obtained by taking \(r_n=\log\log n\), which gives $q_t(\mathcal D)=(\log n)^{-I(t)+o_p(1)}$.

We next briefly interpret the two requirements in Condition~\ref{cond:conditional_power}. 
Equation ~\eqref{cond:alternativep1} ensures that the combined statistic reaches the rejection region whenever at least one projected \(p\)-value is no larger than \(\eta\) and no projected \(p\)-value exceeds \(1-\zeta\). Indeed, we consider the event
\[
\min_{1\le b\le B}p_b\le \eta
\quad\text{and}\quad
\max_{1\le b\le B}p_b\le 1-\zeta.
\]
Since \(\phi(p)\) is strictly decreasing on \((0,1)\), we have
\[
T_{\rm CRPT}
=
\frac1B\sum_{b=1}^B\phi(p_b)
\ge
\frac{\phi(\eta)+(B-1)\phi(1-\zeta)}{B}
\ge
t_\alpha,
\]
and then the combined test rejects the null hypothesis.
Equation~\eqref{cond:alternativep2} characterizes the conditional probability that the combined test fails to reject the null hypothesis.
The term \(q_\eta(\mathcal D)^B\) corresponds to the conditional probability that all \(B\) projections fail to produce a sufficiently small projected \(p\)-value, while \(Bq_{1-\zeta}(\mathcal D)\) accounts for the possibility that at least one projected \(p\)-value is close to one. 
The condition requires the sum of these two probabilities to be asymptotically negligible relative to \(q_\alpha(\mathcal D)\). 
This leads to the following proposition.

\begin{proposition}\label{prop:power_comparison}
Suppose Condition~\ref{cond:conditional_power} holds. For any $B=1,\ldots,B$, we have
\begin{align}
\mathbb P(T_{\rm CRPT}< t_\alpha\mid\mathcal D)=o_p\left\{
\mathbb P(p_b>\alpha\mid\mathcal D)\right\}.
\end{align}
\end{proposition}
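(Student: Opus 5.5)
The plan is to work conditionally on $\mathcal D$, using the fact that the projection matrices $\mathbf P_1,\ldots,\mathbf P_B$ are independent of the data and of each other (Condition~\ref{cond:projection}). Given $\mathcal D$, the projected $p$-values $p_1,\ldots,p_B$ are then i.i.d., and each has conditional survival function $q_t(\mathcal D)=\mathbb P(p_b>t\mid\mathcal D)$. The right-hand side of the proposition is exactly $q_\alpha(\mathcal D)$, so the goal is to show
\[
\mathbb P(T_{\rm CRPT}<t_\alpha\mid\mathcal D)\le q_\eta(\mathcal D)^B+Bq_{1-\zeta}(\mathcal D).
\]
Once this inequality is in hand, \eqref{cond:alternativep2} gives the claim at once.

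\textbf{Step 1 (deterministic inclusion).} As already noted after Condition~\ref{cond:conditional_power}, $\phi$ is strictly decreasing on $(0,1)$. On the event $E=\{\min_b p_b\le\eta\}\cap\{\max_b p_b\le 1-\zeta\}$ we therefore have
\[
T_{\rm CRPT}\ge \{\phi(\eta)+(B-1)\phi(1-\zeta)\}/B\ge t_\alpha,
\]
where the last inequality is \eqref{cond:alternativep1}. Hence $\{T_{\rm CRPT}<t_\alpha\}\subseteq E^c=\{\min_b p_b>\eta\}\cup\{\max_b p_b>1-\zeta\}$. I would also note the boundary case: equality $T_{\rm CRPT}=t_\alpha$ is harmless, since the statement concerns the strict event $T_{\rm CRPT}<t_\alpha$.

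\textbf{Step 2 (conditional probability bound).} Apply the union bound conditionally on $\mathcal D$:
\[
\mathbb P(T_{\rm CRPT}<t_\alpha\mid\mathcal D)\le \mathbb P(\min_b p_b>\eta\mid\mathcal D)+\mathbb P(\max_b p_b>1-\zeta\mid\mathcal D).
\]
By conditional independence, the first term equals $\prod_b \mathbb P(p_b>\eta\mid\mathcal D)=q_\eta(\mathcal D)^B$. A second union bound gives the second term at most $Bq_{1-\zeta}(\mathcal D)$, where identical conditional distributions are used because the $\mathbf P_b$ are i.i.d.\ Gaussian.

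\textbf{Step 3 (conclusion).} Dividing by $q_\alpha(\mathcal D)$ and invoking \eqref{cond:alternativep2} yields
\[
\mathbb P(T_{\rm CRPT}<t_\alpha\mid\mathcal D)/q_\alpha(\mathcal D)\le \{q_\eta(\mathcal D)^B+Bq_{1-\zeta}(\mathcal D)\}/q_\alpha(\mathcal D)\stackrel p\to0,
\]
which is the statement; implicitly this requires $q_\alpha(\mathcal D)>0$ with probability tending to one, as already needed for \eqref{cond:alternativep2} to be meaningful.

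The main point requiring care is Step 2's factorization. I would verify explicitly that, conditional on $\mathcal D$ (that is, on $\bar{\mathbf x}-\bar{\mathbf y}$ and $\mathbf S$), each $p_b$ is a measurable function of $\mathbf P_b$ alone. This holds by \eqref{eq:tkb2} and \eqref{eq:pb}, so the conditional i.i.d.\ structure follows from the independence of the $\mathbf P_b$ from each other and from the data. Everything else is elementary.
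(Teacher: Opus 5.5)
Your proposal is correct and follows the paper's proof essentially step by step. Like the paper, you use the monotonicity of $\phi$ with \eqref{cond:alternativep1} to get the inclusion $\{T_{\rm CRPT}<t_\alpha\}\subset\{\min_b p_b>\eta\}\cup\{\max_b p_b>1-\zeta\}$. You then factor the first term as $q_\eta(\mathcal D)^B$ by conditional independence given $\mathcal D$, bound the second by $Bq_{1-\zeta}(\mathcal D)$ via a union bound, and conclude from \eqref{cond:alternativep2}.
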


Proposition~\ref{prop:power_comparison} indicates that, under Condition~\ref{cond:conditional_power}, the conditional probability that the combined test fails to reject the null hypothesis is of smaller order than the conditional failure probability of a single projected test. 
This gives a sharp comparison between the combined test and a single projected test. 
However, Condition~\ref{cond:conditional_power} is formulated in terms of the conditional tail probabilities \(q_t(\mathcal D)\), whose verification for a concrete mean difference \(\boldsymbol\delta\) can be difficult under the high-dimensional two-sample mean test. 
We therefore next establish the power behavior under the more directly alternative condition, Condition~\ref{cond:alternative_relaxed}.
}

\begin{theo}\label{theo:power2}
	Suppose Conditions \ref{cond:normal}-\ref{cond:c1} hold. Then, under the alternative satisfying Condition \ref{cond:alternative_relaxed},
	for significance level $\alpha\in(0,1)$ and fixed $B$, 
	\begin{align}
	\lim_{n \to \infty} \mathbb{P}(T_{\rm CRPT} \ge t_\alpha) = 1, 
	\end{align}
	where $t_\alpha$ denotes the upper $\alpha$-quantile of the standard Cauchy distribution.
\end{theo}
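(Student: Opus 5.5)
We need the probability that $T_{\rm CRPT}\ge t_\alpha$ to tend to one for fixed $B$. The difficulty is that one projected $p$-value being tiny does not by itself force the average $\frac1B\sum_b\phi(p_b)$ to be large: another $p_b$ close to one would give $\phi(p_b)$ near $-\infty$. The plan is therefore to show that \emph{all} $B$ projected $p$-values converge to zero in probability. Then every term $\phi(p_b)$ diverges to $+\infty$, and so does their average.

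\emph{Step 1 (a strengthened single-projection result).} Fix $b$. Theorem~\ref{theo:power1} gives $\mathbb P(p_b\le\alpha)\to1$ for every fixed level $\alpha$, and its proof does not use the particular value of $\alpha$. Applying it with an arbitrary fixed $\epsilon\in(0,1)$ in place of $\alpha$ gives $\mathbb P(p_b\le\epsilon)\to1$ for every $\epsilon$, that is, $p_b\stackrel p\to0$.

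If one prefers to argue directly, start from Lemma~\ref{lemm:delta_projection_bound}, Condition~\ref{cond:c1} and Condition~\ref{cond:alternative_relaxed}. These give $\sqrt n\Delta_b^2\ge b_1 k\sqrt n\|\boldsymbol\delta\|_2^2/(p\,c_2)\to\infty$ with probability tending to one. Lemma~\ref{prop:power1} is stated for $\gamma<\infty$, so there is a subtlety when $\Delta_b^2$ itself diverges. It can be handled by monotonicity: the conditional law of $T_b^2$ given $\mathbf P_b$ is a scaled noncentral $F$ with noncentrality $\frac{n_1n_2}{n_1+n_2}\Delta_b^2$, and this law is stochastically increasing in the noncentrality. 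One may therefore reduce the noncentrality to a level where $\sqrt n\Delta_b^2\to\infty$ but $\Delta_b^2\to0$, and apply the lemma there. The $\Phi$-term in \eqref{eq:beta1} then tends to one for every fixed $z_\epsilon$, so $\mathbb P(p_b\le \epsilon\mid\mathbf P_b)\stackrel p\to1$. Bounded convergence then yields $\mathbb P(p_b\le\epsilon)\to1$.

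\emph{Step 2 (union bound over projections).} Since $B$ is fixed,
\[
\mathbb P\Big(\max_{1\le b\le B}p_b>\epsilon\Big)\le \sum_{b=1}^B\mathbb P(p_b>\epsilon)\to0 .
\]
This holds even though the $p_b$ are dependent, because they share $\bar{\mathbf x}-\bar{\mathbf y}$ and $\mathbf S$; the union bound needs no independence. Hence $\max_b p_b\stackrel p\to0$.

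\emph{Step 3 (conclusion).} Choose $\epsilon\in(0,1/2)$ small enough that $\phi(\epsilon)=\tan\{\pi(1/2-\epsilon)\}\ge t_\alpha$. Because $\phi$ is decreasing on $(0,1)$, the event $\{\max_b p_b\le\epsilon\}$ implies
\[
T_{\rm CRPT}=\frac1B\sum_{b=1}^B\phi(p_b)\ge\phi(\epsilon)\ge t_\alpha .
\]
Consequently $\mathbb P(T_{\rm CRPT}\ge t_\alpha)\ge \mathbb P(\max_b p_b\le\epsilon)\to1$.

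The main obstacle is Step 1, specifically justifying $p_b\stackrel p\to0$ (rather than only $\mathbb P(p_b\le\alpha)\to 1$ at the fixed level $\alpha$) in the regime where $\Delta_b^2$ may diverge, which falls outside the stated scope of Lemma~\ref{prop:power1}. I would first invoke Theorem~\ref{theo:power1} at arbitrary levels. I would fall back on the stochastic-monotonicity argument for the noncentral $F$ if that theorem's proof turns out to rely on $\alpha$ being fixed in some essential way.
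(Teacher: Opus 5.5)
Your proof is correct and follows essentially the paper's argument: the paper also applies Theorem~\ref{theo:power1} to each of the fixed $B$ projections, takes the intersection of the events $\{\phi(p_b)\ge t_\alpha\}$, and uses the monotonicity of $\phi$ to get $T_{\rm CRPT}\ge t_\alpha$ on that event. The one difference is that your strengthening to $p_b\stackrel{p}{\to}0$ is not needed: since $\phi(\alpha)=t_\alpha$ exactly, choosing $\epsilon=\alpha$ in your Step~3 already gives the conclusion from Theorem~\ref{theo:power1} at level $\alpha$.
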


Theorem~\ref{theo:power2} establishes that the asymptotic statistical power of $T_{\rm CRPT}$ converges to one under Condition~\ref{cond:alternative_relaxed}. 
Recall that Condition~\ref{cond:alternative_relaxed} requires $p/n^{3/2}=o(\|\boldsymbol\delta\|_2^2)$, that is, the \(L_2\)-type signal strength \(\|\boldsymbol\delta\|_2^2\) needs to be sufficiently large. 
Therefore, \(T_{\rm CRPT}\) is mainly effective for dense alternatives, whereas it may lose power when the signal is sparse.
To address this limitation, we next propose the Cauchy random projection test {\color{black}with power enhancement}.

\section{\color{black}Combined Test with Power Enhancement}\label{sec:pe}

In this section, we develop a power-enhanced version of $T_{\text{CRPT}}$ to improve sensitivity to sparse alternatives. 
First, {\color{black}the test statistic} is constructed by augmenting \(T_{\rm CRPT}\) with an enhancement component. 
We then study its asymptotic properties. The power analysis demonstrates that the proposed test preserves the power of \(T_{\rm CRPT}\) under dense alternatives, while gaining sensitivity to sparse mean differences.

\subsection{\color{black}Test Statistic with Power Enhancement}

First, we introduce the power enhancement technique proposed by \citet{2015Fan}. 
The enhancement component is designed to be asymptotically inactive under null hypothesis, but to become active when sufficiently strong sparse signals are present.
To be specific, for \(j=1,\ldots,p\), let \(\bar x_j\) and \(\bar y_j\) denote the \(j\)-th components of the sample mean vectors \(\bar{\mathbf x}\) and \(\bar{\mathbf y}\), respectively. 
Let \(s_{jj}\) be the \(j\)-th diagonal element of the pooled sample covariance matrix \(\mathbf S\). 
Let $\delta_{n,p}=c_0\log(\log n)\sqrt{\log p}$,
where \(c_0>0\) is a constant. 
The power enhancement component is defined as
\begin{align}
J_0=\sqrt n\mathbb I\left\{
\max_{1\le j\le p}
\frac{|\bar x_j-\bar y_j|}{\sqrt{s_{jj}}}>\frac{\delta_{n,p}}
{\sqrt{n_1n_2/(n_1+n_2)}}
\right\}.
\end{align}
The power-enhanced Cauchy random projection statistic is then constructed as
\begin{align}
{\color{black}T_{\mathrm{CRPT\text{-}PE}}}=T_{\rm CRPT}+J_0.
\end{align}
The final test rejects \(H_0\) when ${\color{black}T_{\mathrm{CRPT\text{-}PE}}}\ge t_\alpha$,
where \(t_\alpha\) is the upper \(\alpha\)-quantile of the standard Cauchy distribution.

\subsection{Theoretical Properties}
In this subsection, we establish the asymptotic properties of the power enhancement component $J_0$ and the statistic \({\color{black}T_{\mathrm{CRPT\text{-}PE}}}\). 
Recall that $\boldsymbol\delta = \boldsymbol \mu_1 - \boldsymbol \mu_2$, where $\boldsymbol\delta = (\delta_1,\ldots,\delta_p)^\top$.
We consider the sparse local alternative as follows.
\begin{condition}[Sparse alternative]\label{cond:alt_sparse}
	Suppose that 
	the shift $\boldsymbol \delta$ satisfies $\max_{1\le j\le p}\{\left|\delta_j\right|/\sigma_{j j}^{1 / 2}\}>3  \delta_{n, p}\sqrt{n_1 + n_2}/ \sqrt{n_1 n_2}$, where $\delta_{n,p} = c_0 \log(\log n) \sqrt{\log p}$.
\end{condition}

Condition~\ref{cond:alt_sparse} requires that at least one coordinate has a standardized mean difference $|\delta_j|/\sigma_{j j}^{1/2}$ exceeding the detection threshold $3  \delta_{n, p}  \sqrt{n_1 + n_2}/ \sqrt{n_1 n_2}$. 
Therefore, Condition~\ref{cond:alt_sparse} is designed to capture sparse alternatives in which the signal is concentrated on a small number of coordinates. 
The following theorem establishes the behavior of the power enhancement component $J_0$ under the null and the sparse alternative.

\begin{theo}\label{theo:pe}
	Suppose Condition \ref{cond:normal} holds and $\log p = o(n)$. Under the null $H_0: \boldsymbol \mu_1 = \boldsymbol \mu_2$, as $n,p\to\infty$, we have
	\begin{align*}
	\mathbb{P}(J_0 = 0 \mid H_0) \to 1.
	\end{align*}
	Under the sparse alternative {\color{black}$H_1$} satisfying Condition \ref{cond:alt_sparse}, as $n,p\to\infty$, we have 
	\begin{align*}
	\mathbb{P}(J_0 = \sqrt n \mid {\color{black}H_1}) \to 1.
	\end{align*}
\end{theo}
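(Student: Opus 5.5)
Write $m=n_1n_2/(n_1+n_2)$ and $Z_j=\sqrt m(\bar x_j-\bar y_j)/\sqrt{\sigma_{jj}}$. Under Condition~\ref{cond:normal}, $Z_j-\sqrt m\,\delta_j/\sqrt{\sigma_{jj}}\sim N(0,1)$. Moreover $(n)s_{jj}/\sigma_{jj}\sim\chi^2_n$, independent of $Z_j$. The event in $J_0$ is exactly $\max_j |Z_j|\sqrt{\sigma_{jj}/s_{jj}}>\delta_{n,p}$. The plan is to control the numerator and the variance ratio separately, using only marginal bounds and a union bound over $j$. No dependence structure of $\boldsymbol\Sigma$ is needed, so Conditions~\ref{cond:projection}--\ref{cond:eigenvalues} play no role here.

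\textbf{Step 1 (uniform variance ratio).} Show that $\max_j|s_{jj}/\sigma_{jj}-1|=O_p(\sqrt{\log p/n})=o_p(1)$. Use the chi-square tail (Laurent--Massart) bound $\mathbb P(|\chi^2_n/n-1|>2\sqrt{x/n}+2x/n)\le 2e^{-x}$ with $x=2\log p$, then take a union bound over $j$. This requires $\log p=o(n)$, which is assumed. Consequently, with probability tending to one, $1/2\le s_{jj}/\sigma_{jj}\le 2$ for all $j$, and in fact the ratio lies in $[1-\epsilon,1+\epsilon]$ for any fixed $\epsilon$.

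\textbf{Step 2 (null).} Under $H_0$ each $Z_j\sim N(0,1)$. By the union bound and the Gaussian tail,
\begin{align*}
\mathbb P\Big(\max_j|Z_j|>\delta_{n,p}/\sqrt2\Big)\le 2p\exp\{-\delta_{n,p}^2/4\}=2\exp\{\log p\,(1-c_0^2\{\log\log n\}^2/4)\}.
\end{align*}
This tends to $0$ because $\log\log n\to\infty$ and $p\to\infty$. On the intersection with the event of Step 1, $\max_j|Z_j|\sqrt{\sigma_{jj}/s_{jj}}\le\sqrt2\cdot\delta_{n,p}/\sqrt2=\delta_{n,p}$, so $J_0=0$ with probability tending to one. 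If $p$ stays bounded the claim is also immediate, since the Gaussian maximum is $O_p(1)$ while $\delta_{n,p}\to\infty$.

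\textbf{Step 3 (alternative).} Choose $j^*$ attaining the maximum in Condition~\ref{cond:alt_sparse}. Since $\sqrt{(n_1+n_2)/(n_1n_2)}=1/\sqrt m$, that condition says $\sqrt m|\delta_{j^*}|/\sqrt{\sigma_{j^*j^*}}>3\delta_{n,p}$. The standard normal noise in $Z_{j^*}$ is $O_p(1)=o_p(\delta_{n,p})$, so $|Z_{j^*}|>2\delta_{n,p}$ with probability tending to one. By Step 1 we have $\sqrt{\sigma_{j^*j^*}/s_{j^*j^*}}\ge1/\sqrt2$ with high probability, which gives $|Z_{j^*}|\sqrt{\sigma_{j^*j^*}/s_{j^*j^*}}>\sqrt2\,\delta_{n,p}>\delta_{n,p}$. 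Hence $J_0=\sqrt n$ with probability tending to one.

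The distribution of $s_{jj}$ is unaffected by the mean shift, so Step 1 applies equally under $H_1$. The only real obstacle is the uniform control in Step 1; once it is in place, the slack factor $\log\log n$ in $\delta_{n,p}$ absorbs all the constants.
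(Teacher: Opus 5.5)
Your proof is correct and follows essentially the same route as the paper: a Laurent--Massart bound plus a union bound for $\max_j|s_{jj}/\sigma_{jj}-1|$, a Gaussian-tail union bound for the standardized mean differences under $H_0$, and a triangle-inequality argument at the coordinate $j^*$ under Condition~\ref{cond:alt_sparse}. Two small differences work in your favour: the explicit $\sqrt2$ slack in your null threshold makes rigorous the replacement of $\sigma_{jj}$ by $s_{jj}$, which the paper's Lemma~\ref{lemm:pe} only sketches, and under $H_1$ you only need the single noise term at $j^*$ rather than the paper's uniform event $\mathcal E_1$.
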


Theorem~\ref{theo:pe} implies that the enhancement term $J_0$ is asymptotically negligible under null hypothesis. 
Meanwhile, \(J_0\) can capture sparse signals that may not be sufficiently reflected in the projected statistics $T_{\mathrm{CRPT}}$. Specifically, under the sparse alternative {\color{black}satisfying Condition \ref{cond:alt_sparse}, Theorem~\ref{theo:pe} implies that the probability of $\{J_0=\sqrt n\}$ converges to one.}
Consequently, \({\color{black}T_{\mathrm{CRPT\text{-}PE}}}\) diverges to {\color{black}infinity}, which enhances the ability of the test to detect sparse mean differences. 
The next theorem summarizes the power properties of \({\color{black}T_{\mathrm{CRPT\text{-}PE}}}\).

\begin{theo}\label{theo:tpe}
	Suppose Conditions \ref{cond:normal}-\ref{cond:c1} hold and $\log p = o(n)$. 
	Consider that the alternative ${\color{black}H_1}$ satisfies Condition~\ref{cond:alternative_relaxed} or \ref{cond:alt_sparse}. For any significance level $\alpha > 0$, as $n,p\to\infty$, we have 
	\begin{align*}
		\mathbb{P}({\color{black}T_{\mathrm{CRPT\text{-}PE}}} \ge t_{\alpha}\mid {\color{black}H_1}) \to 1.
	\end{align*}
\end{theo}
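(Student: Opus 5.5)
The proof splits into the two alternatives in the statement. Each case reduces to results already established, together with a lower bound on \(T_{\rm CRPT}\) that keeps it from cancelling the enhancement term.

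\emph{Case 1: Condition~\ref{cond:alternative_relaxed} holds.} Since \(J_0\ge 0\) by construction, we have \(T_{\mathrm{CRPT\text{-}PE}}\ge T_{\rm CRPT}\). Hence
\[
\mathbb P(T_{\mathrm{CRPT\text{-}PE}}\ge t_\alpha\mid H_1)\ge \mathbb P(T_{\rm CRPT}\ge t_\alpha\mid H_1),
\]
and the right-hand side tends to one by Theorem~\ref{theo:power2}. Conditions~\ref{cond:normal}--\ref{cond:c1} are exactly its hypotheses, and \(B\) is fixed.

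\emph{Case 2: Condition~\ref{cond:alt_sparse} holds.} By Theorem~\ref{theo:pe}, which needs only Condition~\ref{cond:normal} and \(\log p=o(n)\), the event \(\{J_0=\sqrt n\}\) has probability tending to one. On that event \(T_{\mathrm{CRPT\text{-}PE}}=T_{\rm CRPT}+\sqrt n\), so it suffices to show
\[
\mathbb P\big(T_{\rm CRPT}+\sqrt n< t_\alpha\big)\to 0,
\]
that is, \(T_{\rm CRPT}\) is not very negative. This is the only real step, because \(\phi(p_b)=\tan\{\pi(0.5-p_b)\}\) is unbounded below as \(p_b\to1\), and the Cauchy average could in principle be dragged to \(-\infty\).

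We control this with a union bound:
\[
\mathbb P\big(T_{\rm CRPT}<t_\alpha-\sqrt n\big)\le \sum_{b=1}^B \mathbb P\big(\phi(p_b)< t_\alpha-\sqrt n\big)=B\,\mathbb P\big(p_b>1-\epsilon_n\big),
\]
where \(\epsilon_n:=\{\arctan(\sqrt n-t_\alpha)\}^{\text{-type}}\) is defined through \(\phi(1-\epsilon_n)=t_\alpha-\sqrt n\). Since \(\tan\{\pi(\epsilon-0.5)\}\approx -1/(\pi\epsilon)\), we get \(\epsilon_n\asymp n^{-1/2}\to0\). It therefore suffices to show \(\mathbb P(p_b>1-\epsilon_n)\to0\) for each \(b\).

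\begin{itemize}
\item Under the null of the projected problem, conditionally on \(\mathbf P_b\), \(p_b\) is exactly uniform, so this probability equals \(\epsilon_n\to0\).
\item Under an alternative, \((n-k+1)T_b^2/(nk)\) is conditionally a noncentral \(F(k,n-k+1)\) with noncentrality \(\frac{n_1n_2}{n_1+n_2}\Delta_b^2\ge0\). It is therefore stochastically larger than the central \(F\), and \(p_b\) is stochastically smaller than a uniform variable.
\end{itemize}

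Either way, \(\mathbb P(p_b>1-\epsilon_n\mid \mathbf P_b)\le\epsilon_n\). Taking expectations over \(\mathbf P_b\) gives the bound \(B\epsilon_n\to0\), since \(B\) is fixed.

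Combining the two events, \(\mathbb P(T_{\mathrm{CRPT\text{-}PE}}\ge t_\alpha\mid H_1)\ge 1-\mathbb P(J_0\ne\sqrt n)-B\epsilon_n\to1\). The main obstacle is the lower-tail control in Case 2. I expect the stochastic-dominance argument for the noncentral \(F\) (monotone likelihood ratio in the noncentrality parameter) to handle it cleanly, with only routine verification needed.
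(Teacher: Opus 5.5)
Your proposal is correct and matches the paper's proof essentially step for step. For Condition~\ref{cond:alternative_relaxed}, you use $J_0\ge0$ together with the dense-case power result. For Condition~\ref{cond:alt_sparse}, you use the lower bound $1-\mathbb P(J_0\ne\sqrt n)-\sum_{b=1}^B\mathbb P(U_b<t_\alpha-\sqrt n)$, with each summand controlled by the stochastic dominance of the noncentral $F$ over the central one; your $\epsilon_n$ is exactly $\mathbb P(W_C<t_\alpha-\sqrt n)$ for a standard Cauchy $W_C$. Citing Theorem~\ref{theo:power2} rather than Theorem~\ref{theo:power1} for the first case, as you do, is in fact the appropriate reference.
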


Theorem~\ref{theo:tpe} shows that the statistic 
\({\color{black}T_{\mathrm{CRPT\text{-}PE}}}\) can work effectively under both dense and sparse alternatives. 
{\color{black}Under dense alternatives satisfying Condition~\ref{cond:alternative_relaxed}}, the projected Cauchy random projection statistic \(T_{\rm CRPT}\) already accumulates sufficient evidence against the null hypothesis, and ${\color{black}T_{\mathrm{CRPT\text{-}PE}}}$ preserves this power. 
{\color{black}Under dense alternatives satisfying Condition~\ref{cond:alt_sparse}, \(\mathbb{P}(J_0=\sqrt n)\) converges to one}, and hence the enhancement component drives \({\color{black}T_{\mathrm{CRPT\text{-}PE}}}\) to diverge.

\begin{remark}
{\color{black}Under the Gaussian setting specified in Condition \ref{cond:normal}}, the conditional distribution of \((n-k+1)T_b^2/(nk)\) given \(\mathbf P_b\) is exactly \(F(k,n-k+1)\). 
{\color{black}Consequently, under the null hypothesis, the corresponding \(p\)-value $p_b$ in \eqref{eq:pb} is exactly uniformly distributed on \((0,1)\), conditional on \(\mathbf P_b\).}
However, this {\color{black}exact distribution} generally does not remain valid when the normality assumption is relaxed.

{\color{black}Nevertheless, for fixed $k$, $p_b$ can also be approximated by a uniform distribution on \((0,1)\) as \(n\to\infty\).
Specifically, the statistic $T_{b}^2$ has an asymptotic \(\chi_k^2\) distribution under suitable conditions} as \(n\to\infty\) for fixed \(k\); see, for example, \citet{fujikoshi1997,kakizawa2008}.
Moreover, let $F_0$ {\color{black}denote a random variable with the $F(k,n-k+1)$ distribution. Then,} for fixed $k$, $nkF_0/(n-k+1)$ {\color{black}converges weakly to} the \(\chi_k^2\) distribution as $n\to\infty$. 
These results indicate a possible direction for extending the proposed {\color{black}combined test} to non-Gaussian settings. 
For theoretical clarity, however, the null tail approximation and power analysis in this paper are developed under the Gaussian {\color{black}assumption}, while non-Gaussian settings are examined through simulations.
\end{remark}

\section{Simulation Studies}
\label{sec:simulation}

In this section, we conduct simulations to evaluate the performance of the proposed tests. We compare its empirical performance with the test methods proposed by  
\cite{Bai1996} (abbreviated as BS), 
\cite{Srivastava2013} (abbreviated as SKK), \cite{Chen2010} (abbreviated as CQ),
\cite{Lopes2011} (abbreviated as LWJ)
and \cite{2014cai} (abbreviated as CLX).
The dimension of the single random projection matrix is set as $k = \lfloor n/2 \rfloor$.
The proposed methods are denoted as CRPT and CRPT-PE with $B = 50$ random projections.

The data are generated from the following three settings.
\begin{itemize}

	\item Case 1: ({\it Long-term dependence structure})
	Data are generated from two $p$-dimensional multivariate normal distributions $N_p(\boldsymbol{\mu}_1, \boldsymbol{\Sigma}_1)$ and  $N_p(\boldsymbol{\mu}_2, \boldsymbol{\Sigma}_1)$. Let $\boldsymbol{\Sigma}_1 = (1-\rho)\mathbf{I}_p + \rho \mathbf{J}_p$, where $\mathbf{I}_p$ denotes the $p$-dimensional identity matrix and $\mathbf{J}_p$ represents a $p\times p$ matrix with all elements equal to 1.
	
	\item Case 2: ({\it Short-term dependence structure}) Data are generated from two $p$-dimensional multivariate normal distributions $N_p(\boldsymbol{\mu}_1, \boldsymbol{\Sigma}_2)$ and  $N_p(\boldsymbol{\mu}_2, \boldsymbol{\Sigma}_2)$.
	Consider the auto-regression structure with $\boldsymbol{\Sigma}_2 = (\rho^{|i-j|})_{p \times p}$ and $\rho = 0.3$. 
	
	\item Case 3:({\it non-Gaussian}) Data are generated from two $p$-dimensional multivariate $t_3$ distributions. Consider the long-term dependence structure $\boldsymbol{\Sigma}_1$ defined in Case 1.

\end{itemize}

\subsection{Performance under null hypothesis}

In this subsection, we evaluate the empirical Type I error rate of all competing methods under null hypothesis $H_0: \boldsymbol{\mu}_1 = \boldsymbol{\mu}_2 = \mathbf{0}$. 
The sample size pairs and dimensions are 
$(n_1,n_2)= \{(10,15), (15,15),(15,30), (30,30), (30,50), (50,50), (50,100), (100,100)\}$
and $p= \{100, 200, 500, 1000, 2000\}$. 
Type I error rates are estimated based on 2000 replications at significance levels $\alpha = 0.05$.

Table \ref{table:sizemodel1} presents the sizes of all the tests for Case 1 under long‑term dependence structure. The simulated results demonstrate that BS, CQ, SKK, LWJ and CRPT can maintain the type I error rates across all scenarios. In contrast, CLX exhibits notable inflation, particularly under small sample sizes or high dimension. 
Meanwhile, CRPT‑PE fails to control the Type I error rate for small sample sizes since 
$\mathbb{P}(J_0 = 0)$ is small in such settings. As the sample size increases, this probability approaches one, and CRPT‑PE regains proper control, which aligns with the theoretical results presented in Section \ref{sec:pe}.

Table \ref{table:sizemodel2} reports the empirical sizes of the seven tests for Case 2 under short‑term dependence structure. In Table \ref{table:sizemodel2}, BS, CQ, LWJ and CRPT can also maintain the type I error rates across all scenarios for Case 2.
Unlike Case~1, SKK can not maintains Type~I error. CLX remains notably inflated under small sample sizes or large $p$.
For CRPT‑PE, the same pattern persists, with poor control in small samples due to low 
$\mathbb{P}(J_0 = 0)$ and recovery as the sample size increases.

Table~\ref{table:sizemodel3} presents the sizes for Case~3 under the multivariate $t_3$ distribution with long-range dependence. Although CRPT was originally developed under normality, it can control the Type~I error well across all scenarios under the multivariate $t_3$ distribution, demonstrating its robustness to heavy-tailedness. Meanwhile, BS, CQ, LWJ and SKK can control the type I error rates, while CLX exhibits mild inflation. As before, CRPT‑PE shows poor control for small $n$ but recovers as $n$ grows, consistent with Section \ref{sec:pe}.

\begin{table}[htp]
	\centering
	\caption{\label{table:sizemodel1} The sizes of the seven tests for Case 1 under long‑term dependence structure at significance level $\alpha = 0.05$.  The last column reports the simulated $\mathbb{P}(J_0 = 0)$. }
	\centering
	\resizebox{\ifdim\width>\linewidth\linewidth\else\width\fi}{!}{
		\fontsize{10}{12}\selectfont
		\begin{tabular}[t]{lrlllllllc}
			\toprule
			($n_1$, $n_2$) & $p$ & BS & CQ & SKK & CLX & LWJ & CRPT & CRPT-PE & $\mathbb{P}(J_0 = 0)$\\
			\midrule
			& 100 & 0.0725 & 0.0715 & 0.0640 & 0.1360 & 0.0565 & 0.0625 & 0.1885 & 0.5315\\
			& 200 & 0.0660 & 0.0660 & 0.0540 & 0.1410 & 0.0530 & 0.0570 & 0.1590 & 0.5080\\
			& 500 & 0.0880 & 0.0860 & 0.0530 & 0.2065 & 0.0495 & 0.0650 & 0.1845 & 0.4415\\
			& 1000 & 0.0710 & 0.0700 & 0.0345 & 0.2185 & 0.0490 & 0.0570 & 0.1725 & 0.4185\\
			\multirow{-5}{*}{\raggedright\arraybackslash (10,15)} & 2000 & 0.0870 & 0.0865 & 0.0355 & 0.2730 & 0.0470 & 0.0500 & 0.1835 & 0.3650\\
			\cmidrule{1-10}
			& 100 & 0.0735 & 0.0735 & 0.0560 & 0.0995 & 0.0540 & 0.0570 & 0.1820 & 0.6835\\
			& 200 & 0.0775 & 0.0775 & 0.0530 & 0.1250 & 0.0525 & 0.0635 & 0.1855 & 0.6620\\
			& 500 & 0.0640 & 0.0640 & 0.0355 & 0.1655 & 0.0520 & 0.0570 & 0.1875 & 0.6360\\
			& 1000 & 0.0740 & 0.0740 & 0.0285 & 0.1760 & 0.0440 & 0.0630 & 0.1885 & 0.6180\\
			\multirow{-5}{*}{\raggedright\arraybackslash (15,15)} & 2000 & 0.0855 & 0.0855 & 0.0265 & 0.2150 & 0.0525 & 0.0500 & 0.1970 & 0.5785\\
			\cmidrule{1-10}
			& 100 & 0.0685 & 0.0665 & 0.0490 & 0.0770 & 0.0540 & 0.0675 & 0.1535 & 0.8720\\
			& 200 & 0.0675 & 0.0680 & 0.0505 & 0.0845 & 0.0450 & 0.0595 & 0.1350 & 0.8835\\
			& 500 & 0.0705 & 0.0695 & 0.0320 & 0.0975 & 0.0515 & 0.0595 & 0.1290 & 0.8920\\
			& 1000 & 0.0775 & 0.0760 & 0.0335 & 0.1240 & 0.0520 & 0.0485 & 0.1360 & 0.8760\\
			\multirow{-5}{*}{\raggedright\arraybackslash (15,30)} & 2000 & 0.0875 & 0.0885 & 0.0265 & 0.1295 & 0.0500 & 0.0600 & 0.1365 & 0.8880\\
			\cmidrule{1-10}
			& 100 & 0.0720 & 0.0720 & 0.0485 & 0.0680 & 0.0490 & 0.0755 & 0.1205 & 0.9345\\
			& 200 & 0.0745 & 0.0745 & 0.0450 & 0.0635 & 0.0525 & 0.0740 & 0.1090 & 0.9525\\
			& 500 & 0.0740 & 0.0740 & 0.0290 & 0.0870 & 0.0475 & 0.0655 & 0.1000 & 0.9550\\
			& 1000 & 0.0800 & 0.0800 & 0.0245 & 0.0810 & 0.0570 & 0.0550 & 0.0845 & 0.9615\\
			\multirow{-5}{*}{\raggedright\arraybackslash (30,30)} & 2000 & 0.0820 & 0.0820 & 0.0270 & 0.1015 & 0.0480 & 0.0565 & 0.0980 & 0.9480\\
			\cmidrule{1-10}
			& 100 & 0.0755 & 0.0760 & 0.0545 & 0.0600 & 0.0455 & 0.0700 & 0.0935 & 0.9695\\
			& 200 & 0.0775 & 0.0770 & 0.0505 & 0.0640 & 0.0480 & 0.0710 & 0.0895 & 0.9760\\
			& 500 & 0.0675 & 0.0675 & 0.0235 & 0.0625 & 0.0455 & 0.0600 & 0.0735 & 0.9830\\
			& 1000 & 0.0745 & 0.0725 & 0.0240 & 0.0630 & 0.0505 & 0.0615 & 0.0710 & 0.9880\\
			\multirow{-5}{*}{\raggedright\arraybackslash (30,50)} & 2000 & 0.0760 & 0.0760 & 0.0185 & 0.0720 & 0.0440 & 0.0600 & 0.0685 & 0.9880\\
			\cmidrule{1-10}
			& 100 & 0.0785 & 0.0785 & 0.0525 & 0.0535 & 0.0580 & 0.0800 & 0.0915 & 0.9820\\
			& 200 & 0.0685 & 0.0685 & 0.0415 & 0.0505 & 0.0455 & 0.0715 & 0.0785 & 0.9910\\
			& 500 & 0.0640 & 0.0640 & 0.0275 & 0.0450 & 0.0500 & 0.0575 & 0.0630 & 0.9930\\
			& 1000 & 0.0710 & 0.0710 & 0.0235 & 0.0605 & 0.0565 & 0.0565 & 0.0630 & 0.9920\\
			\multirow{-5}{*}{\raggedright\arraybackslash (50,50)} & 2000 & 0.0760 & 0.0760 & 0.0145 & 0.0575 & 0.0445 & 0.0535 & 0.0555 & 0.9970\\
			\cmidrule{1-10}
			& 100 & 0.0795 & 0.0780 & 0.0480 & 0.0510 & 0.0515 & 0.0670 & 0.0710 & 0.9950\\
			& 200 & 0.0715 & 0.0705 & 0.0395 & 0.0495 & 0.0510 & 0.0660 & 0.0690 & 0.9950\\
			& 500 & 0.0610 & 0.0610 & 0.0270 & 0.0365 & 0.0520 & 0.0595 & 0.0600 & 0.9990\\
			& 1000 & 0.0770 & 0.0780 & 0.0235 & 0.0500 & 0.0390 & 0.0555 & 0.0570 & 0.9980\\
			\multirow{-5}{*}{\raggedright\arraybackslash (50,100)} & 2000 & 0.0700 & 0.0690 & 0.0170 & 0.0535 & 0.0450 & 0.0485 & 0.0485 & 0.9995\\
			\cmidrule{1-10}
			& 100 & 0.0625 & 0.0625 & 0.0375 & 0.0450 & 0.0540 & 0.0535 & 0.0570 & 0.9965\\
			& 200 & 0.0740 & 0.0740 & 0.0340 & 0.0470 & 0.0445 & 0.0865 & 0.0875 & 0.9975\\
			& 500 & 0.0685 & 0.0685 & 0.0285 & 0.0425 & 0.0495 & 0.0615 & 0.0620 & 0.9990\\
			& 1000 & 0.0670 & 0.0670 & 0.0165 & 0.0410 & 0.0480 & 0.0545 & 0.0555 & 0.9990\\
			\multirow{-5}{*}{\raggedright\arraybackslash (100,100)} & 2000 & 0.0730 & 0.0730 & 0.0120 & 0.0430 & 0.0535 & 0.0520 & 0.0525 & 0.9995\\
			\bottomrule
	\end{tabular}}
\end{table}

\begin{table}[htp]
	\centering
	\caption{\label{table:sizemodel2}The sizes of the seven tests for Case 2 under short‑term dependence structure at significance level $\alpha = 0.05$.  The last column reports the simulated $\mathbb{P}(J_0 = 0)$. }
	\centering
	\resizebox{\ifdim\width>\linewidth\linewidth\else\width\fi}{!}{
		\fontsize{10}{12}\selectfont
		\begin{tabular}[t]{lrlllllllc}
			\toprule
			($n_1$, $n_2$) & $p$ & BS & CQ & SKK & CLX & LWJ & CRPT & CRPT-PE & $\mathbb{P}(J_0 = 0)$\\
			\midrule
			& 100 & 0.0605 & 0.0640 & 0.1400 & 0.1600 & 0.0540 & 0.0690 & 0.2080 & 0.4105\\
			& 200 & 0.0570 & 0.0575 & 0.1800 & 0.2145 & 0.0535 & 0.0585 & 0.1855 & 0.3030\\
			& 500 & 0.0540 & 0.0530 & 0.2640 & 0.2945 & 0.0485 & 0.0485 & 0.1835 & 0.2120\\
			& 1000 & 0.0505 & 0.0500 & 0.3970 & 0.3580 & 0.0445 & 0.0565 & 0.1805 & 0.1140\\
			\multirow{-9}{*}{\raggedright\arraybackslash (10,15)} & 2000 & 0.0530 & 0.0525 & 0.6105 & 0.4655 & 0.0450 & 0.0535 & 0.1790 & 0.0520\\
			\cmidrule{1-10}
			& 100 & 0.0650 & 0.0650 & 0.1060 & 0.1195 & 0.0525 & 0.0735 & 0.2125 & 0.5800\\
			& 200 & 0.0570 & 0.0570 & 0.1290 & 0.1620 & 0.0530 & 0.0575 & 0.2010 & 0.5305\\
			& 500 & 0.0515 & 0.0515 & 0.1745 & 0.2200 & 0.0440 & 0.0420 & 0.1880 & 0.4205\\
			& 1000 & 0.0595 & 0.0595 & 0.2560 & 0.2945 & 0.0470 & 0.0660 & 0.2065 & 0.3655\\
			\multirow{-9}{*}{\raggedright\arraybackslash (15,15)} & 2000 & 0.0490 & 0.0490 & 0.3705 & 0.3645 & 0.0520 & 0.0510 & 0.2100 & 0.2725\\
			\cmidrule{1-10}
			& 100 & 0.0565 & 0.0590 & 0.0960 & 0.0875 & 0.0550 & 0.0735 & 0.1635 & 0.8515\\
			& 200 & 0.0505 & 0.0495 & 0.1065 & 0.1060 & 0.0565 & 0.0520 & 0.1445 & 0.8465\\
			& 500 & 0.0480 & 0.0525 & 0.1570 & 0.1380 & 0.0490 & 0.0585 & 0.1505 & 0.8385\\
			& 1000 & 0.0605 & 0.0565 & 0.2345 & 0.1765 & 0.0495 & 0.0515 & 0.1605 & 0.8235\\
			\multirow{-5}{*}{\raggedright\arraybackslash (15,30)} & 2000 & 0.0540 & 0.0575 & 0.3430 & 0.2285 & 0.0510 & 0.0470 & 0.1650 & 0.7970\\
			\cmidrule{1-10}
			& 100 & 0.0525 & 0.0525 & 0.0630 & 0.0670 & 0.0530 & 0.0735 & 0.1230 & 0.9355\\
			& 200 & 0.0565 & 0.0565 & 0.0765 & 0.0940 & 0.0540 & 0.0640 & 0.1135 & 0.9300\\
			& 500 & 0.0555 & 0.0555 & 0.0975 & 0.1010 & 0.0490 & 0.0555 & 0.0975 & 0.9430\\
			& 1000 & 0.0525 & 0.0525 & 0.1060 & 0.1130 & 0.0485 & 0.0635 & 0.0960 & 0.9535\\
			\multirow{-5}{*}{\raggedright\arraybackslash (30,30)} & 2000 & 0.0505 & 0.0505 & 0.1335 & 0.1600 & 0.0475 & 0.0530 & 0.0985 & 0.9430\\
			\cmidrule{1-10}
			& 100 & 0.0550 & 0.0555 & 0.0685 & 0.0565 & 0.0435 & 0.0725 & 0.0935 & 0.9730\\
			& 200 & 0.0475 & 0.0475 & 0.0685 & 0.0775 & 0.0480 & 0.0665 & 0.0900 & 0.9715\\
			& 500 & 0.0515 & 0.0535 & 0.0895 & 0.0865 & 0.0570 & 0.0535 & 0.0735 & 0.9765\\
			& 1000 & 0.0505 & 0.0490 & 0.0980 & 0.0950 & 0.0435 & 0.0450 & 0.0605 & 0.9800\\
			\multirow{-5}{*}{\raggedright\arraybackslash (30,50)} & 2000 & 0.0495 & 0.0495 & 0.1335 & 0.1175 & 0.0515 & 0.0445 & 0.0600 & 0.9825\\
			\cmidrule{1-10}
			& 100 & 0.0635 & 0.0635 & 0.0720 & 0.0550 & 0.0470 & 0.0745 & 0.0845 & 0.9855\\
			& 200 & 0.0640 & 0.0640 & 0.0740 & 0.0675 & 0.0540 & 0.0745 & 0.0850 & 0.9870\\
			& 500 & 0.0550 & 0.0550 & 0.0725 & 0.0635 & 0.0445 & 0.0585 & 0.0640 & 0.9915\\
			& 1000 & 0.0535 & 0.0535 & 0.0790 & 0.0855 & 0.0510 & 0.0535 & 0.0625 & 0.9895\\
			\multirow{-5}{*}{\raggedright\arraybackslash (50,50)} & 2000 & 0.0550 & 0.0550 & 0.0980 & 0.1045 & 0.0440 & 0.0500 & 0.0555 & 0.9945\\
			\cmidrule{1-10}
			& 100 & 0.0670 & 0.0650 & 0.0670 & 0.0555 & 0.0490 & 0.0670 & 0.0735 & 0.9925\\
			& 200 & 0.0560 & 0.0565 & 0.0650 & 0.0575 & 0.0455 & 0.0815 & 0.0860 & 0.9945\\
			& 500 & 0.0520 & 0.0525 & 0.0705 & 0.0650 & 0.0555 & 0.0710 & 0.0720 & 0.9985\\
			& 1000 & 0.0505 & 0.0500 & 0.0735 & 0.0745 & 0.0475 & 0.0555 & 0.0560 & 0.9995\\
			\multirow{-5}{*}{\raggedright\arraybackslash (50,100)} & 2000 & 0.0540 & 0.0525 & 0.0960 & 0.0810 & 0.0535 & 0.0525 & 0.0560 & 0.9965\\
			\cmidrule{1-10}
			& 100 & 0.0595 & 0.0595 & 0.0570 & 0.0565 & 0.0505 & 0.0545 & 0.0570 & 0.9975\\
			& 200 & 0.0490 & 0.0490 & 0.0495 & 0.0490 & 0.0515 & 0.0800 & 0.0810 & 0.9990\\
			& 500 & 0.0495 & 0.0495 & 0.0575 & 0.0560 & 0.0425 & 0.0675 & 0.0675 & 1.0000\\
			& 1000 & 0.0455 & 0.0455 & 0.0560 & 0.0695 & 0.0520 & 0.0640 & 0.0650 & 0.9990\\
			\multirow{-5}{*}{\raggedright\arraybackslash (100,100)} & 2000 & 0.0465 & 0.0465 & 0.0630 & 0.0600 & 0.0465 & 0.0525 & 0.0530 & 0.9995\\
			\bottomrule
	\end{tabular}}
\end{table}

\begin{table}[htp]
	\centering
	\caption{\label{table:sizemodel3}The sizes of the seven tests for Case 3 under short‑term dependence structure at significance level $\alpha = 0.05$.  The last column reports the simulated $\mathbb{P}(J_0 = 0)$. }
	\centering
	\resizebox{\ifdim\width>\linewidth\linewidth\else\width\fi}{!}{
		\fontsize{10}{12}\selectfont
		\begin{tabular}[t]{lrlllllllc}
			\toprule
			($n_1$, $n_2$) & $p$ & BS & CQ & SKK & CLX & LWJ & CRPT & CRPT-PE & $\mathbb{P}(J_0 = 0)$\\
			\midrule
			& 100 & 0.0475 & 0.0460 & 0.0310 & 0.0695 & 0.0435 & 0.0360 & 0.1300 & 0.6330\\
			& 200 & 0.0445 & 0.0435 & 0.0250 & 0.0925 & 0.0435 & 0.0345 & 0.1190 & 0.6080\\
			& 500 & 0.0490 & 0.0435 & 0.0215 & 0.1175 & 0.0355 & 0.0310 & 0.1260 & 0.5785\\
			& 1000 & 0.0520 & 0.0470 & 0.0125 & 0.1295 & 0.0395 & 0.0300 & 0.1195 & 0.5700\\
			\multirow{-5}{*}{\raggedright\arraybackslash (10,15)} & 2000 & 0.0495 & 0.0460 & 0.0150 & 0.1415 & 0.0360 & 0.0395 & 0.1225 & 0.5365\\
			\cmidrule{1-10}
			& 100 & 0.0450 & 0.0450 & 0.0270 & 0.0560 & 0.0335 & 0.0320 & 0.1245 & 0.7725\\
			& 200 & 0.0510 & 0.0510 & 0.0245 & 0.0795 & 0.0425 & 0.0405 & 0.1265 & 0.7475\\
			& 500 & 0.0500 & 0.0500 & 0.0155 & 0.0870 & 0.0390 & 0.0270 & 0.1035 & 0.7435\\
			& 1000 & 0.0505 & 0.0505 & 0.0140 & 0.0935 & 0.0410 & 0.0325 & 0.1115 & 0.7575\\
			\multirow{-5}{*}{\raggedright\arraybackslash (15,15)} & 2000 & 0.0460 & 0.0460 & 0.0120 & 0.1095 & 0.0370 & 0.0350 & 0.1145 & 0.7270\\
			\cmidrule{1-10}
			& 100 & 0.0555 & 0.0470 & 0.0275 & 0.0470 & 0.0365 & 0.0515 & 0.1160 & 0.9135\\
			& 200 & 0.0565 & 0.0495 & 0.0230 & 0.0550 & 0.0415 & 0.0585 & 0.1210 & 0.9135\\
			& 500 & 0.0530 & 0.0455 & 0.0175 & 0.0570 & 0.0450 & 0.0400 & 0.0880 & 0.9345\\
			& 1000 & 0.0610 & 0.0525 & 0.0170 & 0.0630 & 0.0385 & 0.0380 & 0.0850 & 0.9370\\
			\multirow{-5}{*}{\raggedright\arraybackslash (15,30)} & 2000 & 0.0660 & 0.0545 & 0.0145 & 0.0665 & 0.0505 & 0.0405 & 0.0865 & 0.9430\\
			\cmidrule{1-10}
			& 100 & 0.0510 & 0.0510 & 0.0280 & 0.0380 & 0.0355 & 0.0345 & 0.0665 & 0.9640\\
			& 200 & 0.0500 & 0.0500 & 0.0275 & 0.0375 & 0.0385 & 0.0340 & 0.0595 & 0.9715\\
			& 500 & 0.0550 & 0.0550 & 0.0220 & 0.0400 & 0.0265 & 0.0315 & 0.0500 & 0.9780\\
			\multirow{-4}{*}{\raggedright\arraybackslash (30,30)} & 1000 & 0.0430 & 0.0430 & 0.0155 & 0.0405 & 0.0385 & 0.0290 & 0.0455 & 0.9810\\
			\cmidrule{1-10}
			& 100 & 0.0645 & 0.0615 & 0.0370 & 0.0405 & 0.0405 & 0.0520 & 0.0685 & 0.9805\\
			& 200 & 0.0575 & 0.0565 & 0.0255 & 0.0350 & 0.0425 & 0.0475 & 0.0575 & 0.9870\\
			& 500 & 0.0555 & 0.0535 & 0.0180 & 0.0475 & 0.0410 & 0.0380 & 0.0440 & 0.9925\\
			\multirow{-4}{*}{\raggedright\arraybackslash (30,50)} & 1000 & 0.0605 & 0.0540 & 0.0135 & 0.0385 & 0.0445 & 0.0295 & 0.0360 & 0.9930\\
			\cmidrule{1-10}
			& 100 & 0.0435 & 0.0435 & 0.0250 & 0.0295 & 0.0370 & 0.0515 & 0.0610 & 0.9885\\
			& 200 & 0.0605 & 0.0605 & 0.0250 & 0.0370 & 0.0435 & 0.0340 & 0.0405 & 0.9930\\
			& 500 & 0.0505 & 0.0505 & 0.0130 & 0.0255 & 0.0405 & 0.0305 & 0.0340 & 0.9960\\
			\multirow{-4}{*}{\raggedright\arraybackslash (50,50)} & 1000 & 0.0535 & 0.0535 & 0.0135 & 0.0330 & 0.0370 & 0.0220 & 0.0250 & 0.9970\\
			\cmidrule{1-10}
			& 100 & 0.0530 & 0.0530 & 0.0265 & 0.0320 & 0.0375 & 0.0445 & 0.0470 & 0.9970\\
			& 200 & 0.0575 & 0.0535 & 0.0215 & 0.0380 & 0.0395 & 0.0525 & 0.0540 & 0.9980\\
			& 500 & 0.0585 & 0.0555 & 0.0175 & 0.0310 & 0.0435 & 0.0370 & 0.0375 & 0.9990\\
			\multirow{-4}{*}{\raggedright\arraybackslash (50,100)} & 1000 & 0.0550 & 0.0500 & 0.0125 & 0.0360 & 0.0425 & 0.0305 & 0.0310 & 0.9995\\
			\cmidrule{1-10}
			& 100 & 0.0620 & 0.0620 & 0.0290 & 0.0280 & 0.0370 & 0.0375 & 0.0400 & 0.9975\\
			& 200 & 0.0480 & 0.0480 & 0.0235 & 0.0215 & 0.0395 & 0.0395 & 0.0405 & 0.9990\\
			& 500 & 0.0585 & 0.0585 & 0.0180 & 0.0335 & 0.0360 & 0.0355 & 0.0360 & 0.9995\\
			\multirow{-4}{*}{\raggedright\arraybackslash (100,100)} & 1000 & 0.0585 & 0.0585 & 0.0105 & 0.0255 & 0.0420 & 0.0280 & 0.0285 & 0.9990\\
			\bottomrule
	\end{tabular}}
\end{table}

\subsection{Statistical power}

Under alternative hypothesis, we set $\boldsymbol{\mu}_1 = \mathbf{0}$ without loss of generality. Let $\boldsymbol{\mu}_2$ have $s = \lfloor \tau p \rfloor$ nonzero entries of equal magnitude $\delta_0$ placed at randomly selected coordinates. 
The parameter $\tau$ controls the signal sparsity. The dimension is fixed at $p=1000$ and the significance level is $\alpha=0.05$. Two regimes are considered. The first is a dense regime with $\tau = 0.2$, corresponding to $s=200$. The second is a sparse regime with $\tau = 0.005$, which for $p=1000$ gives $s=5$. Two sample size configurations are examined, namely small samples with $(n_1,n_2) = (15,15)$ and moderate samples with $(50,50)$.
For the small sample case $(n_1,n_2)=(15,15)$, we exclude CLX and CRPT‑PE from the power figure because both methods fail to control the Type I error as shown in Tables~\ref{table:sizemodel1}-\ref{table:sizemodel3}. 
For the dense setting with $\tau=0.2$, 
we vary $\delta_0$ from $0$ to $1$ when $(n_1,n_2)=(15,15)$ and from $0$ to $0.5$ when $(n_1,n_2)=(50,50)$. 
For the sparse setting with $\tau=0.005$, the range is $0$ to $4$ for $(15,15)$ and $0$ to $2$ for $(50,50)$. 



\begin{figure}[htp]
	\begin{center}
		\includegraphics[width=15cm]{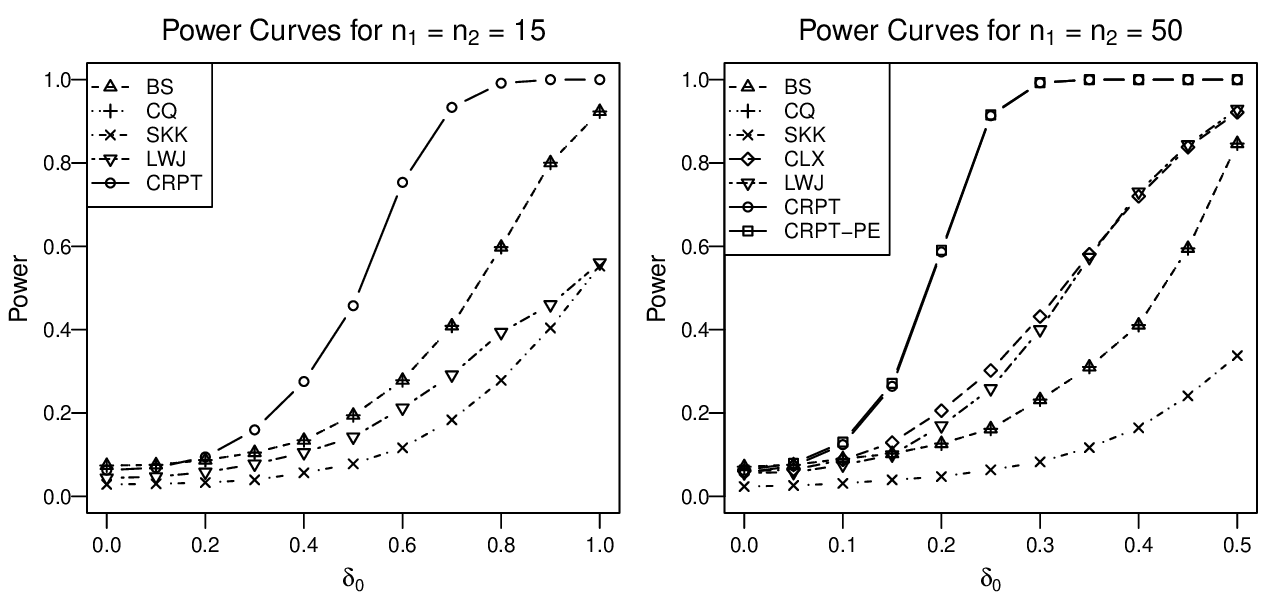}
		\caption{Power curves of the tests against the dense signals ($\tau$=0.2) under $p = 1000$ with $2000$ repetitions for Case 1. }\label{figure:powercase1_dense}
	\end{center}
\end{figure}

\begin{figure}[htp]
	\begin{center}
		\includegraphics[width=15cm]{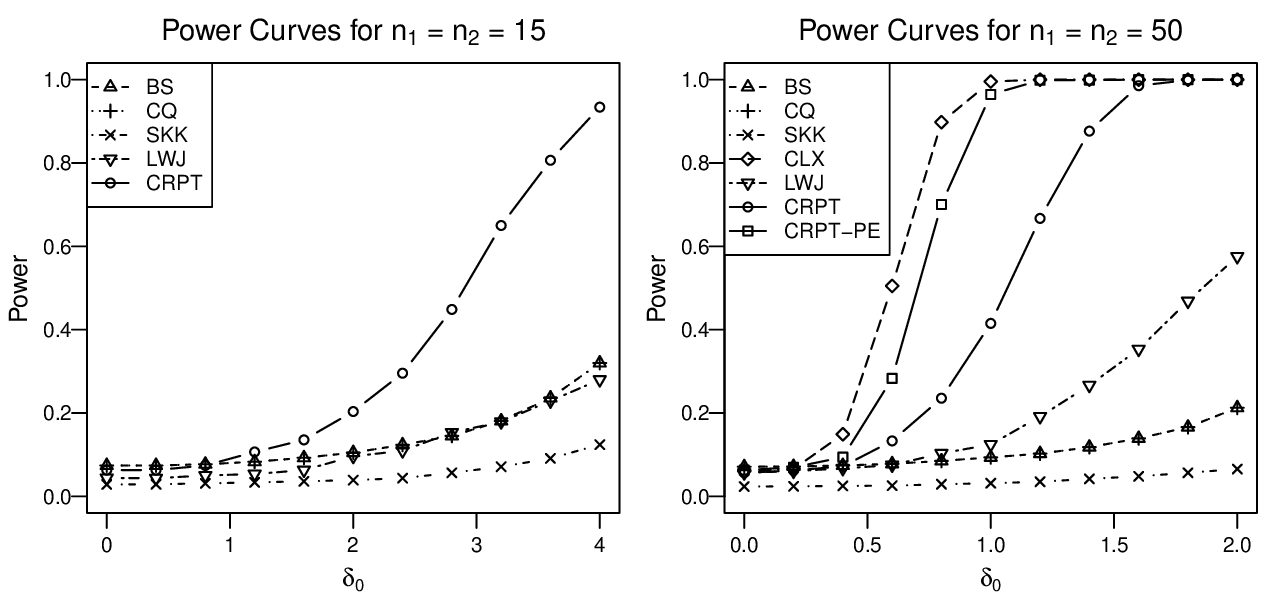}
		\caption{Power curves of the tests against the sparse signals ($\tau$=0.005) under $p = 1000$ with $2000$ repetitions for Case 1. }\label{figure:powercase1_sparse}
	\end{center}
\end{figure}

Figure~\ref{figure:powercase1_dense} presents the power curves of the methods against the dense signals for Case 1. 
For the small sample size \((n_1,n_2)=(15,15)\), the proposed CRPT exhibits the best performance among the competing methods.
When the sample size increases to \((n_1,n_2)=(50,50)\), both CRPT and CRPT-PE achieve high power, and their performances are very close in the dense setting.
Figure~\ref{figure:powercase1_sparse} reports the power curves under the sparse setting 
\((\tau=0.005)\). 
In the small sample setting, CRPT exhibits competitive power among the methods. 
For the larger sample size \((n_1,n_2)=(50,50)\), the CLX method achieves the best performance under sparse alternatives, as expected for a maximum-type test. 
Meanwhile, the proposed power-enhanced method CRPT-PE substantially improves the performance of CRPT, leading to a noticeable gain in power.

\begin{figure}[htp]
	\begin{center}
		\includegraphics[width=15cm]{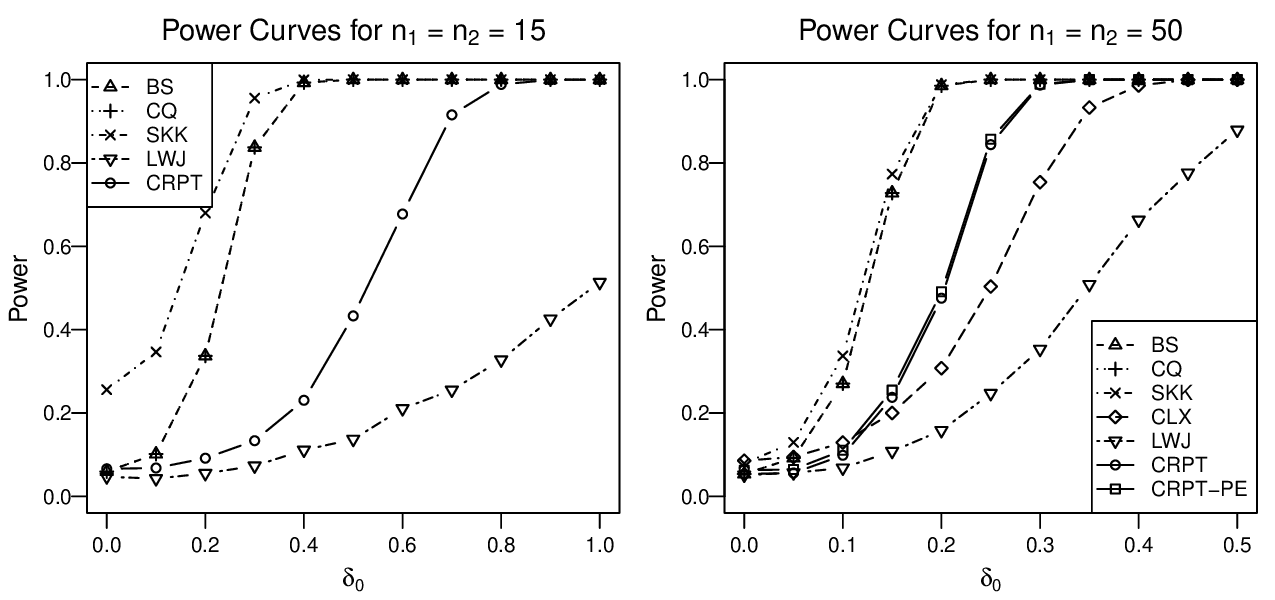}
		\caption{Power curves of the tests against the dense signals ($\tau$=0.2) under $p = 1000$ with $2000$ repetitions for Case 2. }\label{figure:powercase2_dense}
	\end{center}
\end{figure}

\begin{figure}[htp]
	\begin{center}
		\includegraphics[width=15cm]{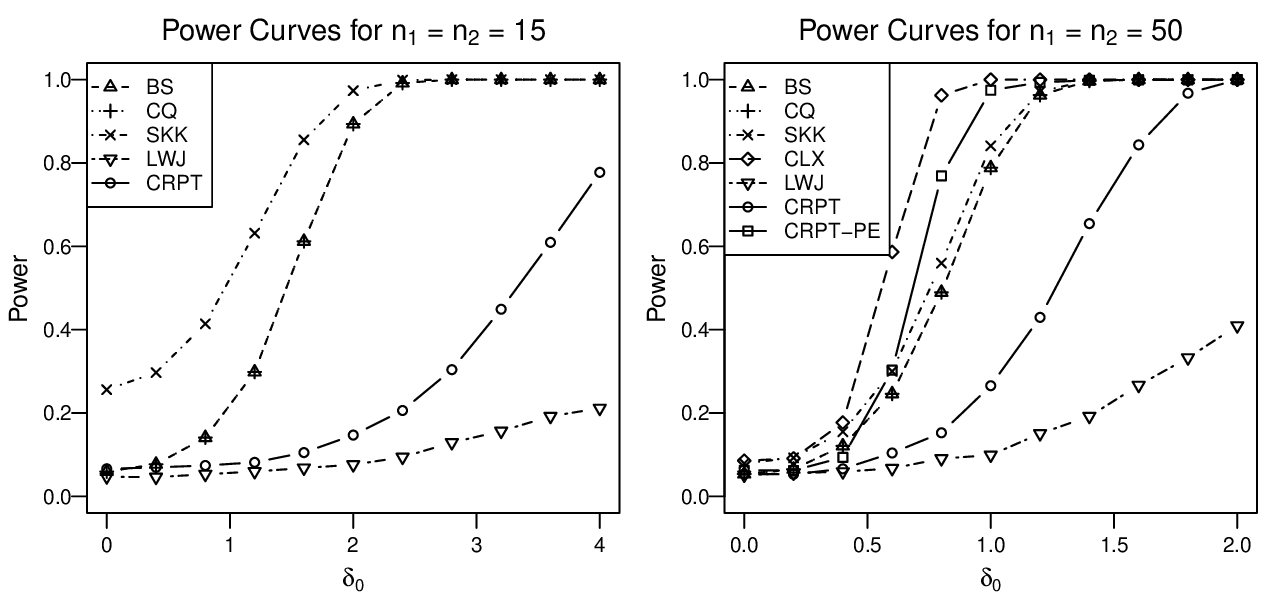}
		\caption{Power curves of the tests against the sparse signals ($\tau$=0.005) under $p = 1000$ with $2000$ repetitions for Case 2. }\label{figure:powercase2_sparse}
	\end{center}
\end{figure}

\begin{figure}[htp]
	\begin{center}
		\includegraphics[width=15cm]{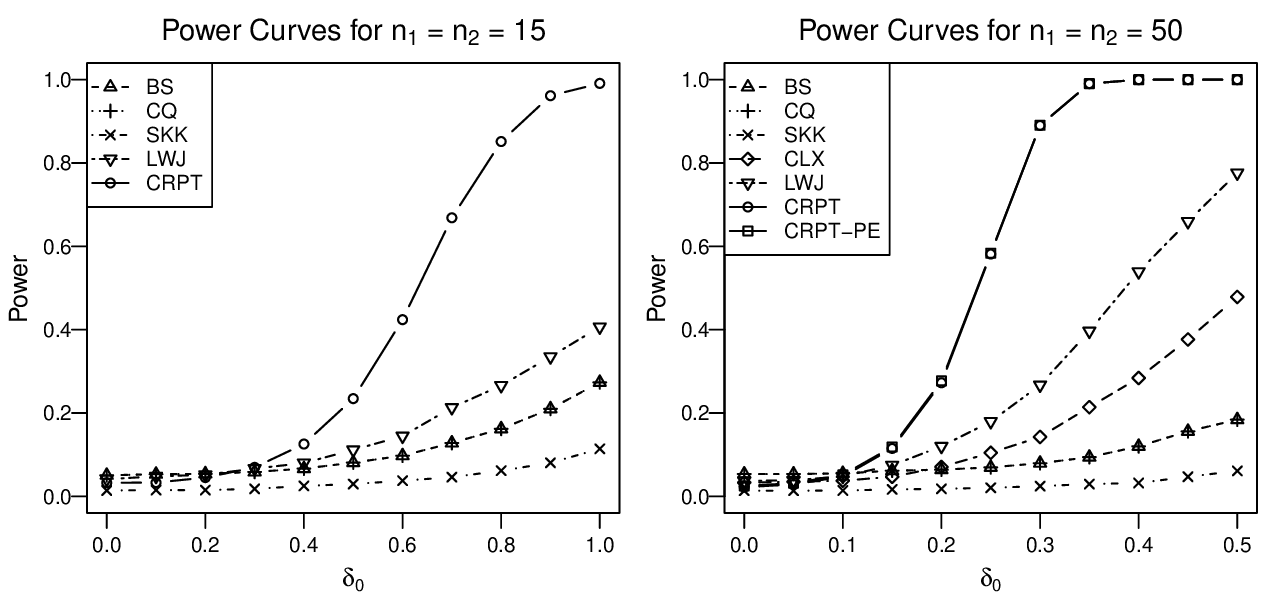}
		\caption{Power curves of the tests against the dense signals ($\tau$=0.2) under $p = 1000$ with $2000$ repetitions for Case 3. }\label{figure:powercase3_dense}
	\end{center}
\end{figure}

\begin{figure}[htp]
	\begin{center}
		\includegraphics[width=15cm]{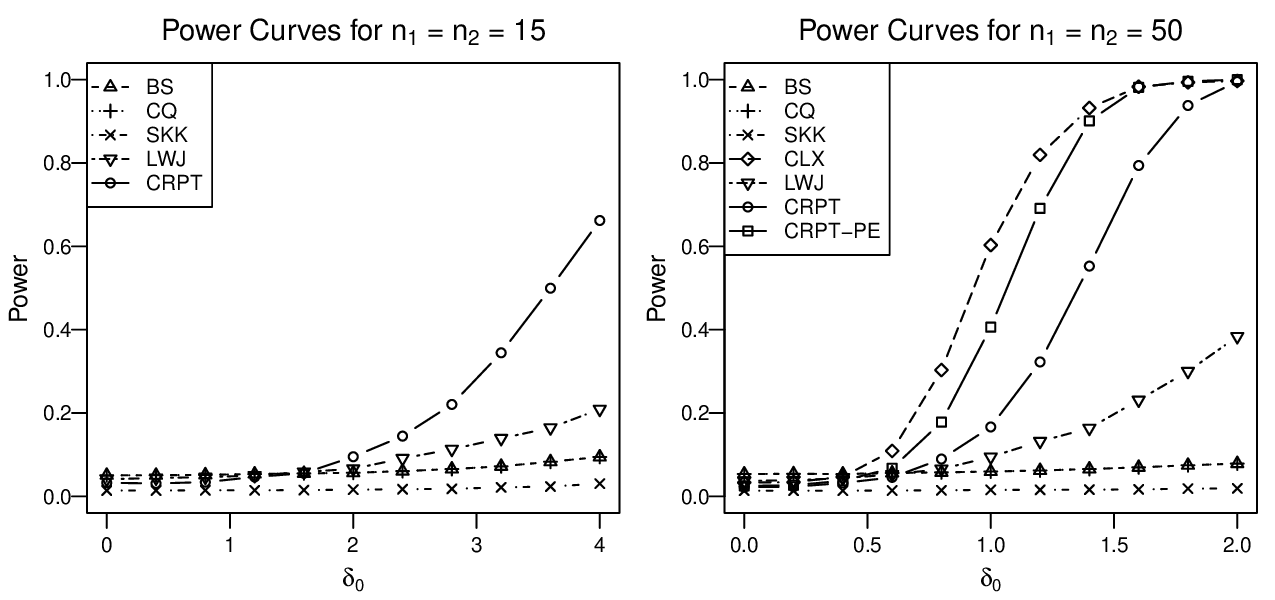}
		\caption{Power curves of the tests against the sparse signals ($\tau$=0.005) under $p = 1000$ with $2000$ repetitions for Case 3. }\label{figure:powercase3_sparse}
	\end{center}
\end{figure}

Figure~\ref{figure:powercase2_dense} presents the power curves of the methods against dense signals for Case~2 under the short-term dependence structure. 
For the small sample size \((n_1,n_2)=(15,15)\), SKK fails to control the Type~I error. 
Among the remaining methods, BS and CQ show strong performance, which is consistent with the fact that these two methods are particularly suitable when the covariance matrix is close to a diagonal or weakly dependent structure. 
The proposed CRPT shows a clear improvement over the single-projection method LWJ. 
Under \((n_1,n_2)=(50,50)\), the performances of CRPT and CRPT-PE are very close, while BS and CQ also remain highly competitive.
Figure~\ref{figure:powercase2_sparse} reports the power curves under the sparse setting \((\tau=0.005)\), where the pattern is similar to that observed in the dense setting with \((n_1,n_2)=(15,15)\). 
For \((n_1,n_2)=(50,50)\), the proposed power-enhanced method CRPT-PE can outperform BS and CQ for sufficiently large shift  $\delta_0$.

Figures~\ref{figure:powercase3_dense}-\ref{figure:powercase3_sparse} present the power curves of the methods for Case~3, where the data are generated from a multivariate \(t\)-distribution with long-term dependence. 
The overall power patterns are similar to those observed in Case~1. 
In the dense setting, the proposed CRPT maintains competitive power compared with the existing methods. 
In the sparse setting, the power-enhanced version CRPT-PE provides a clear improvement over the  CRPT. 
Although the theoretical properties of the proposed procedures are established under the Gaussian assumption, the simulation results for Case~3 indicate that the proposed methods could be robust to the $t$ distribution. 

\section{Real Data Analysis}
\label{sec:realdata}

In this section, we evaluate the proposed methods on a breast cancer gene expression dataset from the Gene Expression Omnibus database (GSE19159), which contains 168 early-stage breast cancer patients, available at \url{https://www.ncbi.nlm.nih.gov/geo/query/acc.cgi?acc=GSE19159}.
According to the clinical outcome, the patients are divided into two prognostic groups: 111 patients without recurrence within five years after diagnosis, referred to as the good-prognosis group, and 57 patients who developed distant metastasis within four years, referred to as the poor-prognosis group. 
There are 2905 genomic features retained for analysis. 

To evaluate the empirical power of different methods on this real dataset, we repeatedly draw random subsamples from the two prognostic groups. 
Specifically, we randomly select \(n_1\) observations from the good-prognosis group and \(n_2\) observations from the poor-prognosis group, and then apply all competing two-sample mean tests to the selected samples. 
This procedure is repeated 1000 times.
Table~\ref{table:breast} presents the empirical power with the significance level \(\alpha=0.05\).
For the small sample \((n_1,n_2)=(20,15)\), CLX and CRPT-PE achieve the highest empirical powers. 
However, as indicated by the simulation studies, these two methods may suffer from size distortion in small samples. 
Among the remaining methods, the proposed CRPT achieves the best performance. 
For the larger sample size \((n_1,n_2)=(50,30)\), both CRPT and CRPT-PE achieve the highest empirical powers. 
Overall, the proposed Cauchy-combined random projection methods are useful for identifying distributional differences in high-dimensional genomic studies.

\begin{table}[ht]
	\centering	\caption{Empirical powers of the seven methods for breast cancer prognosis data.}\label{table:breast}
	\begin{tabular}{lllllllll}
		\toprule
		 $n_1$ & $n_2$  & BS & CQ & SKK & CLX & LWJ & CRPT & CRPT-PE  \\ 
		 \hline
		 20 & 15 & 0.546 & 0.439 & 0.051 & 0.589 & 0.216 & 0.550 & 0.757 \\ 
		 50 & 30 & 0.995 & 0.973 & 0.604 & 0.948 & 0.622 & 0.996 & 0.996 \\ 
		\bottomrule
	\end{tabular}
\end{table}

%

\section{Conclusion}
\label{sec:conclusion}

In this paper, we propose a Cauchy-combined random projection test for high-dimensional two-sample mean problems. 
The proposed method applies Hotelling's \(T^2\) test after multiple random projections and then aggregates the resulting projected \(p\)-values through the Cauchy combination. 
We establish the asymptotic tail behavior of the proposed statistic under null hypothesis and further study its power consistency under suitable alternatives.
To improve the sensitivity of the proposed method to sparse mean differences, we also introduce a power-enhanced version of the test. 
The enhancement component enables the resulting statistic to retain the dense-signal detection ability of CRPT while gaining additional power for sparse signals. 
Simulation studies show that the proposed methods have competitive power under long-term dependence structure. 

Several extensions are worth future investigation. 
First, although the theoretical results are developed under the Gaussian assumption, it would be meaningful to extend the theory to broader distribution classes, such as sub-Gaussian or heavy-tailed distributions. 
Second, the current framework assumes a common covariance matrix for the two populations, and extending the method to unequal covariance settings is an important direction.



\bibliographystyle{plainnat} 
\bibliography{reference}

\newpage

\appendix

\begin{center}
 {\Large\bf  Appendix for ``Random Projection Tests via Cauchy Combination for  Two-Sample Mean''}   
\end{center}

This appendix provides detailed proofs of the main theoretical results and the supporting technical lemmas. Specifically, Appendix \ref{app:a} collects the notation used throughout the appendix. Appendix \ref{app:b} contains all technical lemmas and proofs of the main theorems. 

\section{Notations}\label{app:a}
We first recall some notations defined in the main text and introduce additional notations that will be used throughout this appendix. 
For $x_0 \in \mathbb{R} \cup \{\infty\}$, we write $A_x \sim B_x$ as $x \to x_0$
if $A_x/B_x \to 1$ as $x \to x_0$, while $A_x = o(B_x)$ as $x \to x_0$ represents $A_x/B_x \to 0$ as $x \to x_0$.
Moreover, $A_x \asymp B_x$ as $x \to x_0$ means that there exist constants
$0 < C_1 \le C_2 < \infty$ and a neighborhood of $x_0$ such that
$C_1 \le A_x/B_x \le C_2$
for all $x$ in that neighborhood.
Finally, $A_x = O(B_x)$ as $x \to x_0$ indicates that there exists a constant
$c > 0$ and a neighborhood of $x_0$ such that $|A_x| \le c |B_x|$
for all $x$ in that neighborhood.
When $x_0 = \infty$, the above conditions are understood to hold
for all sufficiently large $x$.
Let $\{Z_n\}$ be a sequence of random variables and $\{a_n\}$ be a sequence of positive real numbers.
We say that $Z_n = o_p(a_n)$ if $Z_n/a_n \stackrel{p}{\to} 0$ as $n\to\infty$. 
Let $Z_n = O_p(a_n)$ represent that  $Z_n/a_n$ is bounded in probability, that is, there exist constants $M>0$ such that $\lim_{n\to\infty}\mathbb{P}\left( |Z_n| \le M a_n \right) \to 1$.
Meanwhile, we say that $Z_n = \Omega_p(a_n)$ if there exist constants 
$M>0$ such that 
$\lim_{n\to\infty}\mathbb{P}\left( |Z_n| \ge M a_n \right) \to 1$.
Finally, we say that $Z_n = \Theta_p(a_n)$  if $Z_n = O_p(a_n)$ and $Z_n = \Omega_p(a_n)$.

Suppose that $F(m_1,m_2)$ denotes $F$-distribution with degrees of freedom $m_1$ and $m_2$ and $F_{m_1,m_2}(x)$ represents the cumulative distribution function of $F(m_1,m_2)$. 
Let $F\left(m_1,m_2,\gamma\right)$ denote the  noncentral $F$-distribution with degrees of freedom $m_1$ and $m_2$ and non-central parameter $\gamma$.
Let $\lambda_{\min}(\mathbf W)$ and $\lambda_{\max}(\mathbf W)$ denote the minimum and maximum eigenvalue of $\mathbf W$, respectively. 
Meanwhile, $\lambda_{\min}^+(\mathbf W)$ represents the minimum positive eigenvalue of $\mathbf W$. 
For two symmetric matrices $\mathbf A$ and $\mathbf B$ of the same dimension, 
we write $\mathbf A \preceq \mathbf B$ if $\mathbf B-\mathbf A$ is positive semi-definite. 
Similarly, $\mathbf A \prec \mathbf B$ means that $\mathbf B-\mathbf A$ is positive definite.

Suppose $\{\mathbf{x}_1, \ldots, \mathbf{x}_{n_1}\}$ and $\{\mathbf{y}_1, \ldots, \mathbf{y}_{n_2}\}$ are two $p$-dimensional independent samples.
Their sample means are $\bar{\mathbf{x}}$ and $\bar{\mathbf{y}}$, and $\mathbf{S}$ is the pooled sample covariance matrix.
For random projection matrices $\mathbf{P}_{b}$, the projection test statistics is defined as
\begin{align}\label{eq:t_statistics}
T_{b}^2 = \frac{n_1 n_2}{n_1+n_2} \big[\mathbf{P}_{b} (\bar{\mathbf{x}} - \bar{\mathbf{y}})\big]^\top (\mathbf{P}_{b} \mathbf{S} \mathbf{P}_{b}^\top)^{-1} \big[\mathbf{P}_{b} (\bar{\mathbf{x}} - \bar{\mathbf{y}})\big], \quad b=1,\ldots,B.
\end{align}
Let $\phi(p) := \tan\{(0.5-p)\pi\}$. 
The combined test statistics is defined as $T_{\text{CRPT}} =\sum_{b=1}^B\phi(p_b)/B$,
where $p_b = 1-F_{k,n-k+1}((n-k+1)T_{b}^2/(nk))$ and $n = n_1+n_2-2$.

\section{Technical details}\label{app:b}

\subsection{Proof of Theorem \ref{theo:t_tail}}

\subsubsection{Proof of Theorem \ref{theo:t_tail}}

\begin{proof}
	Let $\mathcal D=
	\{\mathbf x_1,\ldots,\mathbf x_{n_1},\mathbf y_1,\ldots,\mathbf y_{n_2}\}$
	denote the observed data. Since the projection matrices
	$\mathbf P_{i}$ and $\mathbf P_{j}$ are independent and are also
	independent of $\mathcal D$, the statistics $T_{i}^2$ and
	$T_{j}^2$ are conditionally independent given $\mathcal D$. Therefore,
	\[
	\begin{aligned}
		\mathbb P(T_{i}^2>t,T_{j}^2>t)
		&=
		\mathbb E\left[
		\mathbb P(T_{i}^2>t,T_{j}^2>t\mid\mathcal D)
		\right]  \\
		&=
		\mathbb E\left[
		\mathbb P(T_{i}^2>t\mid\mathcal D)
		\mathbb P(T_{j}^2>t\mid\mathcal D)
		\right].
	\end{aligned}
	\]
	Since $\mathbf P_{i}$ and $\mathbf P_{j}$ have the same distribution,
	we may write $\pi_t(\mathcal D)
	=\mathbb P(T_{b}^2>t\mid\mathcal D)$,
	where $b$ is any fixed projection index. Then
	\begin{align}\label{eq:theo1_titj}
	\mathbb P(T_{i}^2>t,T_{j}^2>t)=\mathbb E\{\pi_t(\mathcal D)^2\}.
	\end{align}

	Let $\mathcal G_{p,n}$ be the event defined in
	Lemma~\ref{lemm:hd_good}. By Lemma~\ref{lemm:hd_good}, when $p\ge 4n$, we have
	\begin{align}\label{eq:gc}
	\mathbb P(\mathcal G_{p,n}^c)
	\le 4\exp\{-p/32\}.
	\end{align}
	Let $0<\beta<1/2$. By Lemma~\ref{lemm:cond_tail_hd} (i), as $t\to\infty$, we have
	\begin{align}\label{eq:pid}
	\pi_t(\mathcal D)
	= O\big[\exp\{-t^\beta/8\} + t^{-(1-\beta){(n-k+1)/2}} \big],
	\qquad \mathcal D\in\mathcal G_{p,n}.
	\end{align}
	Combining \eqref{eq:gc}-\eqref{eq:pid}, we have
	\begin{align*}
		\mathbb E\{\pi_t(\mathcal D)^2\}
		&=
		\mathbb E\{\pi_t(\mathcal D)^2\mathbf 1_{\mathcal G_{p,n}}\}
		+
		\mathbb E\{\pi_t(\mathcal D)^2\mathbf 1_{\mathcal G_{p,n}^c}\} \le
		\sup_{\mathcal D\in\mathcal G_{p,n}}\pi_t(\mathcal D)^2
		+
		\mathbb P(\mathcal G_{p,n}^c)\\
		&=	O\big[\exp\{-t^\beta/4\}
		+ t^{-2(1-\beta){(n-k+1)/2}} + \exp\{-p/32\} \big]
	\end{align*}
	Since $0<\beta<1/2$, with $p/\log t\to\infty$, we have
	\begin{align}\label{eq:theo_pid}
	\mathbb E\{\pi_t(\mathcal D)^2\}=o(t^{-(n-k+1)/2}).
	\end{align}
	Meanwhile, by Lemma~\ref{lemm:t_tail}, we have
	$\mathbb P(T_{i}^2>t)
	\asymp t^{-(n-k+1)/2}$.
	With \eqref{eq:theo1_titj} and \eqref{eq:theo_pid}, we have
	\begin{align*}
	\mathbb P(T_{i}^2>t,T_{j}^2>t)
	= \mathbb E\{\pi_t(\mathcal D)^2\}
	= o(t^{-(n-k+1)/2})
	= o\{\mathbb P(T_{i}^2>t)\},
	\end{align*}
	which completes the proof.
\end{proof}

\subsubsection{Technical Lemma for Theorem \ref{theo:t_tail}}

\begin{lemm}\label{lemm:hd_good}
	Suppose Conditions \ref{cond:normal} and \ref{cond:eigenvalues} hold and $H_0$ is true. 
	Let $c=n_1n_2/(n_1+n_2)$ and
	$\mathbf d=\bar{\mathbf x}-\bar{\mathbf y}$.
	Assume $p\ge 4n$. Define the event
	\begin{align}
		\mathcal G_{p,n}
		=\bigg\{
		&\lambda_{\min}^+(\mathbf S)\ge \frac{\lambda_{\min}(\boldsymbol\Sigma) p}{16n},
		\quad
		\lambda_{\max}(\mathbf S)\le \frac{25\lambda_{\max}(\boldsymbol\Sigma) p}{4n},\notag \\
		&\frac{\lambda_{\min}(\boldsymbol\Sigma) p}{4}
		\le c\|\mathbf d\|^2 \le 5\lambda_{\max}(\boldsymbol\Sigma) p
		\bigg\}.
	\end{align}
	Then, it holds that
	\begin{align}
		\mathbb P(\mathcal G_{p,n}^c)
		\le 4e^{-p/32}.
	\end{align}
	Moreover, on $\mathcal G_{p,n}$, we have 
	\begin{align}
		\frac{c\|\mathbf d\|^2}{\lambda_{\min}^+(\mathbf S)}
		\le
		80\frac{\lambda_{\max}(\boldsymbol\Sigma)}{\lambda_{\min}(\boldsymbol\Sigma)} n
		\quad {\rm and} \quad
		\frac{c\|\mathbf d\|^2}{\lambda_{\max}(\mathbf S)}
		\ge
		\frac{\lambda_{\min}(\boldsymbol\Sigma)}{25\lambda_{\max}(\boldsymbol\Sigma)}n.
	\end{align}
\end{lemm}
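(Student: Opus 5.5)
Under $H_0$ and Condition~\ref{cond:normal} we have the standard representation. We get $\sqrt{c}\,\mathbf d \sim N(\mathbf 0,\boldsymbol\Sigma)$, and $n\mathbf S=\boldsymbol\Sigma^{1/2}\mathbf Z^\top\mathbf Z\boldsymbol\Sigma^{1/2}$ with $\mathbf Z\in\mathbb R^{n\times p}$ having i.i.d.\ $N(0,1)$ entries, independent of $\mathbf d$. Write $\sqrt c\,\mathbf d=\boldsymbol\Sigma^{1/2}\mathbf g$ with $\mathbf g\sim N(\mathbf 0,\mathbf I_p)$. The plan is to control the three pieces of $\mathcal G_{p,n}$ separately, each with failure probability at most of order $e^{-p/32}$, and then take a union bound. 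The resulting counts are: two events for the singular values of $\mathbf Z$ (which give the $\mathbf S$ bounds) and two tails for $\|\mathbf g\|^2$, for a total of $4e^{-p/32}$.

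\emph{Eigenvalues of $\mathbf S$.} The nonzero eigenvalues of $n\mathbf S$ equal those of the $n\times n$ matrix $\mathbf Z\boldsymbol\Sigma\mathbf Z^\top$. This matrix satisfies $\lambda_{\min}(\boldsymbol\Sigma)\mathbf Z\mathbf Z^\top\preceq \mathbf Z\boldsymbol\Sigma\mathbf Z^\top\preceq\lambda_{\max}(\boldsymbol\Sigma)\mathbf Z\mathbf Z^\top$. It therefore suffices to bound the extreme singular values of the $p\times n$ Gaussian matrix $\mathbf Z^\top$, and here I would invoke the Davidson--Szarek Gaussian concentration bound $\mathbb P(\sqrt p-\sqrt n-s\le s_{\min}\le s_{\max}\le\sqrt p+\sqrt n+s)\ge1-2e^{-s^2/2}$. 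Take $s=\sqrt p/4$ and use $\sqrt n\le\sqrt p/2$ (from $p\ge4n$). This gives $s_{\min}\ge\sqrt p/4$ and $s_{\max}\le 7\sqrt p/4\le5\sqrt p/2$, with failure probability at most $2e^{-p/32}$. Squaring and dividing by $n$ yields $\lambda^+_{\min}(\mathbf S)\ge\lambda_{\min}(\boldsymbol\Sigma)p/(16n)$ and $\lambda_{\max}(\mathbf S)\le 25\lambda_{\max}(\boldsymbol\Sigma)p/(4n)$. (Since $s_{\min}>0$, the rank is exactly $n$.)

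\emph{The norm of $\mathbf d$.} We have $\lambda_{\min}(\boldsymbol\Sigma)\|\mathbf g\|^2\le c\|\mathbf d\|^2\le\lambda_{\max}(\boldsymbol\Sigma)\|\mathbf g\|^2$, so we need $p/4\le\|\mathbf g\|^2\le5p$ with failure probability at most $2e^{-p/32}$. This is the step needing care with constants, though it is routine. For the lower tail I would use Laurent--Massart, $\mathbb P(\chi^2_p\le p-2\sqrt{px})\le e^{-x}$. With $x=p/32$ this gives $\chi^2_p\ge p(1-1/(2\sqrt2))\ge p/4$. For the upper tail, $\mathbb P(\chi^2_p\ge p+2\sqrt{px}+2x)\le e^{-x}$ with $x=p/32$ gives a threshold of about $1.42p\le5p$. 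Alternatively, Gaussian Lipschitz concentration of $\|\mathbf g\|$ works for both tails: $\mathbb P(|\|\mathbf g\|-\mathbb E\|\mathbf g\||\ge\sqrt p/4)\le2e^{-p/32}$, with $\mathbb E\|\mathbf g\|\in[\sqrt{p-1},\sqrt p]$. The union bound then gives $\mathbb P(\mathcal G^c_{p,n})\le4e^{-p/32}$.

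\emph{Consequences on $\mathcal G_{p,n}$.} These are direct divisions. First, $c\|\mathbf d\|^2/\lambda^+_{\min}(\mathbf S)\le 5\lambda_{\max}p\cdot16n/(\lambda_{\min}p)=80(\lambda_{\max}/\lambda_{\min})n$. Second, $c\|\mathbf d\|^2/\lambda_{\max}(\mathbf S)\ge(\lambda_{\min}p/4)\cdot4n/(25\lambda_{\max}p)=\lambda_{\min}n/(25\lambda_{\max})$. Condition~\ref{cond:eigenvalues} is used only to keep these ratios bounded, so that the lemma is useful in Lemma~\ref{lemm:cond_tail_hd}.
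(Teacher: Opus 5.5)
Your proof is correct and takes the same route as the paper: the representation $n\mathbf S=\boldsymbol\Sigma^{1/2}\mathbf Z^\top\mathbf Z\boldsymbol\Sigma^{1/2}$ with Gaussian extreme-singular-value concentration for the two eigenvalue bounds, Laurent--Massart $\chi^2_p$ tails for $c\|\mathbf d\|^2$, a union bound, and then direct division for the two ratio bounds. The only difference is how the constants are chosen. You use one two-sided singular-value event with $s=\sqrt p/4$ and $x=p/32$ in both $\chi^2_p$ tails, so each of the four failure probabilities is exactly $e^{-p/32}$. The paper instead uses separate one-sided bounds (with $u=(\sqrt p-\sqrt n)/2$, $u=\sqrt p$, $u_1=p$, $u_2=9p/64$) and then dominates $e^{-p}+e^{-9p/64}+e^{-p/32}+e^{-p/2}$ by $4e^{-p/32}$.
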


\begin{proof}
	Under null hypothesis,
	$\mathbf d$ follows
	$N\left(\mathbf 0, \boldsymbol\Sigma/c\right)$ and then $c\|\boldsymbol\Sigma^{-1/2}\mathbf d\|^2$ follows the chi-square distribution with the degree $p$.
	From concentration bounds for Gaussian quadratic forms \citep[Lemma 1]{2000Laurent}, for any $u_1>0$ and $u_2>0$, it holds that
	$\mathbb P\{c\|\boldsymbol\Sigma^{-1/2}\mathbf d\|^2\ge p+2\sqrt{pu_1}+2u_1\}\le e^{-u_1}$ and 	$\mathbb P\{c\|\boldsymbol\Sigma^{-1/2}\mathbf d\|^2\le p-2\sqrt{pu_2}\}\le e^{-u_2}$.
	Taking $u_1 = p$ and $u_2 = 9p/64$, we have 
	\begin{align*}
		&\mathbb P\{c\|\boldsymbol\Sigma^{-1/2}\mathbf d\|^2\ge 5p\}\le e^{-p}, \\
		&\mathbb P\{c\|\boldsymbol\Sigma^{-1/2}\mathbf d\|^2\le p/4\}\le e^{-9p/64}.
	\end{align*}
	Since $\lambda_{\min}(\boldsymbol\Sigma)
	\|\boldsymbol\Sigma^{-1/2}\mathbf d\|^2
	\le \|\mathbf d\|^2
	\le \lambda_{\max}(\boldsymbol\Sigma)
	\|\boldsymbol\Sigma^{-1/2}\mathbf d\|^2$,
	we obtain
	\begin{align}\label{eq:lemm4_d}
		\mathbb P\left(
		\frac{\lambda_{\min}(\boldsymbol\Sigma) p}{4}
		\le c\|\mathbf d\|^2
		\le 5\lambda_{\max}(\boldsymbol\Sigma) p
		\right)
		\ge 1-e^{-p}-e^{-9p/64}.
	\end{align}
	
	Meanwhile, under Condition \ref{cond:normal}, 
	$n\mathbf S$ follows the Wishart distribution $W_p(\boldsymbol\Sigma,n)$.
	Hence we may write
	\begin{align*}
		\mathbf S
		=
		\frac1n
		\boldsymbol\Sigma^{1/2}
		\mathbf W^\top\mathbf W
		\boldsymbol\Sigma^{1/2},
	\end{align*}
	where $\mathbf W\in\mathbb R^{n\times p}$ has independent standard normal
	entries. Since $p\ge 4n$, the matrix $\mathbf S$ has rank $n$ almost surely. 
	The nonzero eigenvalues of $\boldsymbol\Sigma^{1/2}\mathbf W^\top\mathbf W\boldsymbol\Sigma^{1/2}$
	are equal to the eigenvalues of $\mathbf W\boldsymbol\Sigma\mathbf W^\top$.
	Therefore, it follows that
	\begin{align}\label{eq:lemm4_lamda}
		\lambda_{\min}^+(\mathbf S)
		=
		\frac1n
		\lambda_{\min}(\mathbf W\boldsymbol\Sigma\mathbf W^\top)
		\ge
		\frac{\lambda_{\min}(\boldsymbol\Sigma)}{n}
		\lambda_{\min}(\mathbf W\mathbf W^\top)
		=
		\frac{\lambda_{\min}(\boldsymbol\Sigma)}{n}s_{\min}^2(\mathbf W).
	\end{align}
	
	From Theorem 6.1 in \cite{2019wainwright}, for $p>n$ and every $u>0$, we have
	\begin{align*}
		\mathbb P\left\{s_{\min}(\mathbf W)\le\sqrt p-\sqrt n-u\right\}\le \exp\{-u^2/2\}.
	\end{align*}
	Taking $u = (\sqrt{p}-\sqrt{n})/2$, we get
	$\mathbb P\left\{s_{\min}(\mathbf W)
	\le(\sqrt p-\sqrt n)/2\right\}
	\le\exp\left\{-(\sqrt p-\sqrt n)^2/8\right\}$.
	Since $p\ge4n$ and then $\sqrt p-\sqrt n\ge \sqrt p/2$,
	we have $\mathbb P\left\{s_{\min}^2(\mathbf W)\le p/16 \right\}
	\le e^{-p/32}$. With \eqref{eq:lemm4_lamda}, it follows that
	\begin{align}\label{eq:lemm4_lambdamin}
		\mathbb{P}\left(\lambda_{\min}^+(\mathbf S)
		\ge
		\frac{\lambda_{\min}(\boldsymbol\Sigma) p}{16n}\right) \ge 1-e^{-p/32}.
	\end{align}
	
	Similarly, notice that
	\begin{align}\label{eq:lemm4_lmax}
		\lambda_{\max}(\mathbf S)
		\le
		\frac{\lambda_{\max}(\boldsymbol\Sigma)}{n}s_{\max}^2(\mathbf W).
	\end{align}
	From Theorem 6.1 in \cite{2019wainwright}, for $p>n$ and every $u>0$, we have
	\begin{align*}
		\mathbb P\left\{s_{\max}(\mathbf W)\ge\sqrt p+\sqrt n+u\right\}\le \exp\{-u^2/2\}.
	\end{align*}
	Since $p\ge4n$, we have $\sqrt n\le \sqrt p/2$. 
	Taking $u = \sqrt{p}$, it follows that
	$\mathbb P\left\{s_{\max}^2(\mathbf W)\ge 25p/4\right\}\le \exp\{-p/2\}$. 
	Therefore, with \eqref{eq:lemm4_lmax}, we have
	\begin{align}\label{eq:lemm4_lambdamax}
		\mathbb P\left\{\lambda_{\max}(\mathbf S)
		\le
		\frac{25\lambda_{\max}(\boldsymbol\Sigma) p}{4n}\right\}\ge  1- \exp\{-p/2\}.	
	\end{align}
	
	Combining \eqref{eq:lemm4_d}, \eqref{eq:lemm4_lambdamin} and \eqref{eq:lemm4_lambdamax}, we have
	\[
	\mathbb P(\mathcal G_{p,n}^c)
	\le e^{-p}+e^{-9p/64}+e^{-p/32} +e^{-p/2} \le 4e^{-p/32}.
	\]
	
	Finally, on $\mathcal G_{p,n}$,
	\[
	\frac{c\|\mathbf d\|^2}{\lambda_{\min}^+(\mathbf S)}
	\le
	\frac{5\lambda_{\max}(\boldsymbol\Sigma) p}{\lambda_{\min}(\boldsymbol\Sigma) p/(16n)}
	=
	80\frac{\lambda_{\max}(\boldsymbol\Sigma)}{\lambda_{\min}(\boldsymbol\Sigma)} n,
	\]
	and
	\[
	\frac{c\|\mathbf d\|^2}{\lambda_{\max}(\mathbf S)}
	\ge
	\frac{\lambda_{\min}(\boldsymbol\Sigma) p/4}{25\lambda_{\max}(\boldsymbol\Sigma) p/(4n)}
	=
	\frac{n\lambda_{\min}(\boldsymbol\Sigma)}{25\lambda_{\max}(\boldsymbol\Sigma)},
	\]
	which completes the proof.
\end{proof}

\begin{lemm}\label{lemm:cond_tail_hd}
	Suppose Conditions \ref{cond:normal}-\ref{cond:eigenvalues} hold and $H_0$ is true.
	Assume $p\ge4n$.
	Let $T_{b}^2$ be the statistics as \eqref{eq:t_statistics} corresponding to the random projection matrices $\mathbf{P}_{b}$, $b=1,\cdots,B$.
	For any $\mathcal D\in\mathcal G_{p,n}$ defined in Lemma \ref{lemm:hd_good}, the following bounds hold.
	
	(i) For any $t>0$ and $L>0$,
	\begin{align}
		\mathbb P(T_{b}^2>t\mid\mathcal D)
		\le
		\mathbb P(\chi_k^2>L)
		+
		\mathbb P\left\{
		\lambda_{\min}(\mathbf G\mathbf G^\top)
		<
		\frac{80\lambda_{\max}(\boldsymbol\Sigma) nL}{\lambda_{\min}(\boldsymbol\Sigma)t}\mid\mathcal D
		\right\},
	\end{align}
	where $\mathbf G\in\mathbb R^{k\times n}$ has independent standard normal entries.
	Consequently, as $L\to\infty$ and $nL/t\to0$, we have
	\begin{align}
		\mathbb P(T_{b}^2>t\mid\mathcal D)
		= O\left(\exp\{-L/8\}
		+ \left(
		\frac{80\lambda_{\max}(\boldsymbol\Sigma) nL}{\lambda_{\min}(\boldsymbol\Sigma)t}
		\right)^{(n-k+1)/2}\right).
	\end{align}

	(ii) For any $s>0$ and $0<h<1$,
	\begin{align}
		\mathbb P(T_{b}^2<s\mid\mathcal D)
		\le
		\mathbb P(\chi_k^2<h)
		+
		\mathbb P\left\{
		\lambda_{\max}(\mathbf G\mathbf G^\top)
		>
		\frac{\lambda_{\min}(\boldsymbol\Sigma)nh}{25\lambda_{\max}(\boldsymbol\Sigma) s}\mid\mathcal D
		\right\}.
	\end{align}
	Consequently, for fixed $n,k$, if $h/s\to\infty$ and $h
	\to 0^+$, then
	\begin{align}
		\mathbb P(T_{b}^2<s\mid\mathcal D)
		= O\left(h^{k/2}
		+ \exp\left\{-\frac{n\lambda_{\min}(\boldsymbol\Sigma)h}{100\lambda_{\max}(\boldsymbol\Sigma)s}\right\}\right)
	\end{align}
\end{lemm}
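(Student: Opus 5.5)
The plan is to condition on $\mathcal D$, so that $\mathbf S$ and $\mathbf d=\bar{\mathbf x}-\bar{\mathbf y}$ are fixed and the only randomness is $\mathbf P_b$. We then sandwich $\mathbf P_b\mathbf S\mathbf P_b^\top$ between multiples of a Wishart-type matrix $\mathbf G\mathbf G^\top$. This reduces both tails of $T_b^2$ to a $\chi_k^2$ tail and to an extreme-eigenvalue tail of $\mathbf G\mathbf G^\top$.

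\emph{Reduction.} Write the spectral decomposition $\mathbf S=\mathbf U\boldsymbol\Lambda\mathbf U^\top$, where $\mathbf U\in\mathbb R^{p\times n}$ has orthonormal columns; $\mathbf S$ has rank $n$ a.s. since $p\ge 4n$. Set $\mathbf G:=\mathbf P_b\mathbf U\in\mathbb R^{k\times n}$. Conditional on $\mathcal D$, rotation invariance of the Gaussian entries of $\mathbf P_b$ shows that $\mathbf G$ has i.i.d.\ $N(0,1)$ entries. Then $\mathbf P_b\mathbf S\mathbf P_b^\top=\mathbf G\boldsymbol\Lambda\mathbf G^\top$, which gives
\[
\lambda_{\min}^+(\mathbf S)\,\mathbf G\mathbf G^\top\preceq \mathbf P_b\mathbf S\mathbf P_b^\top\preceq \lambda_{\max}(\mathbf S)\,\mathbf G\mathbf G^\top .
\]
Consequently,
\[
\frac{c\|\mathbf P_b\mathbf d\|^2}{\lambda_{\max}(\mathbf S)\lambda_{\max}(\mathbf G\mathbf G^\top)}\le T_b^2\le \frac{c\|\mathbf P_b\mathbf d\|^2}{\lambda_{\min}^+(\mathbf S)\lambda_{\min}(\mathbf G\mathbf G^\top)}.
\]
Given $\mathcal D$, we have $\mathbf P_b\mathbf d\sim N(\mathbf 0,\|\mathbf d\|^2\mathbf I_k)$, so $\|\mathbf P_b\mathbf d\|^2=\|\mathbf d\|^2\chi_k^2$. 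The variables $\chi_k^2$ and $\mathbf G$ are dependent, but only a union bound is used below, so this causes no difficulty.

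\emph{Part (i).} If $T_b^2>t$, then either $\chi_k^2>L$, or
\[
\lambda_{\min}(\mathbf G\mathbf G^\top)<\frac{c\|\mathbf d\|^2L}{\lambda_{\min}^+(\mathbf S)\,t}.
\]
On $\mathcal G_{p,n}$, Lemma~\ref{lemm:hd_good} bounds the right-hand side by $80\lambda_{\max}(\boldsymbol\Sigma)nL/\{\lambda_{\min}(\boldsymbol\Sigma)t\}$, and the displayed inequality of (i) follows. For the rate, two facts are needed:
\begin{itemize}
\item the Chernoff bound $\mathbb P(\chi_k^2>L)=O(e^{-L/8})$ for fixed $k$ as $L\to\infty$;
\item the small-ball estimate $\mathbb P\{\lambda_{\min}(\mathbf G\mathbf G^\top)<\varepsilon\}=O(\varepsilon^{(n-k+1)/2})$ as $\varepsilon\to0$, applied with $\varepsilon\asymp nL/t\to 0$.
\end{itemize}

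\emph{Main obstacle.} The small-ball estimate is the step I expect to be hardest. I would prove it from the joint eigenvalue density of the real Wishart matrix $W_k(\mathbf I,n)$, which is proportional to
\[
\prod_i\lambda_i^{(n-k-1)/2}e^{-\lambda_i/2}\prod_{i<j}|\lambda_i-\lambda_j|.
\]
Integrating the smallest eigenvalue over $(0,\varepsilon)$, while bounding the other factors uniformly, leaves
\[
\int_0^\varepsilon\lambda^{(n-k-1)/2}\,\mathrm d\lambda\asymp\varepsilon^{(n-k+1)/2}.
\]
Here $\lambda$ also enters the Vandermonde factors, but those factors are bounded by polynomials in the other eigenvalues, which are integrable. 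Alternatively, I would cite Edelman's bound on the density of the smallest eigenvalue. This is exactly the exponent appearing in Lemma~\ref{lemm:t_tail}.

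\emph{Part (ii).} This is symmetric. If $T_b^2<s$, then either $\chi_k^2<h$, or
\[
\lambda_{\max}(\mathbf G\mathbf G^\top)>\frac{c\|\mathbf d\|^2h}{\lambda_{\max}(\mathbf S)\,s}\ge\frac{\lambda_{\min}(\boldsymbol\Sigma)nh}{25\lambda_{\max}(\boldsymbol\Sigma)s},
\]
where the last inequality uses the lower bound on $c\|\mathbf d\|^2/\lambda_{\max}(\mathbf S)$ from Lemma~\ref{lemm:hd_good}. For the rates:
\begin{itemize}
\item $\mathbb P(\chi_k^2<h)=O(h^{k/2})$ as $h\to0^+$, from the $\chi_k^2$ density near zero;
\item $\lambda_{\max}(\mathbf G\mathbf G^\top)\le\operatorname{tr}(\mathbf G\mathbf G^\top)\sim\chi^2_{nk}$, and for fixed $nk$ the Chernoff bound gives $\mathbb P(\chi^2_{nk}>x)=O(e^{-x/4})$ as $x\to\infty$.
\end{itemize}
Taking $x=\lambda_{\min}(\boldsymbol\Sigma)nh/\{25\lambda_{\max}(\boldsymbol\Sigma)s\}\to\infty$, which holds because $h/s\to\infty$, yields the stated exponent with constant $100$.
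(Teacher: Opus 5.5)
Your proposal is correct and follows the paper's proof essentially step by step: the same reduction $\mathbf G=\mathbf P_b\mathbf U_n$, the same eigenvalue sandwich of $\mathbf P_b\mathbf S\mathbf P_b^\top$, and the same union bounds combined with the two bounds from Lemma~\ref{lemm:hd_good}. The only deviation is in the final tail estimates. For the largest eigenvalue you use $\lambda_{\max}(\mathbf G\mathbf G^\top)\le\operatorname{tr}(\mathbf G\mathbf G^\top)\sim\chi^2_{nk}$ with a Chernoff bound, and you derive the small-ball bound from the joint Wishart eigenvalue density, where the paper cites Edelman's Lemmas 4.1 and 4.2; your version gives the same exponents and the constant $100$, and is slightly more self-contained.
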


\begin{proof}

	Since $p>n$, $\operatorname{rank}(\mathbf S)=n$ almost surely.
	Hence, throughout the following arguments, we work on the event $\{\operatorname{rank}(\mathbf S)=n\}$, whose probability is one. 
	
	Rewrite the spectral decomposition of the pooled sample covariance matrix as $\mathbf S=\mathbf U_n\boldsymbol\Lambda_n\mathbf U_n^\top$,
	where $\mathbf U_n\in\mathbb R^{p\times n}$ has orthonormal columns and $\boldsymbol\Lambda_n
	= \operatorname{diag}(\lambda_1,\ldots,\lambda_n)$
	contains the positive eigenvalues of $\mathbf S$.
	Let $\mathbf G=\mathbf P_{b}\mathbf U_n$.
	Since $\mathbf P_{b}$ has independent standard normal entries and
	$\mathbf U_n$ has orthonormal columns, conditional on $\mathcal D$,
	\(\mathbf G\in\mathbb R^{k\times n}\) has independent standard normal entries.
	
	{\bf Proof of Lemma \eqref{lemm:cond_tail_hd} (i):}

	Notice that since $\mathbf P_{b}\mathbf S\mathbf P_{b}^\top
	= \mathbf G\boldsymbol\Lambda_n\mathbf G^\top$,
	we have
	\[
	\lambda_{\min}
	(\mathbf P_{b}\mathbf S\mathbf P_{b}^\top)
	\ge
	\lambda_{\min}^+(\mathbf S)
	\lambda_{\min}(\mathbf G\mathbf G^\top).
	\]
	Since $k<n$, the matrix $\mathbf G\mathbf G^\top$ is nonsingular almost surely and then
	\[
	T_{b}^2
	=c(\mathbf P_{b}\mathbf d)^\top
	(\mathbf P_{b}\mathbf S\mathbf P_{b}^\top)^{-1}
	(\mathbf P_{b}\mathbf d)
	\le
	\frac{
		c\|\mathbf P_{b}\mathbf d\|^2
	}{
		\lambda_{\min}^+(\mathbf S)
		\lambda_{\min}(\mathbf G\mathbf G^\top)
	}.
	\]
	
	Fixing $L>0$, on the event 
	$\{\|\mathbf P_{b}\mathbf d\|^2\le L\|\mathbf d\|^2, 
	t\lambda_{\min}(\mathbf G\mathbf G^\top)
	\ge cL\|\mathbf d\|^2/\lambda_{\min}^+(\mathbf S)\}$, we have $T_{b}^2\le t$.
	Hence, by the union bound,
	\begin{align}\label{eq:lemm2_t2}
		\mathbb P(T_{b}^2>t\mid\mathcal D)
		&\le
		\mathbb P\left(
		\|\mathbf P_{b}\mathbf d\|^2>L\|\mathbf d\|^2
		\mid\mathcal D
		\right)
		+ \mathbb P\left(
		\lambda_{\min}(\mathbf G\mathbf G^\top)
		<\frac{cL\|\mathbf d\|^2}
		{t\lambda_{\min}^+(\mathbf S)}
		\mid\mathcal D
		\right).
	\end{align}
	Conditional on $\mathcal D$, since $\mathbf P_{b}$ has independent standard normal entries, we have that $\|\mathbf P_{b}\mathbf d\|^2/\|\mathbf d\|^2$ follows the chi-square distribution with the degree $k$.
	Thus, it follows that
	\begin{align}\label{eq:lemm2_pd}
		\mathbb P\left(
		\|\mathbf P_{b}\mathbf d\|^2>L\|\mathbf d\|^2
		\mid\mathcal D
		\right)
		=\mathbb P(\chi_k^2>L).
	\end{align}
	Meanwhile, from Lemma 1 in \cite{2000Laurent}, for any $u>0$, we have $\mathbb{P}\big(\chi_k^2 > k + 2\sqrt{ku} + 2u\big) \le \exp\{-u\}$.
	Taking $u = L/8$, for sufficiently large $L$ satisfying $L>8k$, it follows that $k + 2\sqrt{ku} + 2u\le L$ and then $\mathbb P(\chi_k^2>L) \le \exp\{-L/8\}$. Therefore, as $L\to\infty$, we have
	\begin{align}
		P(\chi_k^2>L) = O(\exp\{-L/8\}).
	\end{align}

	On the event $\mathcal G_{p,n}$, Lemma \ref{lemm:hd_good} gives
	$c\|\mathbf d\|^2/\lambda_{\min}^+(\mathbf S) \le 80n\lambda_{\max}(\boldsymbol{\Sigma})/\lambda_{\min}(\boldsymbol{\Sigma})$.
	Consequently, on the event $\mathcal G_{p,n}$, we have
	\begin{align}
		\mathbb P\left(
		\lambda_{\min}(\mathbf G\mathbf G^\top)
		<\frac{cL\|\mathbf d\|^2}
		{t\lambda_{\min}^+(\mathbf S)}
		\mid\mathcal D
		\right) \le
		\mathbb P\left\{
		\lambda_{\min}(\mathbf G\mathbf G^\top)
		<
		\frac{80n\lambda_{\max}(\boldsymbol{\Sigma})L}{\lambda_{\min}(\boldsymbol{\Sigma})t}\mid\mathcal D
		\right\}.
	\end{align}
	Notice that $\mathbf G\mathbf G^\top$ follows the Wishart distribution $W_k(\mathbf I_k,n)$.
	From the probability density function of the minimum eigenvalue for the Wishart distribution \citep[Lemma 4.1]{Edelman1989Thesis}, as $u\to 0^+$, we have 
	$\mathbb P\{\lambda_{\min}(\mathbf G\mathbf G^\top)<u\mid\mathcal D\}=O(u^{(n-k+1)/2})$.
	Therefore, if $nL/t\to0$, it follows that
	\begin{align}\label{eq:lemm2_gg}
		\mathbb P\left\{
		\lambda_{\min}(\mathbf G\mathbf G^\top)
		<\frac{80n\lambda_{\max}(\boldsymbol{\Sigma})L}{\lambda_{\min}(\boldsymbol{\Sigma})t}\mid\mathcal D\right\}
		=O\left[\left(\frac{80n\lambda_{\max}(\boldsymbol{\Sigma})L}{\lambda_{\min}(\boldsymbol{\Sigma})t}\right)^{(n-k+1)/2}	\right].
	\end{align}
	Combining \eqref{eq:lemm2_t2}-\eqref{eq:lemm2_gg}, we finish the proof of (i).
	
	{\bf Proof of Lemma \eqref{lemm:cond_tail_hd} (ii):}
	
	Similarly, we have
	$\lambda_{\max}
	(\mathbf P_{b}\mathbf S\mathbf P_{b}^\top)
	\le
	\lambda_{\max}(\mathbf S)
	\lambda_{\max}(\mathbf G\mathbf G^\top)$ 
	and then
	\[
	T_{b}^2
	=
	c(\mathbf P_{b}\mathbf d)^\top
	(\mathbf P_{b}\mathbf S\mathbf P_{b}^\top)^{-1}
	(\mathbf P_{b}\mathbf d)
	\ge
	\frac{
		c\|\mathbf P_{b}\mathbf d\|^2
	}{
		\lambda_{\max}(\mathbf S)
		\lambda_{\max}(\mathbf G\mathbf G^\top)
	}.
	\]
	Fixing $0<h<1$, on the event 
	$\{\|\mathbf P_{b}\mathbf d\|^2\ge h\|\mathbf d\|^2, 
	\lambda_{\max}(\mathbf G\mathbf G^\top)
	\le
	c h\|\mathbf d\|^2/(s\lambda_{\max}(\mathbf S))\}$, we have $T_{b}^2\ge s$.
	Hence, by the union bound, it follows that
	\begin{align}\label{eq:lemm4_t2}
		\mathbb P(T_{b}^2<s\mid\mathcal D)
		&\le
		\mathbb P\left(
		\|\mathbf P_{b}\mathbf d\|^2<h\|\mathbf d\|^2
		\mid\mathcal D
		\right)+
		\mathbb P\left\{
		\lambda_{\max}(\mathbf G\mathbf G^\top)
		>
		\frac{c h\|\mathbf d\|^2}
		{s\lambda_{\max}(\mathbf S)}\mid\mathcal D
		\right\}  \notag \\
		&= \mathbb P(\chi_k^2<h)+
		\mathbb P\left\{
		\lambda_{\max}(\mathbf G\mathbf G^\top)
		>
		\frac{c h\|\mathbf d\|^2}
		{s\lambda_{\max}(\mathbf S)}\mid\mathcal D
		\right\}.
	\end{align}
	
	For fixed $k$, as $h\to 0^+$, we have
	\begin{align}\label{eq:lemm4_chi<h}
		\mathbb{P} (\chi_k^2<h)  = \frac{1}{2^{k/2}\Gamma(k/2)}\int_0^h u^{k/2-1}(1+o(1)){\rm d}u = O(h^{k/2}).
	\end{align}
	
	Recall that $\mathbf G$ has i.i.d $N(0,1)$ entries and $\mathbf G\mathbf G^\top$ follows the Wishart distribution $W_k(\mathbf I_k,n)$ conditional on $\mathcal D$.
	From Lemma 4.2 in \cite{1988Edelman}, for any $u>0$, there exists a constant $C_{n,k}$ such that 
	\begin{align*}
		\mathbb P\big(
		\lambda_{\max}(\mathbf G\mathbf G^\top)
		\ge u\mid\mathcal D
		\big) 
		\le C_{n,k}\int_{u}^\infty  v^{(n+k-3)/2}\exp\{-v/2\} {\rm d}v.
	\end{align*}
	On $\mathcal G_{p,n}$, we have
	$c\|\mathbf d\|^2/\lambda_{\max}(\mathbf S)
	\ge
	n\lambda_{\min}(\boldsymbol\Sigma)/\{25\lambda_{\max}(\boldsymbol\Sigma)\}$ and then
	\begin{align}\label{eq:lemm4_lmax>}
		\mathbb P\left\{
		\lambda_{\max}(\mathbf G\mathbf G^\top)
		>\frac{c h\|\mathbf d\|^2}
		{s\lambda_{\max}(\mathbf S)}\mid\mathcal D\right\}
		&\le \mathbb P\left\{
		\lambda_{\max}(\mathbf G\mathbf G^\top)
		>
		\frac{n\lambda_{\min}(\boldsymbol\Sigma)h}{25\lambda_{\max}(\boldsymbol\Sigma) s}\mid\mathcal D\right\} \notag\\
		&= O\left(\exp\left\{-\frac{n\lambda_{\min}(\boldsymbol\Sigma)h}{100\lambda_{\max}(\boldsymbol\Sigma)s}\right\}\right).
	\end{align}
	Combining \eqref{eq:lemm4_t2}-\eqref{eq:lemm4_lmax>}, we finish the proof of (ii).
\end{proof}

\begin{lemm}\label{lemm:t_tail}
	Let $\mathcal{D} = \{\mathbf{x}_1,\ldots,\mathbf{x}_{n_1},\mathbf{y}_1,\ldots,\mathbf{y}_{n_2}\}$ denote the observed data and $T_{b}^2$ be the statistics as \eqref{eq:t_statistics} corresponding to the random projection matrices $\mathbf{P}_{b}$, $b=1,\cdots,B$.
	Suppose the Conditions \ref{cond:normal}-\ref{cond:projection} and $H_0$ hold. 
	Let $c = n_1 n_2/(n_1+n_2)$, $\mathbf{d} = \bar{\mathbf{x}} - \bar{\mathbf{y}}$ and $t>0$. Then, as $t\to\infty$, 
	\begin{align}\label{eq:pt2}
		\mathbb P(T_{b}^2>t) \asymp t^{-(n-k+1)/2}. 
	\end{align}
\end{lemm}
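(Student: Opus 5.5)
The plan is to use the exact conditional null distribution. Under $H_0$ and Conditions~\ref{cond:normal}--\ref{cond:projection}, conditional on $\mathbf P_b$ the projected observations are Gaussian in $\mathbb R^k$ with a common nonsingular covariance $\mathbf P_b\boldsymbol\Sigma\mathbf P_b^\top$. The classical Hotelling result then gives that $(n-k+1)T_b^2/(nk)$ given $\mathbf P_b$ follows $F(k,n-k+1)$. This law does not depend on $\mathbf P_b$, so the unconditional law of $(n-k+1)T_b^2/(nk)$ is also exactly $F(k,n-k+1)$. Hence
\[
\mathbb P(T_b^2>t)=1-F_{k,n-k+1}\Big(\frac{(n-k+1)t}{nk}\Big),
\]
and the statement reduces to a tail estimate for a fixed $F$ distribution with $n,k$ fixed.

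Next I would bound the $F$ tail. Write $F_0=(\chi_k^2/k)/(\chi_{n-k+1}^2/(n-k+1))$ with independent numerator and denominator, and set $m=n-k+1$. The density of $F(k,m)$ is
\[
f(x)=C_{k,m}\,x^{k/2-1}(1+kx/m)^{-(k+m)/2}.
\]
As $x\to\infty$ this behaves like $C'_{k,m}x^{-m/2-1}$ with an explicit constant $C'_{k,m}=C_{k,m}(m/k)^{(k+m)/2}>0$. Integrating from $x$ to $\infty$ gives $\mathbb P(F_0>x)\sim (2C'_{k,m}/m)\,x^{-m/2}$. Substituting $x=(n-k+1)t/(nk)$, which is a fixed positive multiple of $t$ when $n$ and $k$ are fixed, yields $\mathbb P(T_b^2>t)\sim C t^{-(n-k+1)/2}$ for some $C\in(0,\infty)$. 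In particular $\mathbb P(T_b^2>t)\asymp t^{-(n-k+1)/2}$.

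The only point needing care is the passage from the density asymptotic to the tail asymptotic. I would handle it by dominated convergence, or by l'Hôpital's rule applied to the ratio $\mathbb P(F_0>x)/x^{-m/2}$. Alternatively, the tail can be read off directly: $\mathbb P(F_0>x)$ is governed by the small-ball event that $\chi_m^2$ falls below $mx^{-1}\chi_k^2/k$, and $\mathbb P(\chi_m^2<u)\sim c\,u^{m/2}$ as $u\to0^+$ (the same computation as in \eqref{eq:lemm4_chi<h}). Taking the expectation over $\chi_k^2$, whose moments of all orders are finite, then gives the $x^{-m/2}$ rate.

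I expect no genuine obstacle. The one subtlety is to make sure the exactness of the conditional $F$ law, guaranteed by the almost-sure nonsingularity of $\mathbf P_b\boldsymbol\Sigma\mathbf P_b^\top$, is invoked explicitly, so that averaging over $\mathbf P_b$ is trivial.
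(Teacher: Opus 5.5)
Your proposal is correct and follows essentially the same route as the paper: the exact conditional $F(k,n-k+1)$ law of $(n-k+1)T_b^2/(nk)$ does not depend on $\mathbf P_b$, so it is also the unconditional law, and the tail bound then follows by integrating the density asymptotic $f_Z(z)\asymp z^{-(n-k+1)/2-1}$. Your version is slightly sharper, since it gives $\mathbb P(T_b^2>t)\sim C\, t^{-(n-k+1)/2}$ with an explicit constant, which implies the stated $\asymp$.
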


\begin{proof} 
	
	Under null hypothesis and Condition (C1), $(n-k+1)T_{b}^2/(nk)$ conditioned on $\mathbf{P}_{b}$ follows $F(k,n-k+1)$. 
	Since $\mathbb{P}(T_{b}^2\le t) = \mathbb{E}[\mathbb{P}(T_{b}^2\le t \mid \mathbf{P}_{b})]$,
	the unconditional distribution of $T_{b}^2$ is also $F(k,n-k+1)$.
	
	Let $Z = (n-k+1)T_{b}^2/(nk)$, and then $Z \sim F(k, n-k+1)$. The probability density function of $Z$ is defined as $f_Z(\cdot)$ such that
	\[
	f_Z(z) = C \Big(\frac{k}{n-k+1}\Big)^{k/2} z^{k/2-1} \Big(1 + \frac{k}{n-k+1}z\Big)^{-\frac{n+1}{2}}, \quad z > 0, 
	\]
	where $C=\Gamma\left((n+1)/2\right)/\{\Gamma\left(k/2\right)\Gamma\left((n-k+1)/2\right)\}$.
	As $z \to \infty$, $f_Z(z) \asymp  z^{-(n-k+1)/2-1}$, and then
	\begin{align*}
		\mathbb{P}(Z>z) = \int_z^\infty f_Z(u) {\rm d}u \asymp z^{-(n-k+1)/2}.
	\end{align*}
	Therefore, $P(T_{b}^2>t) = \mathbb{P}(Z>(n-k+1)t/(nk))\asymp t^{-(n-k+1)/2}$, which completes the proof.
	
\end{proof}

\subsection{Proof of Theorem \ref{theo:null1}}

\subsubsection{Proof of Theorem \ref{theo:null1}}
\begin{proof} 
	
	Notice that $(n-k+1)T_{b}^2/(nk)$ follows $F(k,n-k+1)$ for $1\le b\le B$ under Condition (C1) and the null hypothesis. 
	Define
	\[
	U_b:=\phi(p_b)=\tan\{(0.5-p_b)\pi\},
	\qquad b=1,\ldots,B,
	\]
	where $p_b = 1-F_{k,n-k+1}((n-k+1)T_{b}^2/(nk))$ denotes the $p$-value corresponding to $T_{b}^2$. 
	Consequently, $U_b$ follows the standard Cauchy distribution under $H_0$ for $1\le b\le B$. 
	From Lemma \ref{lemm:u_quasi_hd}, $\{U_b, b=1,\ldots,B\}$ has pairwise asymptotic tail independence. Specifically, as $u\to\infty$ and $p/\log u\to\infty$, for $1\le i\not= j \le B$ and $a\in (0,1)$, we have 
	\begin{align}
		&\mathbb P(U_i>u,U_j>u^a) = o(\mathbb P(U_i>u)), \label{eq:vij12}\\
		&\mathbb P(U_i>u,U_j<-u^a) = o(\mathbb P(U_i>u)). \label{eq:vij13}
	\end{align} 
	
	Then we consider $A_{b,u}^{(1)}=\{U_b/B>(1+\delta_u)u, T_{\text{CRPT}}>u\}$ and $A_{b,u}^{(2)}=\{ U_b/B\le(1+\delta_u)u,T_{\text{CRPT}}>u\}$, where constant $\delta_u>0$ satisfies that $\delta_u\asymp u^{-1/2}$ as $t\to\infty$. Let $A_u^{(1)} = \bigcup_{b=1}^BA_{b,u}^{(1)}$ and $A_u^{(2)} = \bigcap_{b=1}^BA_{b,u}^{(2)}$. 
	Notice that since $T_{\text{CRPT}} = \sum_{b=1}^B U_b/B$, we have
	\begin{align*}
	\mathbb P(T_{\text{CRPT}}>u) 
	= \mathbb P(A_u^{(1)}) + \mathbb P(A_u^{(2)}).
	\end{align*}
	From Eq. (4) in the Supplement of \cite{cct}, to establish $\mathbb P(A_u^{(2)}) = o(1/u)$, it suffices to prove for any $1\le i\not=j\le B$,
	\begin{align*}
	I := \mathbb P(u<U_i\le B(1-\delta_u)u, U_j>B\delta_u u/(B-1)) = o(1/u).
	\end{align*}
	From \eqref{eq:vij12}, we have $I \le \mathbb P(U_i>u, U_j>\delta_u u) = o(\mathbb P(U_i>u)) = o(1/u)$, and then $\mathbb P(A_u^{(2)}) = o(1/u)$ holds.
	
	By Bonferroni inequality \citep[Lemma 1]{cct}, we have
	\begin{align*}
		\sum_{b=1}^B \mathbb P(A_{b,u}^{(1)}) -
		\sum_{1\le i<j\le B}\mathbb P(A_{i,u}^{(1)}\cap A_{j,u}^{(1)})
		\le
		 \mathbb P(A_u^{(1)})
		\le
		\sum_{b=1}^B \mathbb P(A_{b,u}^{(1)}).
	\end{align*}
	Notice that $\mathbb P(A_{i,u}^{(1)}\cap A_{j,u}^{(1)}) \le \mathbb P(U_i>B(1+\delta_u)u, U_j>B(1+\delta_u)u) = o(1/u)$.
	And
	\begin{align*}
		\mathbb P(A_{i,u}^{(1)}) 
		& = \mathbb P(U_b> B(1+\delta_u)u) - \mathbb P(U_b> B(1+\delta_u)u, T_{\text{CRPT}}\le u) \\
		& = \frac{1}{B\pi u} - \mathbb P(U_b> B(1+\delta_u)u, T_{\text{CRPT}}\le u) + o(1/u).
	\end{align*}
	The event $\{U_b> B(1+\delta_u)u, T_{\text{CRPT}}\le u\}$ imples that there exists at least one $j\not=b$ such that $U_j\le -B\delta_u u/(B-1)$. Therefore, with \eqref{eq:vij13}, we have 
	\begin{align*}
		\mathbb P(U_b>B(1+\delta_u)u,\ T_{\rm CRPT}\le u)
		\le \sum_{j\ne b}\mathbb P\left(
		U_b>B(1+\delta_u)u,\,
		U_j\le -\frac{B\delta_u u}{B-1}\right) = o(1/u).
	\end{align*}
	And then 
	\begin{align*}
	\mathbb P(T_{\text{CRPT}}>u) 
	= \mathbb P(A_u^{(1)}) + \mathbb P(A_u^{(2)})
	= 1/(\pi u) + o(1/u) \sim \mathbb P(W_C>u),
	\end{align*}
	which completes the proof.
\end{proof}

\subsubsection{Technical Lemmas for Theorem \ref{theo:null1}}

\begin{lemm}\label{lemm:t_quasi_hd}
	Suppose Conditions \ref{cond:normal}-\ref{cond:eigenvalues} hold and $H_0$ is true.
	Let $T_{i}^2$ and $T_{j}^2$ be the statistics as \eqref{eq:t_statistics} corresponding to the random projection matrices $\mathbf{P}_{i}$ and $\mathbf{P}_{j}$, $1\le i\not=j\le B$.
	
	(i) Let $t_2\asymp t_1^a$ for $a\in(0,1)$. Then, for fixed $n$, as $t_1\to\infty$ and $p/\log t_1\to\infty$,
	\begin{align}
	\mathbb P(T_{i}^2>t_1,T_{j}^2>t_2)
	=
	o\{\mathbb P(T_{i}^2>t_1)\}.
	\end{align}
	
	(ii) Let $t_3\asymp t_1^{-a(n-k+1)/k}$
	for $a\in(0,1)$. Then, for fixed $n$, as $t_1\to\infty$ and $p/\log t_1\to\infty$,
	\begin{align}
	\mathbb P(T_{i}^2>t_1,T_{j}^2<t_3)
	=
	o\{\mathbb P(T_{i}^2>t_1)\}.
	\end{align}
\end{lemm}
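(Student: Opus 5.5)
We use the same conditioning device as in the proof of Theorem~\ref{theo:t_tail}. Given $\mathcal D$, the statistics $T_i^2$ and $T_j^2$ are conditionally i.i.d. Writing $\pi_t(\mathcal D)=\mathbb P(T_b^2>t\mid\mathcal D)$ and $\rho_s(\mathcal D)=\mathbb P(T_b^2<s\mid\mathcal D)$, we obtain
\begin{align*}
\mathbb P(T_i^2>t_1,T_j^2>t_2)&=\mathbb E\{\pi_{t_1}(\mathcal D)\pi_{t_2}(\mathcal D)\},\\
\mathbb P(T_i^2>t_1,T_j^2<t_3)&=\mathbb E\{\pi_{t_1}(\mathcal D)\rho_{t_3}(\mathcal D)\}.
\end{align*}
We split each expectation over $\mathcal G_{p,n}$ and $\mathcal G_{p,n}^c$. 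Both factors are bounded by one, so the complement contributes at most $4e^{-p/32}$ by Lemma~\ref{lemm:hd_good}. Since $p/\log t_1\to\infty$, this is $o(t_1^{-m})$ for every fixed $m$, in particular $o(t_1^{-(n-k+1)/2})$. By Lemma~\ref{lemm:t_tail}, that is $o\{\mathbb P(T_i^2>t_1)\}$. It therefore suffices to show
\[
\sup_{\mathcal D\in\mathcal G_{p,n}}\pi_{t_1}(\mathcal D)\,\pi_{t_2}(\mathcal D)=o(t_1^{-(n-k+1)/2})
\]
in part (i), and the analogous statement with $\rho_{t_3}$ in part (ii).

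For (i), apply Lemma~\ref{lemm:cond_tail_hd}(i) to both factors, writing $m=(n-k+1)/2$. With $L_1=t_1^{\beta}$ we get $\pi_{t_1}=O(e^{-t_1^\beta/8}+t_1^{-(1-\beta)m})$. With $L_2=t_2^{\beta}$ and $t_2\asymp t_1^a$ we get $\pi_{t_2}=O(e^{-t_1^{a\beta}c}+t_1^{-a(1-\beta)m})$. The product is $O(t_1^{-(1+a)(1-\beta)m})$ up to terms decaying faster than any power. Choose $\beta>0$ small enough that $(1+a)(1-\beta)>1$, which is possible since $a>0$. The product is then $o(t_1^{-m})$, uniformly on $\mathcal G_{p,n}$, as required.

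For (ii), bound $\pi_{t_1}$ as above and $\rho_{t_3}$ by Lemma~\ref{lemm:cond_tail_hd}(ii) with $s=t_3\to0$. Take $h=t_3^{\gamma}$ with $\gamma\in(0,1)$, so that $h\to0$ and $h/s=t_3^{\gamma-1}\to\infty$. Then
\[
\rho_{t_3}=O\big(t_3^{\gamma k/2}+\exp\{-c\,t_3^{\gamma-1}\}\big)=O\big(t_1^{-a\gamma m}\big),
\]
using $t_3^{k/2}\asymp t_1^{-am}$. The product is $O(t_1^{-(1-\beta)m-a\gamma m})$. Choosing $\gamma$ close to $1$ and $\beta$ small gives $(1-\beta)+a\gamma>1$, and the conclusion follows.

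The main obstacle is the uniformity of the $O(\cdot)$ constants over $\mathcal D\in\mathcal G_{p,n}$. This should follow because Lemma~\ref{lemm:cond_tail_hd} reduces to fixed Wishart and $\chi^2$ tail laws, which do not depend on $\mathcal D$. A second point to check is that the regime conditions of Lemma~\ref{lemm:cond_tail_hd} hold: $L\to\infty$ with $nL/t\to0$ for $n$ fixed. This is immediate for $L=t^\beta$.
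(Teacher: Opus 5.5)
Your proposal is correct and follows essentially the same route as the paper's proof. You condition on $\mathcal D$ to factor the joint probability, bound the contribution of $\mathcal G_{p,n}^c$ by $4e^{-p/32}=o(t_1^{-(n-k+1)/2})$, and on $\mathcal G_{p,n}$ apply Lemma~\ref{lemm:cond_tail_hd} with $L=t^{\beta}$ (and $h=t_3^{\gamma}$ in part (ii)). This makes the product of the two conditional tails decay like $t_1^{-\kappa(n-k+1)/2}$ with $\kappa>1$. The differences are cosmetic: you use a common $\beta$ for both factors in (i), where the paper allows separate $\beta_1,\beta_2$. Your uniformity remark is exactly what justifies the paper's $\sup_{\mathcal D\in\mathcal G_{p,n}}$ step.
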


\begin{proof}
	Since the projection matrices are independent, $T_{i}^2$ and $T_{j}^2$ are independent conditional on $\mathcal D$. Therefore, we have
	\begin{align}\label{eq:lemmt_titj}
	\mathbb P(T_{i}^2\in I_1,T_{j}^2\in I_2)
	=
	\mathbb E\left[
	\mathbb P(T_{i}^2\in I_1\mid\mathcal D)
	\mathbb P(T_{j}^2\in I_2\mid\mathcal D)
	\right]
	\end{align}
	for any Borel sets $I_1,I_2$.

	{\bf Proof of Lemma \eqref{lemm:t_quasi_hd} (i):}

	Let 
	$0<\beta_1,\beta_2<1$ be chosen such that
	$(1-\beta_1)+a(1-\beta_2)>1$.
	By Lemma~\ref{lemm:cond_tail_hd} (i), for
	$\mathcal D\in\mathcal G_{p,n}$ and $p\ge 4n$, as $t_1\to\infty$ and $t_2\to\infty$, we have
	\begin{align*}
	&\mathbb P(T_{i}^2>t_1\mid\mathcal D)
	=O\big[
	\exp\{-t_1^{\beta_1}/8\}
	+t_1^{-(1-\beta_1)(n-k+1)/2}\big]
	=O\big[t_1^{-(1-\beta_1)(n-k+1)/2}\big],\\
	&\mathbb P(T_{j}^2>t_2\mid\mathcal D)
	=O\big[
	\exp\{-t_2^{\beta_2}/8\}
	+t_2^{-(1-\beta_2)(n-k+1)/2}\big]
	=O\big[t_2^{-(1-\beta_2)(n-k+1)/2}\big].
	\end{align*}
	Since $t_2\asymp t_1^a$, for
	$\mathcal D\in\mathcal G_{p,n}$, it follows that
	\begin{align}\label{eq:lemmt_td}
	\mathbb P(T_{i}^2>t_1\mid\mathcal D)\mathbb P(T_{j}^2>t_2\mid\mathcal D)
	= O\big(
	t_1^{-\{(1-\beta_1)+a(1-\beta_2)\}(n-k+1)/2}\big).
	\end{align}
	By Lemma~\ref{lemm:hd_good}, we have $\mathbb P(\mathcal G_{p,n}^c)\le 4e^{-p/32}$. 
	With \eqref{eq:lemmt_titj} and \eqref{eq:lemmt_td}, it follows that
	\begin{align}\label{eq:lemmt_titjo}
		\mathbb P(T_{i}^2>t_1,T_{j}^2>t_2)
		=&	\mathbb E\left[
		\mathbb P(T_{i}^2>t_1\mid\mathcal D)
		\mathbb P(T_{j}^2>t_2\mid\mathcal D)
		\right] \notag\\
		\le&
		\sup_{\mathcal D\in\mathcal G_{p,n}}
		\mathbb P(T_{i}^2>t_1\mid\mathcal D)\mathbb P(T_{j}^2>t_2\mid\mathcal D)
		+
		\mathbb P(\mathcal G_{p,n}^c) \notag\\
		=&O\big(
		t_1^{-\{(1-\beta_1)+a(1-\beta_2)\}(n-k+1)/2}+e^{-p/32}\big).
	\end{align}
	Meanwhile, by Lemma~\ref{lemm:t_tail}, we have
	$\mathbb P(T_{i}^2>t_1)
	\asymp t_1^{-(n-k+1)/2}$. 
	Recall that $(1-\beta_1)+a(1-\beta_2)>1$ and $p/\log t_1\to\infty$, with \eqref{eq:lemmt_titjo}, we have
	\begin{align*}
	\mathbb P(T_{i}^2>t_1,T_{j}^2>t_2)
	=o\big(t_1^{-(n-k+1)/2}\big) = o\big(\mathbb P(T_{i}^2>t_1)\big),
	\end{align*}
	which completes the proof of Lemma \eqref{lemm:t_quasi_hd} (i).

	{\bf Proof of Lemma \eqref{lemm:t_quasi_hd} (ii):}
	
	Let $0<\beta_3<1$, $0<\beta_4<1$ such that $a\beta_4>\beta_3$.
	Similarly, by Lemma~\ref{lemm:cond_tail_hd}, for
	$\mathcal D\in\mathcal G_{p,n}$ and $p\ge 4n$, as $t_1\to\infty$ and $t_3\to 0^+$, we have
	\begin{align*}
		\mathbb P(T_{i}^2>t_1\mid\mathcal D)
		=O\big[t_1^{-(1-\beta_3)(n-k+1)/2}\big]
		\quad {\rm and}\quad \mathbb P(T_{j}^2<t_3\mid\mathcal D)
		=O\big[t_3^{\beta_4k/2}\big].
	\end{align*}
	Since $t_3\asymp t_1^{-a(n-k+1)/k}$,
	it follows that $t_3^{\beta_4 k/2}
	\asymp t_1^{-a\beta_4 (n-k+1)/2}$.
	Thus, for $\mathcal D\in\mathcal G_{p,n}$, we have
	\begin{align}\label{eq:lemmt_titj2}
	\mathbb P(T_{i}^2>t_1\mid\mathcal D)\mathbb P(T_{j}^2<t_3\mid\mathcal D)
	=	O\big(
	t_1^{-(1-\beta_3+a\beta_4)(n-k+1)/2}	\big).
	\end{align}
	By Lemma~\ref{lemm:hd_good}, we have $\mathbb P(\mathcal G_{p,n}^c)\le 4e^{-p/32}$. 
	With \eqref{eq:lemmt_titj} and \eqref{eq:lemmt_titj2}, it follows that
	\begin{align}\label{eq:lemmt_titjo2}
		\mathbb P(T_{i}^2>t_1,T_{j}^2<t_3)
		=&	\mathbb E\left[
		\mathbb P(T_{i}^2>t_1\mid\mathcal D)
		\mathbb P(T_{j}^2<t_3\mid\mathcal D)
		\right] \notag\\
		\le&
		\sup_{\mathcal D\in\mathcal G_{p,n}}
		\mathbb P(T_{i}^2>t_1\mid\mathcal D)\mathbb P(T_{j}^2<t_3\mid\mathcal D)
		+
		\mathbb P(\mathcal G_{p,n}^c) \notag\\
		=&O\big(
		t_1^{-(1-\beta_3+a\beta_4)(n-k+1)/2}+e^{-p/32}\big).
	\end{align}
	Recall that $a\beta_4>\beta_2$ and then
	$1-\beta_2+a\beta_4>1$.
	Hence, as $t_1\to\infty$ and $p/\log t_1\to\infty$, we obtain
	\begin{align*}
	\mathbb P(T_{i}^2>t_1,T_{j}^2<t_3)
	=   o(t_1^{-(n-k+1)/2})
	=	o\{\mathbb P(T_{i}^2>t_1)\},
	\end{align*}
	which completes the proof.
\end{proof}

\begin{lemm}\label{lemm:u_quasi_hd}
	Suppose Conditions \ref{cond:normal}-\ref{cond:eigenvalues} hold and $H_0$ is true. 
	Let $T_{i}^2$ and $T_{j}^2$ be the Hotelling statistics corresponding to two independent random projection matrices $\mathbf P_{i}$ and $\mathbf P_{j}$, $1\le i \not= j\le B$. 
	Let $U_b=\phi(p_b)=\tan\{(0.5-p_b)\pi\}$ for $b=1,\ldots,B$, where $p_b$ denotes the $p$-value corresponding to $T_{b}^2$. 
	Then, for any $a\in(0,1)$ and fixed $n$, as $u\to\infty$ and $p/\log u\to\infty$, we have
	\begin{align}
	&\mathbb P(U_i>u,U_j>u^a)
	=
	o\{\mathbb P(U_i>u)\}, \\
	&\mathbb P(U_i>u,U_j<-u^a)
	=
	o\{\mathbb P(U_i>u)\}.
	\end{align}
\end{lemm}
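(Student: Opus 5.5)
The plan is to reduce both statements to Lemma~\ref{lemm:t_quasi_hd} by translating thresholds on $U_b$ into thresholds on $T_b^2$ through the monotone map $U_b=\phi(p_b)$. Write $m=n-k+1$ and $G(t)=1-F_{k,m}(mt/(nk))$, so that $p_b=G(T_b^2)$. Since $\phi$ is strictly decreasing and $G$ is strictly decreasing, $U_b$ is a strictly increasing function of $T_b^2$. Hence $\{U_i>u\}=\{T_i^2>t_1(u)\}$ with $t_1(u)=G^{-1}(\phi^{-1}(u))$. Under $H_0$ each $U_b$ is standard Cauchy, so $\mathbb P(U_i>u)=\mathbb P(T_i^2>t_1)\sim 1/(\pi u)$.

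\textbf{Asymptotics of the thresholds.} For large $u$ we have $\phi^{-1}(u)=\arctan(1/u)/\pi\sim 1/(\pi u)$. Lemma~\ref{lemm:t_tail} gives $G(t)\asymp t^{-m/2}$. Therefore
\[
t_1(u)\asymp u^{2/m},
\]
and in particular $\log t_1\asymp\log u$. This lets us replace the condition $p/\log u\to\infty$ by $p/\log t_1\to\infty$.

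\textbf{First statement.} The event $\{U_j>u^a\}$ equals $\{T_j^2>t_2\}$ with $t_2\asymp (u^a)^{2/m}\asymp t_1^a$. Lemma~\ref{lemm:t_quasi_hd}(i) then applies directly.

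\textbf{Second statement.} The event $\{U_j<-u^a\}$ corresponds to $p_j>1-\phi^{-1}(-u^a)$, i.e.\ $1-p_j\lesssim u^{-a}$. This is the lower tail of $T_j^2$: $\{T_j^2<t_3\}$ with $F_{k,m}(mt_3/(nk))\sim 1/(\pi u^a)$. Near zero the $F$ density behaves like $z^{k/2-1}$, so $F_{k,m}(z)\asymp z^{k/2}$, which gives
\[
t_3\asymp u^{-2a/k}\asymp t_1^{-am/k}.
\]
This is exactly the regime of Lemma~\ref{lemm:t_quasi_hd}(ii), which yields the claim.

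\textbf{Main obstacle.} The hard step is making the threshold correspondences exact enough. Lemma~\ref{lemm:t_quasi_hd} is stated for $t_2\asymp t_1^a$, but here we only know two-sided bounds with constants. I would handle this by monotonicity: $\{T_j^2>t_2\}\subset\{T_j^2>c\,t_1^a\}$ for a suitable constant $c$, and enlarging the event in $j$ only increases the probability. The same sandwiching works for $t_3$, using $\{T_j^2<t_3\}\subset\{T_j^2<Ct_1^{-am/k}\}$.

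I would also double-check that the proof of Lemma~\ref{lemm:t_quasi_hd}(ii) tolerates the constants. Its exponent condition $1-\beta_3+a\beta_4>1$ is insensitive to multiplicative constants, so this should not be a problem. Note also that the lemma's conclusion is stated relative to $\mathbb P(T_i^2>t_1)$, which equals $\mathbb P(U_i>u)$ exactly, so no further conversion is needed.
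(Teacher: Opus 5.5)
Your proposal is correct and follows the paper's proof essentially step for step. The paper likewise writes $U_b=g(T_b^2)$ with $g$ strictly increasing, obtains $g^{-1}(u)\asymp u^{2/(n-k+1)}$ and $g^{-1}(-u)\asymp u^{-2/k}$ from the upper and lower tails of $F(k,n-k+1)$, and then applies Lemma~\ref{lemm:t_quasi_hd}(i) with $t_2\asymp t_1^a$ and Lemma~\ref{lemm:t_quasi_hd}(ii) with $t_3\asymp t_1^{-a(n-k+1)/k}$. Your ``main obstacle'' is not a real obstacle, since Lemma~\ref{lemm:t_quasi_hd} only assumes these $\asymp$ relations (two-sided bounds up to constants), so your sandwiching step is harmless but redundant.
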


\begin{proof}
	Under null hypothesis, we have $(n-k+1)T_{b}^2/(nk)$ follows $F_{k,n-k+1}$.
	Thus, $p_b$ is uniform on $(0,1)$, and $U_b=\phi(p_b)$ follows the standard Cauchy distribution. For $t>0$, define
	\[
	g(t):=\tan\left[\pi\left\{
	F_{k,n-k+1}\left(\frac{n-k+1}{nk}t\right)
	-\frac12	\right\}\right].
	\]
	Since $p_b=1-F_{k,n-k+1}((n-k+1)T_{b}^2/(nk))$,
	we have
	\begin{align}
	U_b	= \phi(p_b) = g(T_{b}^2).
	\end{align}
	Notice that the function $g:(0,\infty)\to\mathbb R$ is continuous and strictly increasing. Moreover,
	as $t\to\infty$, $g(t)\asymp t^{(n-k+1)/2}$,
	and as $t\to0^+$, $g(t)\asymp -t^{-k/2}$.
	Let $r:=g^{-1}$ as the inverse function of $g$. Then $r(u)\asymp u^{2/(n-k+1)}$ as $u\to\infty$
	and $r(-u)\asymp u^{-2/k}$ as $u\to\infty$.

	Taking $t_1=r(u)$ and $t_2=r(u^a)$, we have $t_2\asymp t_1^a$.
	Therefore, Lemma~\ref{lemm:t_quasi_hd} (i) gives
	\[
	\mathbb P(U_i>u,U_j>u^a)
	= \mathbb P(T_{i}^2>t_1,T_{j}^2>t_2)
	= o\{\mathbb P(T_{i}^2>t_1)\}
	= o\{\mathbb P(U_i>u)\}.
	\]
	Similarly, taking $t_3=r(-u^a)$, we have $t_3\asymp u^{-2a/k}$.
	Notice that since $t_1\asymp u^{2/(n-k+1)}$, it follows that $t_3\asymp t_1^{-a(n-k+1)/k}$.
	Therefore, as $u\to\infty$ and $p/\log u\to\infty$, we have $t_1\to\infty$ and $p/\log t_1\to\infty$.
	By Lemma~\ref{lemm:t_quasi_hd} (ii), it follows that
	\[
	\mathbb P(U_i>u,U_j<-u^a)
	=\mathbb P(T_{i}^2>t_1,T_{j}^2<t_3)
	=o\{\mathbb P(T_{i}^2>t_1)\}
	=o\{\mathbb P(U_i>u)\},
	\]
	which completes the proof.
\end{proof}

\subsection{Proof of {\color{black}Lemma}~\ref{prop:power1}}

\subsubsection{Proof of {\color{black}Lemma}~\ref{prop:power1}}

\begin{proof}

Under alternative hypothesis $H_1$,
the statistic $T_{b}^2$ admits an non-central $F$ distribution as
\begin{align*}
	\frac{n-k+1}{nk}T_{b}^2\mid \mathbf{P}_{b},\boldsymbol\delta
	\sim
	F\left(k,n-k+1,\gamma_{b}\right),
\end{align*}
where the non-centrality parameter is given by
\begin{align}\label{eq:gamma_k}
	\gamma_{b}
	:=
	\frac{n_1 n_2}{n_1+n_2}
	\boldsymbol\delta^\top
	\mathbf{P}_{b}^\top
	\big(\mathbf{P}_{b}\boldsymbol\Sigma\mathbf{P}_{b}^\top\big)^{-1}
	\mathbf{P}_{b}\boldsymbol\delta 
	= \frac{n_1 n_2}{n_1+n_2}\Delta_{b}^2.
\end{align}
Since  $ n_1n_2/\{(n_1+n_2)k\} \to {\color{black}\pi_s(1-\pi_s)/\pi_d}$ and ${\color{black}\pi_s(1-\pi_s)\Delta_{b}^2/\pi_d} \stackrel{p}{\to} \gamma$,
we have $\gamma_{b}/k \stackrel{p}{\to} \gamma$ as $n\to\infty$. From Lemma \ref{lemm:f_noncentral}, as $n\to\infty$, we have 
\begin{align}\label{eq:fgamma0}
	\sqrt{n}\bigg(\frac{n-k+1}{nk}T_{b}^2 - (1+\gamma_{b}/k)\bigg) \stackrel{d}{\to} N(0,\sigma^2),
\end{align} 
where $\sigma^2=2(1+2\gamma)/{\color{black}\pi_d}+2(1+\gamma)^2/(1-{\color{black}\pi_d})$.

Let $G(\cdot)$ denote the conditional distribution function of
$(n-k+1)T_{b}^2/(nk)$ given $\mathbf P_{b}$ and $\boldsymbol\delta$. 
For any $\alpha\in(0,1)$, we have
\begin{align*}
	\mathbb P(p_b\le\alpha\mid \mathbf P_{b},\boldsymbol\delta)
	&=\mathbb P_\theta\Big(\frac{n-k+1}{nk}T_{b}^2
	\ge F_{k,n-k+1}^{-1}(1-\alpha)
	\mid \mathbf P_{b},\boldsymbol\delta\Big)\\
	&=1-G\big(F_{k,n-k+1}^{-1}(1-\alpha)\big),
\end{align*}
From \eqref{eq:fgamma0}, as $n\to\infty$, we have 
\begin{align*}
G(u) - \Phi\bigg(\frac{\sqrt{n}(u-(1+\gamma_{b}/k))}{\sigma}\bigg)
\to 0.
\end{align*}
From Remark \ref{remm:Fnull}, 
as $n\to\infty$, we have
\begin{align*}
	F_{k, n-k+1}^{-1}(1-\alpha) 
	&= 1 - \sqrt{\frac{2}{{\color{black}\pi_d}(1-{\color{black}\pi_d})n}}z_\alpha + o(n^{-1/2}).
\end{align*}
Therefore, under the alternative satisfying Condition \ref{cond:alternative_relaxed}, as $n\to\infty$, 
\begin{align*}
	\mathbb P(p_b\le\alpha\mid \mathbf P_{b},\boldsymbol\delta) 
- \Phi\bigg(\frac{(\sqrt{2/({\color{black}\pi_d}(1-{\color{black}\pi_d}))}z_\alpha+\sqrt{n}\gamma_{b}/k))}{\sigma}\bigg) 
\stackrel{p}{\to} 0,
\end{align*}
which completes the proof.
\end{proof}

\subsubsection{Technical Lemma for {\color{black}Lemma}~\ref{prop:power1}}

\begin{lemm}\label{lemm:f_noncentral}
Let $F_\gamma$ denote a random variable such that, conditional on $\gamma_{b}$,
\[
F_\gamma \mid \gamma_{b} \sim F(k,n-k+1,\gamma_{b}).
\]
Assume $k/n\to {\color{black}\pi_d} \in (0,1)$ and $\gamma_{b}/k \stackrel{p}{\to} \gamma\in [0,\infty)$.
As $n\to\infty$, we have
\begin{align}\label{eq:fgamma}
	\sqrt{n}(F_\gamma - (1+\gamma_{b}/k)) \stackrel{d}{\to} N(0,\sigma^2),
\end{align} 
where $\sigma^2=2(1+2\gamma)/{\color{black}\pi_d}+2(1+\gamma)^2/(1-{\color{black}\pi_d})$.
\end{lemm}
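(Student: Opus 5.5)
We will use the standard representation of the noncentral $F$ variable as a ratio of independent chi-square variables:
\[
F_\gamma \stackrel{d}{=} \frac{X/k}{Y/(n-k+1)}, \qquad X\mid\gamma_b\sim\chi_k^2(\gamma_b), \qquad Y\sim\chi^2_{n-k+1},
\]
with $X$ and $Y$ conditionally independent given $\gamma_b$. The proof then has three steps: a conditional CLT for the numerator and for the denominator, a delta-method linearization of the ratio, and removal of the conditioning on $\gamma_b$.

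\textbf{Numerator and denominator.} Write $X=\sum_{l=1}^k (Z_l+m_l)^2$ with $\sum_l m_l^2=\gamma_b$. Conditionally on $\gamma_b$,
\[
\mathbb E(X)=k+\gamma_b, \qquad \operatorname{Var}(X)=2k+4\gamma_b .
\]
We can take the means equal, $m_l=\sqrt{\gamma_b/k}$. Then $X$ is a sum of i.i.d. terms whose distribution depends on $k$ only through $\gamma_b/k$, and $\gamma_b/k$ is bounded in probability. A Lindeberg (or Lyapunov, via fourth moments) CLT therefore gives
\[
\sqrt{k}\,\big(X/k-1-\gamma_b/k\big)\stackrel{d}{\to}N\big(0,\,2(1+2\gamma)\big)
\]
along the conditional law. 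Since $k/n\to\pi_d$, this becomes
\[
\sqrt{n}\,\big(X/k-1-\gamma_b/k\big)\stackrel{d}{\to}N\big(0,\,2(1+2\gamma)/\pi_d\big).
\]
Similarly, $\sqrt{n}\,\big(Y/(n-k+1)-1\big)\stackrel{d}{\to}N\big(0,\,2/(1-\pi_d)\big)$, because $(n-k+1)/n\to 1-\pi_d$.

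\textbf{Linearizing the ratio.} Set $A=X/k$, $D=Y/(n-k+1)$ and $\mu=1+\gamma_b/k$. Then
\[
F_\gamma-\mu=\frac{(A-\mu)-\mu(D-1)}{D}.
\]
Since $D\stackrel{p}{\to}1$ and $\mu\stackrel{p}{\to}1+\gamma$, Slutsky's lemma together with the conditional independence of $A$ and $D$ gives a limiting normal law with variance
\[
\frac{2(1+2\gamma)}{\pi_d}+\frac{2(1+\gamma)^2}{1-\pi_d}=\sigma^2 .
\]

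\textbf{Main obstacle: the random noncentrality parameter.} The limit must hold unconditionally even though $\gamma_b$ is random and only satisfies $\gamma_b/k\to\gamma$ in probability. I would handle this through characteristic functions. Condition on $\gamma_b$, and show that on the event $\{|\gamma_b/k-\gamma|\le\epsilon_n\}$, with $\epsilon_n\to0$ slowly, the conditional characteristic function of $\sqrt n(F_\gamma-\mu)$ converges to $e^{-\sigma^2 s^2/2}$ uniformly. This uniformity follows from a Berry--Esseen bound for the numerator sum whose constants depend continuously on $\gamma_b/k$, which stays in a compact set. The complementary event has vanishing probability. Finally, dominated convergence applied to $\mathbb E\big[\mathbb E(e^{is\sqrt n(F_\gamma-\mu)}\mid\gamma_b)\big]$ yields \eqref{eq:fgamma}.
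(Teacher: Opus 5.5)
Your proposal is correct and follows essentially the same route as the paper: it writes $F_\gamma$ as a ratio of independent noncentral and central chi-squares, applies a CLT to the numerator and denominator, and linearizes the ratio to obtain $\sigma^2=2(1+2\gamma)/\pi_d+2(1+\gamma)^2/(1-\pi_d)$. Your handling of the random noncentrality is a more careful version of the paper's brief appeal to Slutsky's theorem: you use the equal-means representation and uniform convergence of the conditional characteristic functions on a shrinking neighborhood of $\gamma$, followed by dominated convergence.
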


\begin{proof}
We consider two independent random variables $Z_1$ and $Z_2$ such that $kZ_1\mid \gamma_{b} \sim \chi^2_k(\gamma_{b})$ and $(n-k+1)Z_2 \sim \chi^2_{n-k+1}$, where $\chi^2_k(\gamma_{b})$ denotes the non-central chi-square distribution with non-central parameter $\gamma_{b}$.
Define $g(z_1,z_2) := z_1/z_2$ and then $F_\gamma = g(Z_1,Z_2)$.

For the non-central chi-square variable $kZ_1$, we have
$(kZ_1-(k+\gamma_{b}))/\sqrt{2(k+2\gamma_{b})}\mid\gamma_{b} \xrightarrow{d} N(0,1)$ as $k \to \infty$. 
Then $\sqrt{n}\left( Z_1 - \theta_1 \right)\mid\gamma_{b} \xrightarrow{d} N\left(0,\sigma_{1,n}^2\right)$ as $n\to\infty$, 
where $\theta_1 = 1+\gamma_{b}/k$ and $\sigma_{1,n}^2 = 2(1+2\gamma_{b}/k)/{\color{black}\pi_d}$. Notice that since  $\sigma_{1,n}^2 \stackrel{p}{\to} \sigma_1^2$ with $\sigma_1^2 = 2(1+2\gamma)/{\color{black}\pi_d}$, invoking the Slutsky's theorem and we obtain
\begin{align}\label{eq:z1}
\sqrt{n}\left( Z_1 - \theta_1 \right) \xrightarrow{d} N\left(0,\sigma_1^2\right), \quad {\rm as \ }n\to\infty.
\end{align}
Similarly, for the chi-square variable $(n-k+1)Z_2$, we have
\begin{align}\label{eq:z2}
\sqrt{n}\left( Z_2 - \theta_2 \right) \xrightarrow{d} N\left(0,\sigma_2^2\right), \quad {\rm as \ }n\to\infty,
\end{align}
where $\theta_2 = 1$ and $\sigma_2^2 = 2/(1-{\color{black}\pi_d})$.

Notice that $Z_1 - \theta_1 = O_p(n^{-1/2})$ and $Z_2 - \theta_2 = O_p(n^{-1/2})$. 
As $n\to\infty$, the first-order Taylor expansion of $g$ at 
$(\theta_1,\theta_2)$ yields
\begin{align*}
	g(Z_1,Z_2)
	&= g(\theta_1,\theta_2)
	+ \nabla g(\theta_1,\theta_2)^\top
	\begin{pmatrix}
		Z_1-\theta_1\\
		Z_2-\theta_2
	\end{pmatrix}
	+ o_p(n^{-1/2}),
\end{align*}
where $\nabla g(z_1,z_2)^\top=(1/z_2, -z_1/z_2^2)$. 
Therefore, as $n\to\infty$, we obtain
\begin{align*}
	\sqrt{n}(F_\gamma - (1+\gamma_{b}/k))
	&=\sqrt{n}(Z_1-\theta_1)-\sqrt{n}(1+\gamma_{b}/k)(Z_2-\theta_2)
	+ o_p(1) \\
	&=\sqrt{n}(Z_1-\theta_1)-\sqrt{n}(1+\gamma)(Z_2-\theta_2)
	+ o_p(1).
\end{align*}
With \eqref{eq:z1}-\eqref{eq:z2}, it follows that
$\sqrt{n}\big(F_\gamma-(1+\gamma_{b}/k)\big)
\xrightarrow{d}
N\big(0,\sigma^2\big)$ with
$\sigma^2=\sigma_1^2+(1+\gamma)^2 \sigma_2^2
=2(1+2\gamma)/{\color{black}\pi_d}+2(1+\gamma)^2/(1-{\color{black}\pi_d})$, which completes the proof.

\end{proof}

\begin{remark}\label{remm:Fnull}
For Lemma \ref{lemm:f_noncentral}, if $\gamma_{b}=0$, we have
$\sigma^2 = 2/{\color{black}\pi_d} + 2/(1-{\color{black}\pi_d}) = 2/({\color{black}\pi_d}(1-{\color{black}\pi_d}))$,
and then
$\sqrt{{\color{black}\pi_d}(1-{\color{black}\pi_d})n/2} (F_0 - 1) \xrightarrow{d} N(0,1)$,
which coincides with the classical asymptotic normal approximation for the central $F$ distribution.
\end{remark}

\subsection{Proof of Lemma~\ref{lemm:delta_projection_bound}}

\begin{proof}
	Let
	\[
	\Pi_b
	=
	\mathbf P_{b}^\top
	(\mathbf P_{b}\mathbf P_{b}^\top)^{-1}
	\mathbf P_{b}.
	\]
	Then $\Pi_b$ is the orthogonal projection matrix onto the row space of $\mathbf P_{b}$.
	Notice that since $\lambda_{\min}(\boldsymbol\Sigma)\mathbf I_p
	\preceq
	\boldsymbol\Sigma
	\preceq
	\lambda_{\max}(\boldsymbol\Sigma)\mathbf I_p$, multiplying by $\mathbf P_{b}$ and $\mathbf P_{b}^\top$ gives
	\[
	\lambda_{\min}(\boldsymbol\Sigma)
	\mathbf P_{b}\mathbf P_{b}^\top
	\preceq
	\mathbf P_{b}\boldsymbol\Sigma\mathbf P_{b}^\top
	\preceq
	\lambda_{\max}(\boldsymbol\Sigma)
	\mathbf P_{b}\mathbf P_{b}^\top.
	\]
	Therefore,
	\begin{align}\label{eq:lemm_delta}
		\frac{1}{\lambda_{\max}(\boldsymbol\Sigma)}
		\frac{\boldsymbol\delta^\top\Pi_b\boldsymbol\delta}
		{\|\boldsymbol\delta\|_2^2}
		\le
		\frac{\Delta_{b}^2}{\|\boldsymbol\delta\|_2^2}
		\le
		\frac{1}{\lambda_{\min}(\boldsymbol\Sigma)}
		\frac{\boldsymbol\delta^\top\Pi_b\boldsymbol\delta}
		{\|\boldsymbol\delta\|_2^2}.
	\end{align}

	We next identify the distribution of $B_{k,p}:=\boldsymbol\delta^\top\Pi_b\boldsymbol\delta/\|\boldsymbol\delta\|_2^2$. 
	For any fixed orthogonal matrix $\mathbf O\in\mathbb R^{p\times p}$, each row of 
	$\mathbf P\mathbf O$ is still a $p$-dimensional standard normal vector. Hence
	$\mathbf P\mathbf O$ follows the same distribution as $\mathbf P$.
	Let $\mathbf u=\boldsymbol\delta/\|\boldsymbol\delta\|_2$.
	Choose an orthogonal matrix $\mathbf O$ such that $\mathbf O e_1=\mathbf u$, where 
	$e_1=(1,0,\ldots,0)^\top$. Then the distribution of
	$\mathbf u^\top \Pi_b \mathbf u$ is the same as that of
	$e_1^\top \Pi_b e_1$.
	Therefore, without loss of generality, we may take
	$\boldsymbol\delta=\|\boldsymbol\delta\|_2 e_1$.
	Now apply the Gram-Schmidt orthogonalization to the rows of $\mathbf P_{b}$. Let
	$q_1,\ldots,q_k\in\mathbb R^p$ be the orthonormal row vectors. Thus, we have $\Pi=\sum_{i=1}^k q_i q_i^\top$ and then
	\[
	B_{k,p}
	=
	e_1^\top\Pi e_1
	=
	\sum_{i=1}^k q_{i 1}^2,
	\]
	where $q_{i 1}$ denotes the first component of $q_i$.

	Let $Z_1,\ldots,Z_p$ be independent standard normal random variables, and then $(q_{11},q_{21},\ldots,q_{k1})$ follows the same distribution as 
	$(Z_1,\ldots,Z_k)^\top/\sqrt{Z_1^2+\cdots+Z_p^2}$.
	Therefore,
	\[
	B_{k,p}
	\stackrel d=
	\frac{Z_1^2+\cdots+Z_k^2}
	{Z_1^2+\cdots+Z_p^2}\sim {\rm Beta}\left(\frac{k}{2},\frac{p-k}{2}\right).
	\]

	Let $B_{k,p} = X/(X+Y)$, where $X$ and $Y$ are independent with $X\sim \chi^2(k)$ and $Y\sim \chi^2(p-k)$. For any $x<k/p$,
	$\{B_{k,p}\le x\}
	=
	\{(1-x)X-xY\le0\}$ holds.
	For any $\lambda>0$ with $\lambda<1/(2x)$, Chernoff's inequality gives
	\[
	\begin{aligned}
		\mathbb P(B_{k,p}\le x)
		&\le
		\mathbb E\exp\{-\lambda[(1-x)X-xY]\}=
		\{1+2\lambda(1-x)\}^{-k/2}
		\{1-2\lambda x\}^{-(p-k)/2}.
	\end{aligned}
	\]
	Let $r = k/p$. Optimizing over $\lambda$ gives
	$\mathbb P(B_{k,p}\le x)\le\exp\left\{-p/2D(r\|x)\right\}$,
	where
	\[
	D(r\|x)
	=
	r\log\frac{r}{x}
	+
	(1-r)\log\frac{1-r}{1-x}.
	\]
	Taking $x=b_1r$ gives
	\begin{align}\label{eq:lemm_b<}
		\mathbb P(B_{k,p}\le b_1r)
		\le
		\exp\left\{
		-\frac p2D(r\|b_1r)
		\right\}.
	\end{align}
	
	Similarly, for $x>r$ with $x<1$,
	$\{B_{k,p}\ge x\}=\{(1-x)X-xY\ge0\}$.
	For any $\lambda>0$ with $\lambda<1/\{2(1-x)\}$, we have
	\begin{align*}
		\mathbb P(B_{k,p}\ge x)\le
		\mathbb E\exp\{\lambda[(1-x)X-xY]\}=
		\{1-2\lambda(1-x)\}^{-k/2}
		\{1+2\lambda x\}^{-(p-k)/2}.
	\end{align*}
	Optimizing over $\lambda$ gives
	$\mathbb P(B_{k,p}\ge x)\le\exp\left\{-p/2D(r\|x)\right\}$.
	Taking $x=b_2r$, it follows that
	\begin{align}\label{eq:lemm_b>}
		\mathbb P(B_{k,p}\ge b_2r)
		\le
		\exp\left\{
		-\frac p2D(r\|b_2r)
		\right\}.
	\end{align}
	
	Since $0<b_1<1<b_2$ are fixed and $k\to\infty$, we have
	$pD(r\|b_1r)\to\infty$ and $pD(r\|b_2r)\to\infty$.
	With \eqref{eq:lemm_b<} and \eqref{eq:lemm_b>}, $\mathbb P(b_1r\le B_{k,p}\le b_2r)\to1$ holds.
	Consequently, combining this with \eqref{eq:lemm_delta}, we obtain
	\[
	\mathbb P\left(
	\frac{b_1 k}{p\lambda_{\max}(\boldsymbol\Sigma)}
	\|\boldsymbol\delta\|_2^2
	\le
	\Delta_{b}^2
	\le
	\frac{b_2 k}{p\lambda_{\min}(\boldsymbol\Sigma)}
	\|\boldsymbol\delta\|_2^2
	\right)
	\to1,
	\]
	which completes the proof.
\end{proof}

\subsection{Proof of Theorem~\ref{theo:power1}}

\begin{proof}
	By Lemma~\ref{lemm:delta_projection_bound}, for any fixed \(0<b_1<1\),
	\begin{align}\label{eq:theopower1}
		\mathbb{P} \bigg(\sqrt n\,\Delta_{b}^2
		\ge
		\frac{b_1}{\lambda_{\max}(\boldsymbol\Sigma)}
		\sqrt n\,\frac{k}{p}\|\boldsymbol\delta\|_2^2\bigg)\to 1.
	\end{align}
	Under Condition \ref{cond:c1} and Condition \ref{cond:alternative_relaxed}, we have \(k/n\to {\color{black}\pi_d}\in(0,1)\) and $p/n^{3/2}=o(\|\boldsymbol\delta\|_2^2)$. Hence, it follows that $\sqrt n\Delta_{b}^2\stackrel p\to\infty$.
	Let $\beta_\Delta(\Delta_{b}^2) = \mathbb{P}(p_b\le\alpha\mid \Delta_{b}^2)$ denote the power conditional on the projected signal $\Delta_{b}^2$, and then $\beta_\Delta(\Delta_{b}^2) = \beta(\mathbf{P}_{b},\boldsymbol{\delta})$.

	Firstly, we consider the weak projected signal $\underline{\Delta}^2 = \min \{b_1k\|\boldsymbol\delta\|_2^2/(p{\color{black}c_2}),n^{-1/4}\}$. 
	With \eqref{eq:theopower1}, $\mathbb{P}(\Delta_{b}^2\ge \underline{\Delta}^2)\to 1$ holds.
	Notice that since $\underline{\Delta}^2\to 0$, from {\color{black}Lemma} \ref{prop:power1}, we have
	\begin{align}\label{eq:theopower1_delta}
		\beta_\Delta(\underline{\Delta}^2)\to 1.
	\end{align}
	Since the conditional power function $\beta_\Delta(\underline{\Delta}^2)$ is monotone increasing in the parameter $\Delta_{b}^2$, for $\Delta_{b}^2 \ge \underline{\Delta}^2$, we have $\beta_\Delta(\Delta_{b}^2) \ge \beta_\Delta(\underline{\Delta}^2)$.
	Therefore, it follows that
	\begin{align*}
		\mathbb{P}(p_b\le\alpha )
		= \mathbb{E}_\Delta [\beta_{\Delta}(\Delta_{b}^2)] 
		\ge \mathbb{E}_\Delta [\beta_{\Delta}(\Delta_{b}^2)\mathbf{1}(\Delta_{b}^2\ge \underline{\Delta}^2)] 
		\ge  \beta_{\Delta}(\underline{\Delta}^2)\mathbb{P}(\Delta_{b}^2\ge \underline{\Delta}^2)).
	\end{align*}
	Notice that since
	$\mathbb{P}(\Delta_{b}^2\ge \underline{\Delta}^2) \to 1$, with \eqref{eq:theopower1_delta}, we obtain
	\begin{align*}
		\mathbb{P}(p_b\le\alpha )
		\ge  \beta_\Delta(\underline{\Delta}^2)\mathbb{P}(\Delta_{b}^2 \ge \underline{\Delta}^2) \ge \beta_\Delta(\underline{\Delta}^2)(1-o(1))
		\to 1.
	\end{align*}
	which completes the proof.
\end{proof}

{
\color{black}
\subsection{Proof of Proposition \ref{prop:power_comparison}}

\begin{proof}
By Condition~\ref{cond:conditional_power}, there exist constants
\(0<\eta<\alpha<1-\zeta<1\) such that
\[
\frac{\phi(\eta)+(B-1)\phi(1-\zeta)}{B}\ge t_\alpha
\quad {\rm and} \quad
q_\eta(\mathcal D)^B
+
Bq_{1-\zeta}(\mathcal D)
=
o_p\{q_\alpha(\mathcal D)\}.
\]
Since \(\phi(p)\) is strictly decreasing on \((0,1)\),
on the event
\[
\min_{1\le b\le B}p_b\le \eta
\quad\text{and}\quad
\max_{1\le b\le B}p_b\le 1-\zeta,
\]
we have
\[
T_{\rm CRPT}
=
\frac1B\sum_{b=1}^B\phi(p_b)
\ge
\frac{\phi(\eta)+(B-1)\phi(1-\zeta)}{B}
\ge
t_\alpha .
\]
Therefore,
\[
\{T_{\rm CRPT}< t_\alpha\}
\subset
\left\{\min_{1\le b\le B}p_b>\eta\right\}
\cup
\left\{\max_{1\le b\le B}p_b>1-\zeta\right\}.
\]
Taking conditional probabilities given \(\mathcal D\), we obtain
\begin{align*}
\mathbb P(T_{\rm CRPT}< t_\alpha\mid\mathcal D)
&\le
\mathbb P\left(\min_{1\le b\le B}p_b>\eta\mid\mathcal D\right)
+
\mathbb P\left(\max_{1\le b\le B}p_b>1-\zeta\mid\mathcal D\right).
\end{align*}
Conditional on \(\mathcal D\), the projected \(p\)-values are independent because the random projection matrices are independent. Hence,
\[
\mathbb P\left(\min_{1\le b\le B}p_b>\eta\mid\mathcal D\right)
=
q_\eta(\mathcal D)^B.
\]
Moreover, we notice that
\[
\mathbb P\left(\max_{1\le b\le B}p_b>1-\zeta\mid\mathcal D\right)
\le
Bq_{1-\zeta}(\mathcal D).
\]
Thus, we have
\[
\mathbb P(T_{\rm CRPT}< t_\alpha\mid\mathcal D)
\le
q_\eta(\mathcal D)^B+Bq_{1-\zeta}(\mathcal D).
\]
By Condition~\ref{cond:conditional_power},
\[
q_\eta(\mathcal D)^B+Bq_{1-\zeta}(\mathcal D)
=
o_p\{q_\alpha(\mathcal D)\}.
\]
Recall that $q_\alpha(\mathcal D)=\mathbb P(p_b>\alpha\mid\mathcal D)$,
we have
\[
\mathbb P(T_{\rm CRPT}< t_\alpha\mid\mathcal D)
=
o_p\left\{
\mathbb P(p_b>\alpha\mid\mathcal D)
\right\},
\]
which completes the proof.
\end{proof}
}

\subsection{Proof of Theorem~\ref{theo:power2}}

\begin{proof}
Recall that the Cauchy combination test statistic is $T_{\text{CRPT}} =\sum_{b=1}^B\phi(p_b)/B$. 
Note that $\phi(p)=\tan(\pi(0.5-p))$ is strictly decreasing on $(0,1)$, so that for  a fixed significance level $\alpha\in(0,1)$, we obtain $\{p_{b}\le\alpha\}=\{\phi(p_{b})\ge t_\alpha\}$.  
From Theorem \ref{theo:power1}, 
we have $\mathbb{P}(p_{b}\le\alpha)\to 1$ for each $b=1,\ldots,B$, which implies $\mathbb{P}(\phi(p_{b})\ge t_\alpha)\to 1$ as $n\to\infty$.
Since $B$ is fixed, it follows that
\begin{align*}
\mathbb{P}\bigg(\bigcap_{b=1}^B\{\phi(p_{b})\ge t_\alpha\}\bigg)\to 1.
\end{align*}
On the event $\mathcal{E} = \bigcap_{b=1}^B\{\phi(p_{b})\ge t_\alpha\}$, we have
\[
T_{\text{CRPT}} = \frac{1}{B}\sum_{b=1}^B\phi(p_{b}) \ge \frac{1}{B}\sum_{b=1}^B t_\alpha = t_\alpha.
\]
Hence
\[
\mathbb{P}(T_{\text{CRPT}}\ge t_\alpha) \ge \mathbb{P}\bigg(\bigcap_{b=1}^B\{\phi(p_{b})\ge t_\alpha\}\bigg) \to 1,
\]
which completes the proof.
\end{proof}

\subsection{Proof of Theorem~\ref{theo:pe}}

\subsubsection{Proof of Theorem~\ref{theo:pe}}

\begin{proof}
	Firstly, we define two events $\mathcal{E}_1$ and $\mathcal{E}_2$ as
	\begin{align*}
	&\mathcal{E}_1=\left\{\max _{1 \leq j \leq p}\left|\bar{x}_{ j}-\bar{y}_{j}-\delta_j\right| / s_{j j}^{1 / 2}<\delta_{n, p} / \sqrt{\left(n_1 n_2\right) / (n_1+n_2)}\right\},\\ &\mathcal{E}_2=\left\{\frac{4}{9} \leq s_{j j} / \sigma_{j j} \leq \frac{9}{4}, \forall j \in\{1, \ldots, p\}\right\} .
	\end{align*}
	For Lemma \ref{lemm:pe}, we have $\mathbb{P}(\mathcal{E}_1 \cap \mathcal{E}_2) \to 1$ 
	under Condition \ref{cond:normal}
	as $n,p\to\infty$ with $\log p=o(n)$.
	Under null hypothesis $H_0:\boldsymbol{\mu}_1= \boldsymbol{\mu}_2$, it holds that
	\begin{align*}
	\mathbb{P}\left(J_0=0 \mid H_0\right)=\mathbb{P}\left(\max _{1 \leq j \leq p}\left\{\left|\bar{x}_{j}-\bar{y}_{j}\right| / s_{j j}^{1 / 2}\right\}<\delta_{n, p} / \sqrt{\left(n_1 n_2 / (n_1+n_2)\right)} \mid H_0\right) \rightarrow 1.
	\end{align*}
	
	Since the alternative ${\color{black}H_1}$ satisfying $\max_{1\le j\le p}\{\left|\delta_j\right|/\sigma_{j j}^{1 / 2}\}>3  \delta_{n, p} / \sqrt{\left(n_1 n_2\right) / (n_1+n_2)}$ considered in Condition \ref{cond:alt_sparse}, there exists $1\le j^*\le p$ such that $\left|\delta_{j^*}\right|/\sigma_{{j^*} {j^*}}^{1 / 2}>3  \delta_{n, p} / \sqrt{\left(n_1 n_2\right) / (n_1+n_2)}$.
	Under event $\mathcal{E}_1 \cap \mathcal{E}_2$, we have 
	\begin{align*}
	\frac{\left|\bar{x}_{j^*}-\bar{y}_{j^*}\right|}{s_{{j^*} {j^*}}^{1 / 2}} &\geq \frac{\left|\delta_{j^*}\right|-\left|\bar{x}_{{j^*}}-\bar{y}_{{j^*}}-\delta_{j^*}\right|}{s_{{j^*} {j^*}}^{1 / 2}} \\
	&\geq \frac{2|\delta_{j^*}|}{3 \sigma_{{j^*} {j^*}}^{1 / 2}}-\delta_{n, p} / \sqrt{\left(n_1 n_2\right) / (n_1+n_2)}>\delta_{n, p} / \sqrt{\left(n_1 n_2\right) / (n_1+n_2)} .
	\end{align*}
	Therefore, $\mathbb{P}(J_0=\sqrt n\mid {\color{black}H_1})\to 1$ as $n,p\to\infty$ with $\log p=o(n)$,
	which completes the proof.
\end{proof}

\subsubsection{Technical Lemma for Theorem~\ref{theo:pe}}

\begin{lemm}\label{lemm:pe}
Let $\mathcal{V}$ be a parameter space for the mean vectors.
Under the Condition \ref{cond:normal}, as $n, p \to \infty$, for $\log p = o(n)$ and any constant $c_-<1<c_+$, we have
\begin{align}\label{eq:s/sigma}
\inf_{\boldsymbol{\mu}_1, \boldsymbol{\mu}_2 \in \mathcal{V}} \mathbb{P}\left( c_- < \frac{s_{jj}}{\sigma_{jj}} < c_+, \; \forall j \in \{1,\dots,p\} \;\Big|\; \boldsymbol{\mu}_1, \boldsymbol{\mu}_2 \right) \to 1,
\end{align}
where $s_{jj}$ and $\sigma_{jj}$ are the $j$-th diagonal element of $\mathbf{S}$ and $\boldsymbol{\Sigma}$, respectively.
Meanwhile, we have
\begin{align}\label{eq:x-y}
	\inf_{\boldsymbol{\mu}_1, \boldsymbol{\mu}_2 \in \mathcal{V}} \mathbb{P}\left( \max_{1 \le j \le p} \frac{|\bar{x}_j - \bar{y}_j - \delta_j|}{\sqrt{s_{jj}}} < \frac{\delta_{n,p}}{\sqrt{n_1 n_2 / (n_1+n_2)}} \;\Big|\; \boldsymbol{\mu}_1, \boldsymbol{\mu}_2 \right) \to 1,
\end{align}
where $\bar{x}_j$ and $\bar{y}_j$ are the $j$-th component of $\bar{\mathbf{x}}$ and $\bar{\mathbf{y}}$, respectively.
\end{lemm}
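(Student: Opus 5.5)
The lemma concerns diagonal entries and coordinatewise standardized mean differences. Under Condition~\ref{cond:normal}, both quantities have distributions free of $(\boldsymbol\mu_1,\boldsymbol\mu_2)$. Indeed, $n s_{jj}/\sigma_{jj}\sim\chi^2_n$, and $\sqrt{c}\,(\bar x_j-\bar y_j-\delta_j)/\sigma_{jj}^{1/2}\sim N(0,1)$ with $c=n_1n_2/(n_1+n_2)$. Moreover, $\mathbf S$ is independent of $\bar{\mathbf x}-\bar{\mathbf y}$, and neither distribution depends on the means. Hence the infima over $\mathcal V$ are trivial: it suffices to prove each probability tends to one at a single parameter value. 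The plan is a union bound over $j=1,\ldots,p$ combined with sharp marginal tail bounds, using only $\log p=o(n)$.

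For \eqref{eq:s/sigma}, I would apply the chi-square deviation bounds of \citet[Lemma 1]{2000Laurent} to each $ns_{jj}/\sigma_{jj}\sim\chi^2_n$:
\[
\mathbb P\big(\chi^2_n\ge n+2\sqrt{nu}+2u\big)\le e^{-u},\qquad \mathbb P\big(\chi^2_n\le n-2\sqrt{nu}\big)\le e^{-u}.
\]
Choose $u=\epsilon^2 n$ with $\epsilon>0$ small enough that $2\epsilon+2\epsilon^2<c_+-1$ and $2\epsilon<1-c_-$. Then $\mathbb P(s_{jj}/\sigma_{jj}\notin(c_-,c_+))\le 2e^{-\epsilon^2 n}$ for each $j$. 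The union bound gives the total bound $2p\,e^{-\epsilon^2 n}=2\exp\{\log p-\epsilon^2 n\}\to0$, since $\log p=o(n)$. No dependence structure across $j$ is needed.

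For \eqref{eq:x-y}, I would work on the event of the first part with $c_-=4/9$, so that $\sqrt{s_{jj}}\ge \tfrac23\sqrt{\sigma_{jj}}$ for all $j$. On this event,
\[
\max_j\frac{\sqrt c\,|\bar x_j-\bar y_j-\delta_j|}{\sqrt{s_{jj}}}\le \frac32\max_j |Z_j|,
\]
where $Z_j\sim N(0,1)$ are the standardized coordinates, possibly dependent. The union bound with the Gaussian tail gives
\[
\mathbb P\Big(\max_j|Z_j|\ge \tfrac23\delta_{n,p}\Big)\le 2p\exp\{-2\delta_{n,p}^2/9\}=2\exp\Big\{\log p\,\big(1-\tfrac29 c_0^2(\log\log n)^2\big)\Big\}.
\]
This tends to zero because $\log\log n\to\infty$ while $\log p\to\infty$ (the latter since $p\to\infty$). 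Combining this with the complement bound from the first part gives the claim.

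The only delicate point I expect is ensuring the union bound in the second part genuinely vanishes. The Gaussian threshold must exceed $\sqrt{2\log p}$ by a diverging factor, and the $\log\log n$ multiplier in $\delta_{n,p}$ supplies exactly this, for any $c_0>0$. A second point to handle is that $\mathbf S$ is random in the denominator; I would deal with it by intersecting with the $s_{jj}$ event from the first part rather than using a $t$-distribution argument. This also produces the events $\mathcal E_1,\mathcal E_2$ used in the proof of Theorem~\ref{theo:pe}.
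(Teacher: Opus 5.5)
Your proposal is correct and follows essentially the same route as the paper. The paper also uses the Laurent--Massart $\chi^2_n$ deviation bounds with a union bound over $j$ for the diagonal-ratio claim, and a Gaussian tail bound with a union bound for the max-deviation claim, taking the infimum over $\mathcal V$ for free because the relevant distributions do not depend on the means. Your one refinement is to intersect with $\{s_{jj}\ge \tfrac49\sigma_{jj}\ \forall j\}$ and run the union bound at the reduced threshold $\tfrac23\delta_{n,p}$, which the diverging $\log\log n$ factor absorbs. This makes rigorous the $\sigma_{jj}\to s_{jj}$ replacement that the paper only asserts via $\sqrt{s_{jj}}\asymp\sqrt{\sigma_{jj}}$; taken literally at threshold $\delta_{n,p}$, that step would need $s_{jj}\ge\sigma_{jj}$.
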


\begin{proof}
	We prove \eqref{eq:s/sigma} and \eqref{eq:x-y}  in order.
	
	\noindent
	\textbf{Proof of \eqref{eq:s/sigma}.}
	Under Condition \ref{cond:normal}, for each $j$,
	$ns_{jj}/\sigma_{jj}
	\sim \chi^2_{n}$.
	From concentration bounds for Gaussian quadratic forms \citep[Lemma 1]{2000Laurent}, for any $u_1,u_2>0$, we have 
	\begin{align*}
		&\mathbb{P}\big(n(s_{jj}/\sigma_{jj} - 1) \ge   2\sqrt{nu_1} + 2u_1 \big) \le \exp\{-u_1\} ,  \\
		&\mathbb{P}\big(n(1 - s_{jj}/\sigma_{jj}) \ge   2\sqrt{nu_2}\big) \le \exp\{-u_2\}.
	\end{align*}
	For any
	$\epsilon\in(0,1)$, taking $u_1 = n\epsilon^2/4$ and $u_2 = n\epsilon^2/8$, we have
	\[
	\mathbb P\!\left(
	\left|\frac{s_{jj}}{\sigma_{jj}}-1\right|>\epsilon
	\right)
	\le
	2\exp(-n\epsilon^2/8).
	\]
	It follows that 
	\[
	\mathbb P\left(
	\max_{1\le j\le p}
	\left|\frac{s_{jj}}{\sigma_{jj}}-1\right|>\epsilon
	\right)
	= 	\mathbb P\bigg(
	\bigcup_{j=1}^p
	\Big\{\Big|\frac{s_{jj}}{\sigma_{jj}}-1\Big|>\epsilon\Big\}
	\bigg)
	\le
	2p\exp(-n\epsilon^2/8).
	\]
	where the right-hand side converges to zero for $\log p=o(n)$. Hence, 
	for any constants $0<c_-<1<c_+<\infty$ and $\boldsymbol{\mu}_1, \boldsymbol{\mu}_2 \in \mathcal{V}$, we have
	\[
	\mathbb P\left(
	c_-<\frac{s_{jj}}{\sigma_{jj}}<c_+,\ \forall j
	\right)\to1,
	\]
	which completes the proof.

	\noindent
	\textbf{Proof of \eqref{eq:x-y}.}
	Under Condition \ref{cond:normal}, for each $j=1,\ldots,p$, let
	\[
	Z_j=
	\frac{\bar x_j-\bar y_j-\delta_j}
	{\sqrt{\sigma_{jj}(n_1^{-1}+n_2^{-1})}}
	\sim N(0,1).
	\]
	Then for any $t>0$, by Gaussian tail inequality,
	$\mathbb P(|Z_j|>t)\le 2e^{-t^2/2}$ holds.
	It follows that
	\[
	\mathbb P\Big(\max_{1\le j\le p}|Z_j|>t\Big)
	= \mathbb P\bigg(\bigcup_{j=1}^p\{|Z_j|>t\}\bigg)
	\le 2p e^{-t^2/2}.
	\]
	Take $t=c_0\log(\log n)\sqrt{\log p}$.
	Then
	\[
	\mathbb P\Big(\max_{1\le j\le p}|Z_j|>t\Big)
	\le 2\exp\!\left(\log p-\frac{c_0^2}{2}(\log\log n)^2\log p\right)
	\to0 \quad {\rm as\ } n,p\to\infty. 
	\]
	Since
	$\delta_{n,p}=c_0\log(\log n)\sqrt{\log p}$, for any $\boldsymbol{\mu}_1, \boldsymbol{\mu}_2 \in \mathcal{V}$, we obtain
	\[
	\mathbb{P}\bigg(\max_{1\le j\le p}
	\frac{|\bar x_j-\bar y_j-\delta_j|}
	{\sqrt{\sigma_{jj}}}
	<
	\frac{\delta_{n,p}}{\sqrt{n_1n_2/(n_1+n_2)}}\bigg)\to 1 \quad {\rm as\ } n,p\to\infty.
	\]
	
	It remains to replace $\sigma_{jj}$ by $s_{jj}$. Under
	\eqref{eq:s/sigma}, for $\log (p) = o(n)$ we have
	$\sqrt{s_{jj}}\asymp \sqrt{\sigma_{jj}}$
	uniformly in $j$, which yields \eqref{eq:x-y}.
	
\end{proof}

\subsection{Proof of Theorem~\ref{theo:tpe}}

\begin{proof}
	By Theorem \ref{theo:power1}, $\mathbb{P}({\color{black}T_{\mathrm{CRPT\text{-}PE}}} \ge t_{\alpha}) \ge \mathbb{P}(T_{\rm CRPT} \ge t_{\alpha}) \to 1$ holds under the alternative satisfying Condition~\ref{cond:alternative_relaxed}. Thus, it suffices to prove  $\mathbb{P}({\color{black}T_{\mathrm{CRPT\text{-}PE}}} \ge t_{\alpha}) \to 1$ under the alternative ${\color{black}H_1'}$ satisfying Condition~\ref{cond:alt_sparse}.
		
	Notice that
	\begin{align}
		\mathbb P({\color{black}T_{\mathrm{CRPT\text{-}PE}}}\ge t_\alpha\mid {\color{black}H_1'})
		&=
		\mathbb P(T_{\rm CRPT}+J_0\ge t_\alpha\mid {\color{black}H_1'})\notag\\
		&\ge
		\mathbb P(T_{\rm CRPT}+\sqrt n\ge t_\alpha,\ J_0=\sqrt n\mid {\color{black}H_1'})\notag\\
		&\ge
		1-
		\mathbb P(J_0\ne\sqrt n\mid {\color{black}H_1'})
		-
		\mathbb P(T_{\rm CRPT}< t_\alpha-\sqrt n\mid {\color{black}H_1'}).
	\end{align}
	By Theorem~\ref{theo:pe}, $\mathbb P(J_0\ne\sqrt n\mid {\color{black}H_1'}) = 1-\mathbb P(J_0=\sqrt n\mid {\color{black}H_1'})\to0$ holds.
	Thus, it remains to show that $\mathbb P(T_{\rm CRPT}< t_\alpha-\sqrt n\mid {\color{black}H_1'})\to0$.
	
	Recall that $U_b=\phi(p_b)=\tan\{\pi(0.5-p_b)\}$ and	$T_{\rm CRPT}=\sum_{b=1}^B U_b/B$.
	Conditional on the projection matrix $\mathbf{P}_{b}$, the projected statistic $T^2_{b}$ has a
	noncentral \(F\) distribution under the alternative. Since the noncentrality
	parameter $\gamma_k$ in \eqref{eq:gamma_k} is nonnegative, the corresponding \(p\)-value is stochastically no
	larger than a uniform random variable on \((0,1)\). Equivalently, \(U_b\) is
	stochastically no smaller than a standard Cauchy random variable \(J\). Therefore,
	for any \(x>0\),
	\begin{align}\label{eq:theopower_ub}
	\mathbb P(U_b< -x\mid {\color{black}H_1'})
	\le
	\mathbb P(J< -x).
	\end{align}
	For fixed \(B\), we have
	\begin{align}\label{eq:theopower_tub}
		\mathbb P(T_{\rm CRPT}< t_\alpha-\sqrt n\mid {\color{black}H_1'})
		&\le 
		\mathbb P\bigg(\bigcup_{b=1}^B \{U_b< t_\alpha-\sqrt n \} \mid {\color{black}H_1'}\bigg) \notag \\
		&\le
		\sum_{b=1}^B
		\mathbb P(U_b< t_\alpha-\sqrt n\mid {\color{black}H_1'}).
	\end{align}
	Since \(t_\alpha\) is fixed and \(t_\alpha-\sqrt n\to-\infty\), it follows that
	$\mathbb P(J< t_\alpha-\sqrt n)\to0$.
	Thus, with \eqref{eq:theopower_ub}-\eqref{eq:theopower_tub}, we have
	$\mathbb P(T_{\rm CRPT}< t_\alpha-\sqrt n\mid {\color{black}H_1'})\to0$, which completes the proof.
\end{proof}

\newpage

\clearpage

\end{document}